\definecolor{darkgreen}{rgb}{0,0.5,0}
\definecolor{light-gray}{gray}{0.9}
\colorlet{shadecolor}{light-gray}
\newcommand{\thickhline}{%
    \noalign {\ifnum 0=`}\fi \hrule height 1pt
    \futurelet \reserved@a \@xhline
}
\newcolumntype{"}{@{\hskip\tabcolsep\vrule width 1pt \hskip\tabcolsep}}
\newcolumntype{'}{@{\hskip\tabcolsep\vrule width 1pt}}
\newcolumntype{`}{@{\vrule width 1pt \hskip\tabcolsep}}
\newcommand{\plog}{\mathop\mathrm{polylog}}
\DeclareMathOperator*{\E}{\mathbb{E}}
\let\Pr\relax
\newcommand{\Pr}{\mathbb{P}}
\newcommand\Ex[1]{\mathbb{E}\left[\,#1\,\right]}
\newcommand{\Pro}[1]{\mathbb{P}\left[\,#1\,\right]}
\DeclareMathOperator{\tr}{tr}
\DeclareMathOperator{\sign}{sign}
\DeclareMathOperator{\orth}{orth}
\DeclareMathOperator{\chol}{chol}
\DeclareMathOperator{\gap}{gap}
\DeclareMathOperator{\bgap}{\ol{\gap}}
\DeclareMathOperator{\dist}{\mathcal{D}}
\newcommand{\euv}{\bv{e}_{u,v}}
\newcommand{\pairs}{\mathcal{P}}
\newcommand{\dw}{\mathcal{D}_{\bv{W}}}
\newcommand{\modelpq}{$(n, p,q)$-weighted communication model\xspace}
\newcommand{\modelpqshort}{$(n, p,q)$-weighted communication\xspace}
\newcommand{\modelg}{$G(n,p,q)$-communication model\xspace}
\newcommand{\modelgshort}{$G(n,p,q)$-communication\xspace}
\newcommand{\R}{\mathbb{R}}
\newcommand{\poly}{\mathop\mathrm{poly}}
\newif\ifArxiv
\newcommand{\nzero}{1}
\newcommand{\wh}{\widehat}
\newcommand{\ol}{\overline}
\newcommand{\eqdef}{\mathbin{\stackrel{\rm def}{=}}}
\newcommand{\norm}[1]{\|#1\|}
\newcommand{\bs}[1]{\boldsymbol{#1}}
\newcommand{\bv}[1]{\mathbf{#1}}
\newcommand{\algoname}[1]{\textnormal{\textsc{#1}}}
\DeclareMathOperator{\smin}{\sigma_{min}}
\DeclareMathOperator{\smax}{\sigma_{max}}
\newtheorem*{rep@theorem}{\rep@title}
\newcommand{\newreptheorem}[2]{%
\newenvironment{rep#1}[1]{%
 \def\rep@title{#2 \ref{##1}}%
 \begin{rep@theorem}}%
 {\end{rep@theorem}}}
\def\NewTheorem#1#2{%
  \newaliascnt{#1}{theorem}
  \newtheorem{#1}[#1]{#2}
  \aliascntresetthe{#1}
  \expandafter\def\csname #1autorefname\endcsname{#2}
}
 \newtheorem{theorem}{Theorem}[section]
\newcommand{\ripref}[1]{\hyperref[#1]{Algorithm}~\ref{#1}}
\newtheoremstyle{restate}{\topsep}{\topsep}{\itshape}{0pt}{\bfseries}{.}{5pt plus 1pt minus 1pt}{\thmname{#1}\thmnumber{ \begin{NoHyper}\ref{#3}\end{NoHyper}}}
\theoremstyle{restate}
\def\blackslug{\hbox{\hskip 1pt \vrule width 4pt height 8pt
    depth 1.5pt \hskip 1pt}}
\def\QED{\quad\blackslug\lower 8.5pt\null\par}
\providecommand{\keywords}[1]{\textbf{\textit{keywords---}} #1}
\providecommand{\funding}[1]{\textbf{\textit{funding---}} #1}
\title{Eigenvector Computation and Community Detection in Asynchronous Gossip Models}
\author[1]{\large Frederik Mallmann-Trenn}
\author[1]{\large Cameron Musco}
\author[1]{\large Christopher Musco}
\affil[1]{\large MIT CSAIL, Cambridge MA, US\\
  \texttt{\{mallmann, cnmusco, cpmusco\}@mit.edu}}
\begin{document}
\maketitle

\begin{abstract} 
We give a simple distributed algorithm for computing adjacency matrix eigenvectors for the communication graph in an asynchronous gossip model. We show how to use this algorithm to give state-of-the-art asynchronous community detection algorithms when the communication graph is drawn from the well-studied
\emph{stochastic block model}. Our methods also apply to a natural alternative model of randomized communication, where nodes within a community communicate more frequently than nodes in different communities. 

Our analysis simplifies and generalizes prior work by
forging a connection between asynchronous eigenvector computation and Oja's algorithm for streaming principal component analysis. We hope that our work serves as a starting point for building further connections between the analysis of stochastic iterative methods, like Oja's algorithm, and work on asynchronous and gossip-type algorithms for distributed computation.

\end{abstract}

\vfill

\keywords{block model, community detection, distributed clustering, eigenvector computation, gossip algorithms, population protocols}

\funding{This work
was supported in part by NSF Award Numbers BIO-1455983,
CCF-1461559,  CCF-0939370, and CCF-1565235. Cameron Musco was partially supported by an NSF graduate student fellowship.
}
\newpage

%
%



\section{Introduction}
\vspace{-2ex}
Motivated by the desire to process and analyze  increasingly  large networks---in particular social networks---considerable research has focused on finding efficient distributed protocols for problems like triangle counting, community detection, PageRank computation, and node centrality  estimation. 
Many  of the most popular systems for massive-scale graph processing, including Google's 
 Pregel \cite{pregel} and Apache Giraph \cite{gigraph} (used by Facebook), employ programming models based on the simulation of distributed message passing algorithms, in which each node is viewed as a processor that can send messages to its neighbors.
 
Apart from computational benefits, distributed graph processing can also be required when  privacy constraints apply: for example, EU regulations restrict the personal data  sent to countries outside of the EU \cite{eu}. Distributed algorithms avoid possibly problematic aggregation of network information, allowing each node to compute a local output based on their own neighborhood and messages received from their neighbors. 

One of the main problems of interest in network analysis is the computation of the eigenvectors of a networks' adjacency matrix (or related incidence matrices, such as the graph Laplacian). The extremal eigenvectors have many important applications---from graph partitioning and community detection \cite{ding2001min,MU05}, to embedding in graph-based machine learning \cite{belkin2002laplacian,ng2002spectral}, to measuring node centrality and computing importance scores like PageRank \cite{bonacich2007some}.

Due to their importance, there has been significant work on distributed eigenvector approximation. In \emph{synchronous} message  passing systems, it is possible to simulate the well-known power method for iterative eigenvector approximation \cite{kempe2008decentralized}. However, this algorithm requires that each node communicates synchronously with all of its neighbors in each round .

In an attempt to relax this requirement, models in which a subset of neighbors are sampled in each communication round \cite{korada2011gossip} have been studied.
However, the computation of graph eigenvectors in fully asynchronous and gossip-based message passing systems, in which nodes communicate with a single neighbor at a time in an asynchronous fashion, is not well-understood.
While a number of algorithms have been proposed, which give convergence to the true eigenvectors as the number of iterations goes to infinity, strong finite iteration approximation bounds are not known \cite{ghadban2015gossip,morral2012asynchronous}.
\vspace{-2.5ex}
\paragraph{Our contributions}
\vspace{-1ex}
In this work, we give state-of-the-art algorithms for graph eigenvector computation in asynchronous systems with randomized schedulers, including the classic gossip model \cite{boyd2006randomized,dimakis2010gossip} and population protocol model \cite{aspnes2009introduction}. We show that in these models, communication graph eigenvectors can be computed via a very simple adaption of Oja's classic iterative algorithm for principal components analysis \cite{Oja1982}. Our analysis leverages recent work studing Oja's algorithm for streaming covariance matrix eigenvector estimation \cite{allen2016first,aaronsPaper}.

By  making an explicit connection between work on streaming eigenvector estimation and asynchronous computation, we hope to generally expand the toolkit of techniques that can be applied to analyzing graph algorithms in asynchronous systems. 

As a motivating application, we  use our results to give state-of-the-art distributed community detection protocols, significantly improving upon prior work for the well-studied  stochastic-block model and related models where nodes communicate more frequently within their community than outside of it. We summarize our results below.
%

%

\medskip
\noindent\textbf{Asynchronous eigenvector computation.} First, we provide an algorithm (\ripref{alg:ojaGossip}) that 
approximates the $k$ largest eigenvectors $\bv{v}_1,...,\bv{v}_k$ for an arbitrary communication matrix (essentially a normalized adjacency matrix, defined formally in \autoref{def:weightedPopProtocol}). 

For an $n$-node network,
the algorithm ensures, with good probability, that each node $u \in [n]$ computes the $u^{th}$ entries of vectors $\bv{ \tilde v}_1,...,\bv{\tilde v}_k$ such that for all $i \in [k]$, $\norm{\bv{\tilde v}_i-\bv{v}_i}_2^2 \leq \epsilon$.
Each message sent by the algorithm requires communicating just $O(k)$ numbers, and the global time complexity is $\tilde O( \frac{\Lambda k^3}{\gap \cdot \min(\gap, \gamma_{mix}) \epsilon^3}  )$ local rounds, where $\gap$ is  the minimal gap between the $k$ largest eigenvalues, 
$\gamma_{mix}$ is roughly speaking the spectral gap, i.e., the difference between the largest and second-largest eigenvalue, and $\Lambda$ is the sum of the $k$ largest eigenvalues. We note that we use $\tilde O(\cdot )$ to suppress  logarithmic terms, and in particular, factors of $\poly \log n$.
See \autoref{cor:ojaAsync2} for a more precise statement. 

For illustration, consider a communication graph generated via the stochastic block model -- $G(n,p,q)$, which has $n$ nodes, partitioned into two equal-sized clusters. Each intracluster edge added independently with probability $p$ and each intercluster edge is added with probability  $q < p$. If, for example, $p = \Omega \left(\frac{\log n}{n}\right)$ and $q = p/2$, and $k = 2$, we can bound with high probability $\Lambda = \Theta(1/n)$, $\gap = \Theta(1/n)$, and $\gamma_{mix} = \Theta(1/n)$, which yields an eigenvector approximation algorithm running in $\tilde O( \frac{n}{\epsilon^3}  )$ global rounds, or 
  $\tilde O( \frac{1}{\epsilon^3}  )$ local rounds.

\medskip
\noindent\textbf{Approximate community detection.}
Second, we harness our eigenvector approximation routine for community detection in the stochastic block model with connection probabilities $p,q$ (we give two natural definitions of this model in an asynchronous distributed system with a random scheduler; see Definitions \ref{def:bpp} and \ref{def:sbm}).
After executing our protocol (\ripref{alg:commDetec}), with good probability, all but an $\epsilon$ fraction of the nodes output a correct community label in
$\tilde O \left (1/\epsilon^3  \rho^2 \right)$ local rounds, where $\rho = \min \left (\frac{q}{p+q}, \frac{p-q}{p+q} \right )$.
For example, when $q= p/2$, this complexity is $\tilde O \left (1/\epsilon^3 \right)$.
 See \autoref{thm:modelp} and \autoref{thm:modelg} for precise bounds.
 
 \medskip
\noindent\textbf{Exact community detection.}
 Finally, we show how to produce an exact community labeling, via a simple gossip-based error correction scheme.
 For ease of presentation, here we just state  our results in the case when  $q= p/2$ and we refer to 
  \autoref{sec:cleanup} (Theorems \ref{thm:cleanuppq} and \ref{thm:cleanupg}) for general results. Starting from an approximate labeling in which only a small constant fraction of the nodes are incorrectly labeled, we show that, with high probability, after  $O(\log n)$ local rounds, all nodes are labeled correctly.


\vspace{-2ex}
\paragraph{Related work}
\vspace{-1ex}
Community detection via graph eigenvector computation and other spectral methods has received ample attention in  centralized setting \cite{McS01,ColeFR15,vu14simple}. Such methods are known to recover communities in the stochastic block model close to the information theoretic limit. Interestingly, many state-of-the-art community detection algorithms in this model, which improve upon spectral techniques, are based on message passing (belief propagation) algorithms \cite{decelle2011asymptotic,mossel2014belief}. However, these algorithms are not known to work in asynchronous contexts.


Community detection in asynchronous distributed systems has received less attention. It has recently been tackled in a beautiful paper by Becchetti et al. \cite{becchetti2017friend}. The algorithm studied in this paper is a very simple averaging protocol, originally considered by the authors in a synchronous setting \cite{becchetti2017find}. 
Each node starts with a random value chosen uniformly in $\{-1,1\}$.
Each time two nodes communicate, they update their values to the average of their previous values. After each round of communication, a node's estimated community is given by the \emph{sign} of the change of its value due to the averaging update in that round. 

Beccheti et al. analyze their algorithm for \emph{regular} clustered graphs, including regular stochastic block model graphs, where all nodes have exactly $a$ edges to (randomly selected) nodes in their cluster and exactly  $b < a$ edges to nodes outside their cluster. As discussed in \cite{becchetti2017friend}, for regular graphs their protocol can be viewed as estimating the sign of entries in the second largest adjacency matrix eigenvector. Thus, it has close connections with our protocols, which explicitly  estimate  this eigenvector and label communitues using the signs of its entries.


The results of Becchetti et al. apply with $O(\plog n)$ local rounds of communication when either $\frac{a}{b} = \Omega(\log^2 n)$, or when $a-b = \Omega(\sqrt{a+b})$. In contrast, our results for the (non-regular) stochastic block model give $O(\plog n)$ local runtime when $\frac{p}{q} = \Omega(1)$ or $n(p-q) = \Omega( \sqrt{n(p+q) \log n})$. Here we assume that $q$ is not too small -- see \autoref{thm:modelg} for details. Note that $n\cdot p$ and $n\cdot q$ can be compared to $a$ and $b$, since they  are the expected number of intra- and inter-cluster edges respectively. Thus, our results give comparable bounds, tightening those of Becchetti et al. in some regimes and holding in the most commonly studied family of stochastic block model graphs, without any assumption of regularity\footnote{We note that the analysis of Bechitti et al. seems likely to extend to our alternative communication model (\autoref{def:bpp}), where the communication graph is weighted and regular}.




Outside of community detection, our approach to asynchronous eigenvector approximation is related to work on asynchronous distributed stochastic optimization \cite{tsitsiklis1986distributed,de2015taming,recht2011hogwild}. Often, it is assumed that many processors update some decision variable in parallel. If these updates are sufficiently sparse, overwrites are rare and the algorithm converges as if it were run in a synchronous manner. Our implementation of Oja's algorithm falls under this paradigm. Each update to our eigenvector estimates is sparse -- requiring a modification just by the two nodes that communicate at a given time. In this way, we can fully parallelize the algorithm, even in an asynchronous system.

\section{Preliminaries}

\subsection{Notation}
\label{sec:notation}
For integer $n>0$, let $[n] \eqdef \{1,\ldots, n\}$. Let $\bv{1}_{n,m}$ be an $n \times m$ all-ones matrix  and $\bv{I}_{n\times n}$ be an $n\times n$ identity. Let $\bv{e}_i$ be the $i^\text{th}$ standard basis vector, with length apparent from context. 
Let $V$ denote a set of nodes with cardinality $|V| = n$. 
Let $\pairs$ be the set of all unordered node pairs $(u,v)$ with $u \neq v$. $|\pairs| = {n\choose 2}$. 

For vector $\bv{x} \in \R^n$, $\norm{\bv{x}}_2$ is the Euclidean norm. For matrix $\bv{M} \in \R^{n \times m}$, $\norm{\bv{M}}_2 = \max_{\bv{x}} \frac{\norm{\bv{M}\bv{x}}_2}{\norm{\bv{x}}_2}$ is the spectral norm. $\norm{\bv{M}}_F = \sqrt{\sum_{i=1}^n \sum_{j=1}^m \bv{M}_{i,j}^2}$ denotes the Frobenius norm. $\bv{M}^T$ is the matrix transpose of $\bv{M}$. When $\bv{M} \in \R^{n \times n}$ is symmetric we let $\lambda_1(\bv{M}) \ge \lambda_2(\bv{M}) \ge ... \ge \lambda_n(\bv{M})$ denote its eigenvalues. $\bv{M}$ is positive semidefinite (PSD) if $\lambda_i(\bv{M}) \ge 0$ for all $i$. For symmetric $\bv{M},\bv{N}\in\R^{n\times n}$ we use $\bv{M} \preceq \bv{N}$ to indicate that $\bv{N}-\bv{M}$ is PSD.
\vspace{-2ex}
\subsection{Computational model}
\vspace{-1ex}
\label{sec:comp_model}
We define an asynchronous distributed computation model that encompasses both the well-studied population protocol \cite{aspnes2009introduction} and asynchronous gossip models \cite{boyd2006randomized}. Computation proceeds in rounds and a random scheduler chooses a single pair of nodes to communicate in each round. The choice is independent across rounds, but may be nonuniform across node pairs.

\begin{definition}[Asynchronous communication model]\label{def:weightedPopProtocol}
Let $V$ be a set of nodes with $|V| = n$. Computation proceeds in rounds, with every node $v \in V$ having some state $s(v,t)$ in round $t$.

Recall that $\pairs$ denotes all unordered pairs of nodes in $V$. Let $w: \pairs \rightarrow \mathbb{R}^+$ be a nonnegative weight function. 
In each round, a random scheduler chooses exactly one $(u,v) \in \pairs$ with probability $w(u,v)/\left[\sum_{(i,j) \in \pairs} w(i,j)\right]$ and $u,v$ both update their states according to some common (possibly randomized) transition function $\sigma$. Specifically, they set $s(v,t+1) = \sigma(s(v,t),s(u,t))$ and $s(u,t+1) = \sigma(s(u,t),s(v,t))$. 
\end{definition}

Note that in our analysis we often identify the weight function $w$ with a symmetric weight matrix $\bv{W} \in \mathbb{R}^{n \times n}$ where $\bv{W}_{u,u} = 0$ and $\bv{W}_{u,v} = \bv{W}_{v,u} = w(u,v)/\left[\sum_{(i,j) \in \pairs} w(i,j)\right]$. Let $\bv{D}$ be a diagonal matrix with $\bv{D}_{u,u} = \sum_{v \in V} \bv{W}_{u,v}$. $\bv{D}_{u,u}$ is the probability that node $u$ communicates in any given round. Since two nodes are chosen in each round,  $\sum_u \bv{D}_{u,u} = 2$. We will refer to $\bv{D} + \bv{W}$ as the \emph{communication matrix} of the communication model.

%

\begin{remark}[Asynchronous algorithms]
Since the transition function $\sigma$ in \autoref{def:weightedPopProtocol} is universal, nodes can be seen as identical processes, with no knowledge of $w$ or unique ids. We do assume that nodes can initiate and terminate a protocol synchronously. That is, nodes interact from round $0$ up to some round $T$, after which they cease  to interact, or begin a new protocol. This assumption is satisfied if each node has knowledge of the global round number but, in general, is much weaker. For example, in the asynchronous gossip model discussed below, it is sufficient for nodes to have access to a synchronized clock.


We use \emph{algorithm} to refer to a sequence of transition functions, each corresponding to a subroutine run for specified number of rounds.  Subroutines are run sequentially. The first has input nodes with identical starting states (as prescribed by \autoref{def:weightedPopProtocol}) but later subroutines start once nodes have updated their states and thus have distinguished inputs.

\end{remark}

\begin{remark}[Simulation of existing models]
The standard \emph{population protocol model} \cite{aspnes2009introduction} is recovered from \autoref{def:weightedPopProtocol} by setting $w(u,v) = 1$ for all $(u,v)$ -- i.e., pairs of nodes communicate uniformly at random. A similar model over a fixed communication graph $G = (E,V)$ is recovered by setting $w(u,v)=1$ for all $(u,v) \in E$ and $w(u,v) = 0$ for $(u,v) \notin E$.

\autoref{def:weightedPopProtocol} also encompasses the \emph{asynchronous gossip model} \cite{boyd2006randomized,dimakis2010gossip}, where each node holds an independent Poisson clock and contacts a random neighbor when the clock ticks. If we identify rounds with clock ticks, let $\lambda_u$ be the rate of node $u$'s clock, and let $p(u,v)$ be the probability that $u$ contacts $v$ when its clock ticks. Then the probability that nodes $u$ and $v$ interact in a given round is $\frac{1}{2}\left[\frac{\lambda_u}{\sum_{z \in V} \lambda_z} \cdot p(u,v) + \frac{\lambda_v}{\sum_{z \in V} \lambda_z} \cdot p(v,u)\right]$. With $w(u,v)$ set to this value, \autoref{def:weightedPopProtocol} corresponds exactly to the asynchronous gossip model.
\end{remark}

\subsection{Distributed community detection problem}

This paper studies the very general problem of computing communication matrix eigenvectors with asynchronous protocols run by the nodes in $V$. One primary application of computing eigenvectors is to detect community structure in $G$. Below we formalize this application as the \emph{distributed community detection problem} and introduce two specific cases of interest.

In the distributed community detection problem, the weight function $w$ and corresponding weight matrix $\bv{W}$ of \autoref{def:weightedPopProtocol} are clustered: nodes in the same cluster are more likely to communicate than nodes in different clusters. The goal is for each node to independently identify what cluster it belongs to (up to a permutation of the cluster labels). 

We consider two models of clustering. In the first \emph{\modelpq}, the weight function directly reflects the increased likelihood of intracluster communication. In the second, \emph{\modelg}, weights are uniform on a graph sampled from the well-studied planted-partition or \emph{stochastic block model} \cite{holland1983stochastic}.
For simplicity, we focus on the setting in which there are two equal sized clusters, but believe that our techniques can be extended to handle a larger number of clusters, potentially with unbalanced sizes.

\begin{definition}[\modelpq]
\label{def:bpp}
An asynchronous model (\autoref{def:weightedPopProtocol}), where node set $V$ is partitioned into disjoint sets $V_1,V_2$ with $|V_1| = |V_2| = n/2$. For values $q < p$, $w(u,v) = p$ if $u,v \in V_i$ for some $i$ and $w(u,v) = q$ if $u \in V_i$ and $v \in V_j$ for $i \neq j$.
\end{definition}

%
%
\begin{definition}[\modelg]\label{def:sbm} An asynchronous model (\autoref{def:weightedPopProtocol}), where node set $V$ is partitioned into disjoint sets $V_1,V_2$ with $|V_1| = |V_2| = n/2$. The weight matrix $\bv{W}$ is a normalized adjacency matrix of a random graph $G(V,E)$ generated as follows:
 for each pair of nodes $u,v \in V$, add edge $(u,v)$ to edge set $E$ with probability $p$ if $u$ and $v$ are in the same partition $V_i$ and probability $q < p$ if $u$ and $v$ are in different partitions. 
\end{definition}
Analysis of community detection in the \modelpq\ is more elegant, and will form the basis of our analysis for the \modelg, which more closely matches models considered in prior work on in both distributed and centralized settings. Formally, 
we define the distributed community detection problem as follows:
\begin{definition}[Distributed community detection problem]\label{def:cc}
An algorithm executing in the communication models of  \autoref{def:bpp} and \autoref{def:sbm} solves community detection in $T$ rounds if for every $t \ge T$, 
all nodes in $V_1$ hold some integer state $s_1 \in \{-1,1\}$, while all nodes in $V_2$ hold state $s_2 = -s_1$.
An algorithm solves the community detection problem in $L$ \emph{local rounds} if every node's state remains fixed after $L$ local interactions with other nodes.
\end{definition}

\section{Asynchronous Oja's algorithm}\label{sec:sec3}
Our main contribution is a distributed algorithm for computing eigenvectors of the communication matrix $\bv{D} + \bv{W}$. These eigenvectors can be used to solve the distributed community detection problem or in other applications. Our main algorithm is a distributed, asynchronous adaptation of Oja's classic iterative eigenvector algorithm \cite{Oja1982}, described below:
\begin{algorithm}[H]
\caption{\algoname{Oja's method (centralized)}}
{\bf Input}: $\bv{x}_0,...,\bv{x}_{T-1} \in \mathbb{R}^{n}$ drawn i.i.d. from some distribution $\dist$ such that for some constant $C$, $\Pr_{\bv{x} \sim \dist}[\norm{\bv{x}}_2^2 \le C] = 1$ and $\E_{\bv{x} \sim \dist}[\bv{x} \bv{x}^T] = \bv{M}$. Rank parameter $k$ and step size $\eta$.\\
{\bf Output}: Orthonormal $\bv{\tilde V} \in \mathbb{R}^{n \times k}$ whose columns approximate $\bv{M}$'s $k$ top eigenvectors.
\begin{algorithmic}[1]
\State{Choose $\bv{Q}_0$ with entries drawn i.i.d. from the standard normal distribution $\mathcal{N}(0,1)$.}
\For{$t = 0,....,T-1$}
\State{$\bv{Q}_{t+1} := (\bv{I} + \eta \bv{x}_t\bv{x}_t^T)\bv{Q}_t$.}
\EndFor\\
\Return{$\bv{\tilde V}_T := \orth(\bv{Q}_T)$.}
\Comment{\textcolor{blue}{Orthonormalizes the columns of $\bv{Q}_T$.}}
\end{algorithmic}
\label{alg:oja}
\end{algorithm}

\ifArxiv
Note that in \ripref{alg:oja}, the step size $\eta$ remains fixed throughout the algorithm. It is possible to achieve improved runtime bounds if $\eta$ is set differently in different rounds (see \cite{allen2016first}).
However, for the sake of simplicity, in a distributed setting in which updates are performed asynchronously, we consider a fixed step size. This only affects our runtime bounds for community detection by constant factors. 
We note that the orthonormalization step (Step 5) is often performed at each iteration, after $\bv{Q}_{t+1}$ is computed. However, as shown in \cite{allen2016first}, orthogonalizing $\bv{Q}_T$ at the end of the iterations suffices.
\fi 
\vspace{-2ex}
\subsection{Approximation bounds for Oja's method}
A number of recent papers have provided strong convergence bounds for the \emph{centralized} version of Oja's method \cite{allen2016first,aaronsPaper}. We will rely on the following theorem, which we prove in \autoref{app:oja} using a straightforward application of the arguments in \cite{allen2016first}.
\begin{theorem}\label{thm:oja}
Let $\bv{M} \in\mathbb{R}^{n \times n}$ be a PSD matrix with $\frac{\sum_{i=1}^k \lambda_i(\bv{M})}{C} \le \Lambda$ and $\frac{\lambda_k(\bv{M}) - \lambda_{k+1}(\bv{M})}{C} \ge \gap$ for some values $\Lambda,\gap$. For any $\epsilon,\delta \in (0,1)$, let $\xi = \frac{n}{\delta \epsilon \cdot \gap}$, $\eta = \frac{c_1\epsilon^2 \cdot \gap \cdot \delta^2}{C\Lambda k \log^3 \xi}$ for some sufficiently small constant $c_1,$ and $T = \frac{c_2\cdot(\log \xi+1/\epsilon)}{C\cdot\gap \cdot \eta}$ 
for sufficiently large $c_2$. 
\ifArxiv{
Then with probability $\ge 1 -\delta$, \ripref{alg:oja} run with step size $\eta$ satisfies,  
for all $t \in [T-\frac{c_2/\epsilon}{2C\gap \cdot \eta},T+\frac{c_2/\epsilon}{2C\gap \cdot \eta}]$, letting $\bv{\tilde V}_t = \orth(\bv{Q}_t)$:
\else
Then with probability $\ge 1 -\delta$, \ripref{alg:oja} run with step size $\eta$ returns $\bv{\tilde V}_T$ satisfying, 
\fi
 $$\norm{\bv{Z}^T \bv{\tilde V}_T}_F^2 \le \epsilon.$$
 where $\bv{Z}$ is an orthonormal basis for the bottom $n-k$ eigenvectors of $\bv{M}$.
\end{theorem}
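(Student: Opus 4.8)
The plan is to obtain Theorem~\ref{thm:oja} essentially as a corollary of the finite-sample analysis of Oja's algorithm in \cite{allen2016first}, after suitably normalizing the problem so that the boundedness and step-size hypotheses of that work match the scaling conventions used here (where $\Lambda$, $\gap$, and the eigenvalue gap are all divided by the boundedness constant $C$). First I would reduce to the case $C = 1$: set $\bv{x}_t' = \bv{x}_t/\sqrt{C}$, so that $\norm{\bv{x}_t'}_2^2 \le 1$ almost surely and $\E[\bv{x}_t'(\bv{x}_t')^T] = \bv{M}/C =: \bv{M}'$. The matrix $\bv{M}'$ then has $\sum_{i=1}^k \lambda_i(\bv{M}') \le \Lambda$ and $\lambda_k(\bv{M}') - \lambda_{k+1}(\bv{M}') \ge \gap$. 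Running \ripref{alg:oja} on the $\bv{x}_t$ with step size $\eta$ produces exactly the same iterates $\bv{Q}_t$ as running it on the $\bv{x}_t'$ with step size $C\eta$, since $\eta \bv{x}_t\bv{x}_t^T = (C\eta)\bv{x}_t'(\bv{x}_t')^T$; hence it suffices to prove the theorem for $C=1$, with $\eta$ there playing the role of $C\eta$ here, which is consistent with the stated choices of $\eta$ and $T$ (note $C\cdot\gap\cdot\eta = \gap\cdot(C\eta)$, so $T$ depends only on the normalized quantities).

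Next I would invoke the main convergence theorem of \cite{allen2016first} for the rank-$k$ version of Oja's method with a fixed step size. That result gives, for an appropriate choice of $\eta$ of order $\frac{\epsilon^2 \gap \delta^2}{\Lambda k \plog(\cdot)}$ and $T$ of order $\frac{\log\xi + 1/\epsilon}{\gap\eta}$, that with probability at least $1-\delta$ the Oja iterate $\bv{Q}_T$, once orthonormalized to $\bv{\tilde V}_T = \orth(\bv{Q}_T)$, satisfies $\norm{\bv{Z}^T\bv{\tilde V}_T}_F^2 \le \epsilon$, where $\bv{Z}$ spans the bottom $n-k$ eigenvectors of $\bv{M}'$ (equivalently of $\bv{M}$). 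The quantity $\xi = \frac{n}{\delta\epsilon\gap}$ is the natural ``failure-probability/accuracy/dimension'' parameter appearing in their bound; I would check that the polylogarithmic factors in \cite{allen2016first} are all $O(\log^3\xi)$ under our parameter regime, which is what licenses the $\log^3\xi$ in the denominator of $\eta$. The constants $c_1$ (small) and $c_2$ (large) absorb the universal constants from that analysis. In the arXiv version I would additionally note that the argument of \cite{allen2016first} in fact controls $\bv{\tilde V}_t$ for every $t$ in the stated window around $T$, not just at $T$ itself, since the potential-function bound they track is valid throughout the final phase of the iteration; this is exactly the ``all $t \in [T - \tfrac{c_2/\epsilon}{2\gap\eta}, T + \tfrac{c_2/\epsilon}{2\gap\eta}]$'' statement, and follows because increasing $T$ only runs more well-behaved iterations of the same process.

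The main obstacle is bookkeeping rather than conceptual: one must verify that the hypotheses of the cited theorem are met under the present normalization and that the translation between their parameterization (which typically involves $\lambda_1$, the gap $\lambda_k - \lambda_{k+1}$, and a variance/trace proxy) and ours (which uses the upper bound $\Lambda$ on $\sum_{i\le k}\lambda_i$ and the lower bound $\gap$ on the gap, both scaled by $C$) introduces only constant-factor and polylogarithmic changes that are safely absorbed into $c_1, c_2$. In particular I would take care that using an \emph{upper} bound $\Lambda \ge \sum_{i=1}^k \lambda_i(\bv M)/C \ge \lambda_1(\bv M)/C$ in place of the true top eigenvalue, and a \emph{lower} bound on the gap, only makes the step size smaller and the iteration count larger, so the conclusion is preserved. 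No genuinely new probabilistic argument is needed; the content is in \cite{allen2016first}, and this theorem is the convenient repackaging of it that the rest of the paper consumes.
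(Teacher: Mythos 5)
Your reduction to $C=1$ by rescaling $\bv{x}_t' = \bv{x}_t/\sqrt{C}$ and absorbing $C$ into the step size is exactly how the paper handles the boundedness constant, and your observation that $\Lambda$ and $\gap$ need only be an upper/lower bound is also in the spirit of the paper's remark (the paper justifies it by noting where these quantities enter the proofs in \cite{allen2016first}, namely as the parameter $\rho$ in their Theorem 2 and as an upper bound in their Lemma iii.K.1, rather than by a generic monotonicity argument). However, there is a genuine gap at the center of your proposal: you invoke ``the main convergence theorem of \cite{allen2016first} for the rank-$k$ version of Oja's method with a fixed step size'' as a black box. No such off-the-shelf statement exists in that paper --- their Theorem 1 is proved for a step size that changes across rounds --- and the paper explicitly flags this: the bulk of its proof of \autoref{thm:oja} consists of re-verifying the internal conditions of their analysis (the hypotheses of their Lemmas Main 4, Main 5, and Main 6, i.e.\ the constraints relating $\eta$ to $\frac{\gap}{\Lambda \Xi_x^2 \ln(nT/q)}$, the condition $\sum_t \Lambda\eta^2\Xi_x^2 = O(1/\ln\frac{nT}{q})$, the existence of a warm-up horizon $T_0$ with $\sum_{t\le T_0}\eta = \Omega(\ln\Xi_{\bv Z}/\gap)$, and the Lemma Main 5 conditions with the choice $\Delta = \Theta(\gap\,\epsilon/\sqrt{\Lambda})$) under a \emph{constant} $\eta$, together with the epoch decomposition $T = T_0 + T_1 + T_2$ with $T_2 = \Theta(T_1/\epsilon)$. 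Asserting the conclusion ``for an appropriate choice of $\eta$ and $T$'' without this verification is precisely the step that cannot be cited and must be proved.

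Relatedly, your justification of the guarantee over the whole window of iterations (relevant for the arXiv-version statement) --- ``increasing $T$ only runs more well-behaved iterations'' --- is both hand-wavy and incomplete: the window extends to $t < T$ as well as $t > T$, and with a fixed step size one does not get monotone improvement for free. The paper instead extracts the window from Lemma Main 5 of \cite{allen2016first}, which bounds $\norm{\bv{Z}^T\bv{\tilde V}_t}_F^2$ for all $t \in [T_0+T_1, T]$ by a ratio of the form $\frac{5T_1/\ln^2 T_1}{(t-T_0)/\ln^2(t-T_0)}$, and then shifts $T$ by a constant factor so that the error is at most $\epsilon$ throughout $[T - \Theta(T_2), T + \Theta(T_2)]$. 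So while your high-level plan (rescale, treat $\Lambda,\gap$ as bounds, lean on \cite{allen2016first}) matches the paper's, the proposal as written omits the technical content that makes the theorem true with a fixed step size.
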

If $\bv{\tilde V}_T$ exactly spanned $\bv{M}$'s top $k$ eigenvectors, $\norm{\bv{Z}^T \bv{\tilde V}_T}_F^2$ would equal 0. To obtain an approximation of $\epsilon$, the number of iterations required by Oja's method naturally depends inversely on $\epsilon$, the failure probability $\delta$, and the gap between eigenvalues $\lambda_k(\bv{M})$ and $\lambda_{k+1}(\bv{M})$.

\ifArxiv
In particular, setting $t = T$ in \autoref{thm:oja} implies that  the output $\bv{\tilde V}$ of \ripref{alg:oja} satisfies $\norm{\bv{Z}^T \bv{\tilde V}}_F^2 \le \epsilon.$ However, we state the more general guarantee above since when implemented in some asynchronous settings, the number of rounds before termination may vary about the target $T$. For example, nodes may  all terminate at  a specific time, once $T$ communication rounds have occurred \emph{in expectation}, but not after a precise number of rounds.
\fi

\subsection{Distributed Oja's method via random edge sampling}
\label{dist_ojas}

Oja's method can be implemented in the asynchronous communication model (\autoref{def:weightedPopProtocol}) to compute top eigenvectors of the communication matrix $\bv{D}+\bv{W}$, defined in \autoref{sec:comp_model}.

For any pair of nodes $(u,v)$, let $\euv = \bv{e}_u + \bv{e}_v$ be the vector with all zero entries except $1$'s in its $u^\text{th}$ and $v^\text{th}$ positions. Given weight function $w$ and associated matrix $\bv{W}$, let $\dw$ be the distribution in which each $\euv$ is selected with probability $\bv{W}_{u,v}$. That is, the same distribution by which edges are selected to be active by the scheduler in \autoref{def:weightedPopProtocol}. Noting that $\euv \euv^T$ is all zero except at its $(u,u)$, $(v,v)$, $(u,v)$, and $(v,u)$ entries, we can see that
\begin{align}
\E_{\euv \sim \dw} \left [\euv \euv^T \right ] &= \sum_{(u,v)\in \mathcal{P}} \bv{W}_{u,v} \cdot \euv \euv^T
= \bv{D} + \bv{W}\label{eq:basicExp},
\end{align}
 where $\pairs$ denotes the set of unordered node pairs $(u,v)$ with $u\neq v$.
So if we run Oja's algorithm with $\euv$ sampled according to $\dw$, we will obtain an approximation to the top eigenvectors of $\bv{D} + \bv{W}$. Note that this matrix is PSD, by the fact that each $\euv \euv^T$ is PSD.

Furthermore, the algorithm can be implemented in our communication model as an \emph{extremely simple averaging protocol.}
Each iteration of \ripref{alg:oja} requires computing $\bv{Q}_{t+1} = (\bv{I} + \eta \bv{x}_t\bv{x}_t^T) \bv{Q}_t$. If $\bv{x}_t = \euv$ for $\euv \sim \dw$, we can see that computing $\bv{Q}_{t+1}$ just requires updating the $u^\text{th}$ and $v^\text{th}$ rows of $\bv{Q}_t$. Thus, if the $n$ rows of $\bv{Q}_t$ are distributed across $n$ nodes, this update can be done locally by nodes $u$ and $v$ when they are chosen to interact by the randomized scheduler. Specifically, letting $[q^{(1)}_u,...,q^{(k)}_u]$ be the $u^\text{th}$ row of $\bv{Q}_t$, stored as the state at node $u$, applying $(\bv{I} + \eta \euv \euv^T)$ just requires setting for all $i \in [k]$:
\begin{align}\label{eq:avg}
q^{(i)}_u := (1+\eta)q^{(i)}_u + \eta q^{(i)}_v.
\end{align}
Node $v$ makes a symmetric update, and all other entries of $\bv{Q}_t$ remain fixed.

We give the pseudocode for this protocol in \ripref{alg:ojaGossip}. Along with the main iteration based on the simple update in \eqref{eq:avg}, the nodes 
need to implement Step 5 of \ripref{alg:oja}, where $\bv{Q}_T$ is orthogonalized. This can be done with a gossip-based protocol, which we abstract as the routine $\mathtt{AsynchOrth}$. We give an implementation of  $\mathtt{AsynchOrth}$ in \autoref{sec:orth}. 

\begin{remark}[Choice of communication matrix]
 While, as we will show, the eigenvectors of $\bv{D} + \bv{W}$ are naturally useful in our applications to community detection, the above techniques easily extend to computing eigenvectors of other matrices. For example, if we set $\euv = \bv{e}_u - \bv{e}_v$,  $\E_{\euv \sim \dw}[\euv \euv^T] = \bv{D}-\bv{W} = \bv{L}$, a scaled Laplacian of the communication graph. 
 \ifArxiv
 Alternatively, if the nodes perform an initialization protocol in which they compute the maximum node degree $\Delta = \max_u \bv{D}_{u,u}$ via gossiping, they can compute the eigenvectors of the PSD matrix $\bv{\Delta}\bv{I} + \bv{W}$ which are identical to those of $\bv{W}$. When $\bv{W}$ represents the adjacency matrix of some communication graph, these are the adjacency matrix eigenvectors.
 \fi
 \end{remark}

\begin{algorithm}[H]
\caption{\algoname{Asynchronous Oja's} (\texttt{AsynchOja}$(T,T',\eta)$)}
{\bf Input}: Time bounds $T,T'$, step size $\eta$.\\
{\bf Initialization}: $\forall u$, chose $[q^{(1)}_u,...,q^{(k)}_u]$ independently from standard Gaussian $\mathcal{N}(0,1)$.
\begin{algorithmic}[1]
\If{$t < T$}
\State{$(u,v)$ is chosen by the randomized scheduler.}
\State{For all $i \in [k]$, $q^{(i)}_u := (1+\eta)q^{(i)}_u + \eta q^{(i)}_v$.}
\Comment{\textcolor{blue}{Computes of $(\bv{I} + \eta \euv \euv^T)\bv{Q}_t$}.}
\Else
\State{$[{\hat v}_u^{(1)},...,{\hat v}_u^{(k)}] = \texttt{AsynchOrth}([{q}_u^{(1)},...,{q}_u^{(k)}],T')$.}
\Comment{\textcolor{blue}{Implements of $\bv{\tilde V}_T = \orth(\bv{Q}_T)$.}}
\EndIf
\end{algorithmic}
\label{alg:ojaGossip}
\end{algorithm}
Note that in the pseudocode above, when nodes $u,v$ interact in the asynchronous model, they only need to share their respective values of $q_u^{(i)}$ and $q_v^{(i)}$ for $i\in[k]$.

Up to the orthogonalization step, we see that \ripref{alg:ojaGossip} \emph{exactly simulates} \ripref{alg:oja} on input $\bv{M} = \bv{D} +  \bv{W}$. Thus, assuming that 
%
$\texttt{AsynchOrth}([{q}_u^{(1)},...,{q}_u^{(k)}])$ exactly computes $\bv{\tilde V}_T = \orth(\bv{Q}_T)$ as in Step 5 of \ripref{alg:oja}, the error bound of \autoref{thm:oja} applies directly. Specifically, if we let the local states, $[q_1^{(1)}, \ldots, q_n^{(1)}], \ldots, [q_1^{(k)}, \ldots, q_n^{(k)}]$ correspond to the $k$ length-$n$ vectors in $\bv{\tilde V}_T$, \autoref{thm:oja} shows that $\norm{\bv{Z}^T\bv{\tilde V}_T}_F^2 \le \epsilon$. In \autoref{sec:orth} we show that this bound still holds when $\texttt{AsynchOrth}$ computes an approximate orthogonalization.

\subsection{Distributed orthogonalization and eigenvector guarantees}\label{sec:orth}
%

In fact, a specific orthogonalization strategy yields a stronger bound, which is desirable in many applications, including community detection:   \ripref{alg:ojaGossip} can actually well approximate \emph{each} of $\bv{D}+\bv{W}$'s top $k$ eigenvectors, instead of just the subspace they span.


Specifically, let $\bv{\tilde v}_i$ denote the $i^\text{th}$ column of $\bv{\tilde V}_T$ and $\bv{v}_i$ denote the $i^\text{th}$ eigenvector of $\bv{D} + \bv{W}$. We want $(\bv{\tilde v}_i^T \bv{v}_i)^2 \ge 1- \epsilon$ for all $i$. Such a guarantee requires sufficiently large gaps between the top $k$ eigenvalues, so that their corresponding eigenvectors are identifiable. If these gaps exist, the guarantee can by using the following orthogonalization procedure:

\begin{algorithm}[H]
\caption{\algoname{Orthogonalization via Cholesky Factorization (centralized)}}
{\bf Input}: $\bv{Q} \in \R^{n \times k}$ with full column rank.
{\bf Output}: Orthonormal span for $\bv{Q}$, $\bv{\tilde V} \in \mathbb{R}^{n \times k}$.
\begin{algorithmic}[1]
\State{$\bv{L} := \chol(\bv{Q}^T \bv{Q})$}
\Comment{\textcolor{blue}{Cholesky decomp. returns lower triangular $\bv{L}$ with $\bv{LL}^T = \bv{Q}^T\bv{Q}$.}}\\
\Return{$\bv{\tilde V} := \bv{Q} (\bv{L}^T)^{-1}$}
\Comment{\textcolor{blue}{Orthonormalize $\bv{Q}_{T}$'s columns using the Cholesky factor.}}
\end{algorithmic}
\label{alg:chol}
\end{algorithm}
\begin{remark}
\ripref{alg:chol} requires an input that is \emph{full-rank}, which always includes $\bv{Q}_T$ in Algorithms \ref{alg:oja} and 
\ref{alg:ojaGossip}: $\bv{Q}_0$'s entries are random Gaussians so it is full-rank with probability $1$ and each $(\bv{I} + \eta \bv{x}_t^T\bv{x}_t)$ is full-rank since $\eta < \norm{\bv{x}_t}$. Thus, $\bv{Q}_T = \prod_{t=0}^{T-1} (\bv{I} + \eta \bv{x}_t^T\bv{x}_t) \bv{Q}_0$ is too.
\end{remark}

Ultimately, our $\texttt{AsynchOrth}$ is an asynchronous distributed implementation of \ripref{alg:chol}. We first prove an eigenvector approximation bound under the assumption that this implementation is exact (\autoref{cor:ojaAsync} in \autoref{app:orth}) and then adapt that result to account for the fact that $\texttt{AsynchOrth}$ only outputs an approximate solution.

Pseudocode for \texttt{AsynchOrth} is included below.
Each node first computes a (scaled) approximation to every entry of $\bv{ Q}^T \bv{ Q}$ using a simple averaging technique. Nodes then locally compute $\bv{L} = \chol\left (\bv{ Q}^T \bv{ Q}\right)$ and the $u^\text{th}$ row of $\bv{\tilde V}_T = \bv{Q} (\bv{L}^T)^{-1}$. In \autoref{app:orth} we argue that, due to numerical stability of Cholesky decomposition, each node's output is close to the $u^\text{th}$ row of an exactly computed $\bv{\tilde V}_T$, despite the error in constructing $\bv{ Q}^T \bv{ Q}$.

\begin{algorithm}[H]
\caption{\algoname{Asynchronous Cholesky Orthogonalization} (\texttt{AsynchOrth}$(T)$)}
{\bf Input}: Time bound $T$.\\
{\bf Initialization}: Each node holds $[{q}^{(1)}_u,...,{q}^{(k)}_u]$. For all $i,j \in [k]$, let $r_u^{(i,j)} := {q}^{(i)}_u \cdot {q}^{(j)}_u$.
\begin{algorithmic}[1]
\If{$t < T$}
\State{$(u,v)$ is chosen by the randomized scheduler.}
\State{for all $i,j \in [k]$, $r_u^{(i,j)} := \frac{r_u^{(i,j)}+r_v^{(i,j)}}{2}$.}
\Comment{\textcolor{blue}{Estimation of $\frac{1}{n}\bv{ q}_i^T \bv{ q}_j$ via averaging.}}
\Else
\State{Form $\bv{R}_u \in \R^{k \times k}$ with $(\bv{R}_u)_{i,j} = (\bv{R}_u)_{j,i} := n \cdot r_u^{(i,j)}$.}
\Comment{\textcolor{blue}{Approximation of $\bv{ Q}^T\bv{ Q}$.}}
\State{$\bv{L}_u := \chol(\bv{R}_u)$.}
\State $[\hat v_u^{(1)},...,\hat v_u^{(k)}] := [{q}^{(1)}_u,...,{q}^{(k)}_u] \cdot (\bv{L}_u^T)^{-1}.$
\Comment{\textcolor{blue}{Approximation of $u^\text{th}$ row of $\bv{ Q}(\bv{L}_u^T)^{-1}$.}}
\EndIf
\end{algorithmic}
\label{alg:pporth}
\end{algorithm}

In \autoref{app:orth} we prove the following result when \ripref{alg:pporth} is used to implement $\texttt{AsynchOrth}$ as a subroutine for \ripref{alg:ojaGossip}, \texttt{AsynchOja}$(T,T',\eta)$:

\begin{shaded}
\vspace{-.75em}
\begin{theorem}[Asynchronous eigenvector approximation]\label{cor:ojaAsync2}
Let $\bv{v}_1,...,\bv{v}_k$ be the top $k$ eigenvectors of the communication matrix $\bv{D}+\bv{W}$ in an asynchronous communication model, and let $\Lambda,\bgap, \gamma_{mix}$ be  bounds satisfying: $\Lambda \ge \sum_{j=1}^k \lambda_j(\bv{D}+\bv{W})$, $\bgap \le \min_{j \in [k]} [\lambda_j(\bv{D}+\bv{W})-\lambda_{j+1}(\bv{D}+\bv{W})]$, and $\gamma_{mix} \le \min \left [ \frac{1}{n},\log \left(\lambda_2^{-1}(\bv{I}-\frac{1}{2}\bv{D} + \frac{1}{2}\bv{W})\right)\right ]$.

For any $\epsilon,\delta \in(0,1)$, let $\xi = \frac{n}{\delta \epsilon \cdot \bgap}$. Let $\eta =\frac{c_1 \epsilon^2 \cdot \bgap \cdot \delta^2}{\Lambda k^3 \log^3 \xi}$ for sufficiently small $c_1$, and $T = \frac{c_2 \cdot (\log \xi + 1/\epsilon)}{\bgap \cdot \eta}$, $T' = \frac{c_3(\log \xi + 1/\epsilon)\cdot \lambda_1(\bv{D}+\bv{W})}{\bgap \cdot \gamma_{mix}}$ for sufficiently large $c_2,c_3$.
 For all $u \in [n],i \in [k]$, let ${\hat v}_u^{(j)}$ be the local state computed by \ripref{alg:ojaGossip}.
If $\bv{\hat V} \in \R^{n \times k}$ is given by $(\bv{\hat V})_{u,j} = {\hat v}_u^{(j)}$
 and $\bv{\hat v}_i$ is the $i^{th}$ column of $\bv{\hat V}$, then with probability $\ge 1 -\delta-e^{-\Theta(n)}$, for all $i \in [k]$:
 \begin{align*}
 \left |\bv{\hat v}_i^T \bv{v}_{i}\right| \ge 1-\epsilon\hspace{1em}\text{ and }\hspace{1em}\norm{\bv{\hat v}_i}_2 \le 1+\epsilon.
 \end{align*}
\end{theorem}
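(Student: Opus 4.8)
The plan is to combine three ingredients: (i) the centralized Oja guarantee (\autoref{thm:oja}) applied to $\bv{M} = \bv{D}+\bv{W}$; (ii) an upgrade from subspace recovery $\norm{\bv{Z}^T\bv{\tilde V}_T}_F^2 \le \epsilon'$ to per-eigenvector recovery $|\bv{\tilde v}_i^T\bv{v}_i|\ge 1-\epsilon'$ when \ripref{alg:chol} is the orthogonalization routine and the top-$k$ eigenvalue gaps are at least $\bgap$; and (iii) a perturbation analysis showing that replacing the exact Cholesky orthogonalization by the averaged, approximate version $\mathtt{AsynchOrth}$ (\ripref{alg:pporth}) changes each node's output only by a small additive amount, provided $T'$ is large enough. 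First I would recall that, as noted after \ripref{alg:ojaGossip}, the main loop of \ripref{alg:ojaGossip} \emph{exactly} simulates \ripref{alg:oja} on $\bv{M}=\bv{D}+\bv{W}$ with the i.i.d.\ samples $\bv{x}_t = \euv \sim \dw$, and $\E[\euv\euv^T] = \bv{D}+\bv{W}$ by \eqref{eq:basicExp}, with $\norm{\euv}_2^2 = 2$ so $C=2$; hence $\Lambda \ge \sum_{j\le k}\lambda_j(\bv{M})$ and $\bgap \le \min_j[\lambda_j(\bv{M})-\lambda_{j+1}(\bv{M})]$ meet the hypotheses of \autoref{thm:oja} (up to constants absorbed into $c_1,c_2$, and the extra $k^2$ factor in $\eta$ giving the slack needed below). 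So with probability $\ge 1-\delta$ the exactly-orthogonalized iterate $\bv{\tilde V}_T = \orth(\bv{Q}_T)$ satisfies $\norm{\bv{Z}^T\bv{\tilde V}_T}_F^2 \le \epsilon'$ for a target $\epsilon'$ we choose as a small polynomial in $\epsilon$.

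Next I would prove the subspace-to-vector upgrade (this is \autoref{cor:ojaAsync} referenced in \autoref{app:orth}). The point is that \ripref{alg:chol} produces a \emph{canonical} orthonormal basis: because $\bv{L}$ is lower-triangular, $\bv{\tilde v}_i = \bv{Q}(\bv{L}^T)^{-1}\bv{e}_i$ lies in the span of the first $i$ columns of $\bv{Q}$. Combined with the fact that $\bv{Q}_T$'s column span is $\epsilon'$-close to the top-$k$ eigenspace, and that $\bv{Q}_T$ itself has a growth profile along the eigendirections governed by the $(1+\eta\lambda_j)^T$ factors — so that the leading column of $\bv{Q}_T$ is dominated by $\bv{v}_1$, the first two columns by $\mathrm{span}(\bv{v}_1,\bv{v}_2)$, etc. — a short induction on $i$ using the Gram–Schmidt structure of Cholesky and the eigenvalue gaps $\bgap$ shows that $\bv{\tilde v}_i$ aligns with $\bv{v}_i$ up to an angle whose square is $O(\epsilon'/\bgap^{?})$; choosing $\epsilon'$ accordingly gives $|\bv{\tilde v}_i^T\bv{v}_i| \ge 1-\epsilon/2$ and, since $\bv{\tilde v}_i$ is exactly unit norm in the ideal case, $\norm{\bv{\tilde v}_i}_2 = 1$.

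For ingredient (iii), I would analyze \ripref{alg:pporth}: the quantities $r_u^{(i,j)}$ evolve by pairwise averaging, which is exactly a gossip/consensus iteration on the vector $(\,q_u^{(i)}q_u^{(j)}\,)_u$, converging to its mean $\tfrac1n\bv{q}_i^T\bv{q}_j$ at a rate controlled by the second eigenvalue of the averaging operator $\bv{I}-\tfrac12\bv{D}+\tfrac12\bv{W}$, which is where the $\gamma_{mix}$ and $T'$ in the statement come from: after $T'$ rounds each $\bv{R}_u$ is entrywise within (say) $\nu$ of $\bv{Q}^T\bv{Q}$ with probability $\ge 1-e^{-\Theta(n)}$ (a concentration/consensus bound on the number of interactions and the mixing of the averaging chain). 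Then I invoke backward stability of the Cholesky factorization and the formula $\bv{\hat v}_u = \bv{q}_u(\bv{L}_u^T)^{-1}$: since $\bv{Q}^T\bv{Q}$ is well-conditioned on the relevant subspace (its small eigenvalues are bounded below because $\bv{Q}_T$ grows along every top direction), a $\nu$-perturbation of the Gram matrix perturbs $(\bv{L}^T)^{-1}$, and hence each row $\bv{\hat v}_u$, by $O(\nu/\lambda_k(\bv{Q}^T\bv{Q}))$ in norm; summing over coordinates, $\norm{\bv{\hat V}-\bv{\tilde V}_T}_F$ is small, so $|\bv{\hat v}_i^T\bv{v}_i| \ge |\bv{\tilde v}_i^T\bv{v}_i| - \norm{\bv{\hat v}_i - \bv{\tilde v}_i}_2 \ge 1-\epsilon$ and $\norm{\bv{\hat v}_i}_2 \le \norm{\bv{\tilde v}_i}_2 + \norm{\bv{\hat v}_i-\bv{\tilde v}_i}_2 \le 1+\epsilon$. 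Taking a union bound over the $\delta$ failure of Oja and the $e^{-\Theta(n)}$ failure of consensus gives the claimed probability.

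The main obstacle I expect is ingredient (iii): quantifying the interaction between the \emph{statistical} error from finite-$T'$ gossip averaging and the \emph{numerical} sensitivity of Cholesky. One has to argue that $\bv{Q}^T\bv{Q}$ is not ill-conditioned — equivalently that $\bv{Q}_T$ does not collapse toward a lower-dimensional span — which needs a lower bound on how much $\bv{Q}_0$'s mass along each of $\bv{v}_1,\dots,\bv{v}_k$ survives (this is exactly the kind of anti-concentration of Gaussian initialization already used inside the proof of \autoref{thm:oja}), and then to propagate an entrywise $\ell_\infty$ bound on $\bv{R}_u - \bv{Q}^T\bv{Q}$ through the Cholesky map to a Frobenius bound on the outputs while keeping all constants independent of $n$. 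Everything else is a careful but routine chaining of \autoref{thm:oja} with standard consensus and matrix-perturbation estimates, with the parameters $\eta, T, T'$ in the statement chosen precisely to make each error term at most a constant fraction of $\epsilon$.
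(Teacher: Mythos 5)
Your overall architecture matches the paper's: apply Theorem~\ref{thm:oja} to $\bv{M}=\bv{D}+\bv{W}$ with samples $\euv\sim\dw$ (so $C=2$), exploit the structure of Cholesky orthogonalization to upgrade the subspace guarantee to per-eigenvector guarantees, and then control the error of the gossip-based $\texttt{AsynchOrth}$ via averaging convergence plus a Cholesky perturbation bound and a conditioning bound on $\bv{Q}$. The genuine gap is in your ingredient (ii). You assert that the first $i$ columns of $\bv{Q}_T$ are dominated by $\mathrm{span}(\bv{v}_1,\dots,\bv{v}_i)$ because of a ``$(1+\eta\lambda_j)^T$ growth profile,'' and that a Gram--Schmidt induction using the gaps then yields alignment up to $O(\epsilon'/\bgap^{?})$. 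That prefix-span claim is exactly what needs proof: it does not follow from Theorem~\ref{thm:oja} as a black box (which only controls the full rank-$k$ subspace), and a deterministic power-method heuristic is not valid for the stochastic product of updates --- making it rigorous that way would force you to redo the whole variance analysis. The paper's resolution (Corollary~\ref{cor:ojaAsync}) is a cleaner trick you are one step away from: since $\bv{L}_i=\chol(\bv{Q}_i^T\bv{Q}_i)$ is the leading principal submatrix of $\bv{L}$ and $(\bv{L}_i^T)^{-1}$ is the corresponding submatrix of $(\bv{L}^T)^{-1}$, the first $i$ columns of $\bv{Q}(\bv{L}^T)^{-1}$ are \emph{identical} to the output of the same protocol run with only $i$ columns --- and that restricted run is literally Oja's algorithm with rank parameter $i$. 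So one invokes Theorem~\ref{thm:oja} once for each $i\in[k]$ (error $\epsilon/2$, failure $\delta/k$; this is where the $k^3$ in $\eta$ is spent), and a Pythagorean computation comparing $\norm{\bv{V}_i^T\bv{\tilde V}_i}_F^2\ge i-\epsilon/2$ with $\norm{\bv{V}_{i-1}^T\bv{\tilde V}_{i-1}}_F^2\ge i-1-\epsilon/2$ gives $(\bv{v}_i^T\bv{\tilde v}_i)^2\ge 1-\epsilon$, with no additional loss in $\bgap$ of the kind your sketch anticipates.

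On ingredient (iii) your plan points in the right direction but differs from the paper in details that matter for the stated bounds. The conditioning of $\bv{Q}^T\bv{Q}$ is handled not by tracking how much Gaussian mass survives along each $\bv{v}_j$, but by the simpler observation that every update $\bv{I}+\eta\euv\euv^T$ has all eigenvalues at least $1$, so $\smin(\bv{Q}_T)\ge\smin(\bv{Q}_0)\ge\delta/c$ by the smallest-singular-value bound for Gaussian matrices; this is the source of the $e^{-\Theta(n)}$ term, which you instead attributed to the consensus phase (the averaging lemma only gives $1-\delta$ via Markov). Equally important, the argument needs an \emph{upper} bound $\smax(\bv{Q})\le e^{O\left((\log\xi+1/\epsilon)\cdot\lambda_1(\bv{D}+\bv{W})/\bgap\right)}$, because the averaging accuracy must be set exponentially small relative to $\norm{\bv{Q}}_2$; that is precisely why $T'$ carries the factor $(\log\xi+1/\epsilon)\lambda_1(\bv{D}+\bv{W})/\bgap$, a dependence your sketch does not account for. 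Finally, the perturbation step uses a forward-stability result for Cholesky (Theorem 10.8 of Higham) together with a linear-system stability bound rather than backward stability, though that difference is cosmetic.
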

\vspace{-.75em}
\end{shaded}

\section{Distributed community detection}
From the results of \autoref{sec:sec3}, we obtain a simple population protocol for distributed community detection that works for many clustered communication models, including the \modelpqshort\ and \modelgshort\ models of Definitions \ref{def:bpp} and \ref{def:sbm}.

In particular, we show that if each node $u\in V$ can locally compute the $u^\text{th}$ entry of an approximation $\bv{\hat v}_2$ to the second eigenvector of the communication matrix $\bv{D} + \bv{W}$, then it can solve the community detection problem locally: $u$ just sets its state to the sign of this entry. 

\begin{algorithm}[H]
\caption{\algoname{Asynchronous Community Detection} (\texttt{AsynchCD}$(T,T',\eta)$)}
{\bf Input}: Time bounds $T,T'$, step size $\eta$.
\begin{algorithmic}[1]
\State{Run \texttt{AsynchOja}$(T,T',\eta)$ (\ripref{alg:ojaGossip}) with $k = 2$.}
\State{Set $\hat \chi_u := \sign(\hat v_u^{(2)}).$}
\end{algorithmic}
\label{alg:commDetec}
\end{algorithm}
Here $\hat \chi_u \in \{-1,1\}$ is the final state of node $u$. We will claim that this state solves the community detection problem of \autoref{def:cc}. We use the notation $\hat \chi_u$ because we will use $\bs{\chi}$ to denote the true \emph{cluster indicator vector} for communities $V_1$ and $V_2$ in a given communication model: $\bs{\chi}_u =1$ for $u \in V_1$ and $\bs{\chi}_u = -1$ for $u \in V_2$.

In particular, we will show that if $\eta$ is set so that \texttt{AsynchOja} outputs eigenvectors with accuracy $\epsilon$, then a $1-O(\epsilon)$ fraction of nodes will correctly identify their clusters. 
In \autoref{sec:cleanup} we show how to implement a 	`cleanup phase' where, starting with $\epsilon$ set to a small constant (e.g. $\epsilon = .1$), the nodes can converge  to a state with \emph{all} cluster labels correct with high probability. 

\subsection{Community detection in the \modelpq}

We start with an analysis for the \modelpq.
Recall that in this model the nodes are partitioned into two sets, $V_1$ and $V_2$, each with $n/2$ elements. Without loss of generality we can identify  the nodes with integer labels such that $1, \ldots, n/2 \in V_1$ and $n/2 + 1, \ldots, n \in V_2$. 
We define the weighted cluster indicator matrix, $\bv{C}^{(p,q)} \in \R^{n\times n}$:
\begin{align}\label{eq:indicatorMatrix}
\bv{C}^{(p,q)} \eqdef \begin{bmatrix} 
p \cdot \bv{1}_{\frac{n}{2} \times \frac{n}{2}} & q \cdot \bv{1}_{\frac{n}{2} \times \frac{n}{2}} \\ 
q \cdot \bv{1}_{\frac{n}{2} \times \frac{n}{2}} & p \cdot \bv{1}_{\frac{n}{2} \times \frac{n}{2}}
\end{bmatrix}.
\end{align}
$p$ and $q$ can be arbitrary, but we will always take $p > q > 0$. 
It is easy  to check that $\bv{C}^{(p,q)}$ is a rank two matrix with eigendecomposition:
\begin{align}
\label{spect_c}
\bv{C}^{(p,q)} &= \frac{n}{2}\begin{bmatrix} \\ \bv{v}_1 & \bv{v}_2 \\ \\ \end{bmatrix} \begin{bmatrix} p+q & 0 \\ 0 & p-q \end{bmatrix} \begin{bmatrix} &\bv{v}_1^T & \\ & \bv{v}_2^T & \end{bmatrix} & &\text{where} & \bv{v}_1 & = \frac{\bv{1}_{n \times 1}}{\sqrt{n}},\hspace{.5em}  \bv{v}_2 = \frac{\bs{\chi}}{\sqrt{n}}.
\end{align}
So, if all nodes could compute their corresponding entry in the \emph{second eigenvector} of $\bv{C}^{(p,q)}$, then by simply returning the sign of this entry, they would solve the distributed community detection problem (\autoref{def:cc}).
If they compute this eigenvector approximately, then we can still show that a large fraction of them correctly solve community  detection. Specifically:
%
\begin{lemma}
\label{approx_eig_starting_point}
Let $\bv{v}_2$ be the second eigenvector of $\bv{C}^{(p,q)}$ for any $p > q > 0$. If $\bv{\tilde v}_2$ satisfies:
\begin{align}
\label{what_we_have}
 \left |\bv{\tilde v}_2^T \bv{v}_{2}\right| \ge 1-\epsilon\hspace{1em}\text{ and }\hspace{1em}\norm{\bv{\tilde v}_2}_2 \le 1+\epsilon.
\end{align}
for $\epsilon \le 1$,
then $\sign(\bv{\tilde v}_2)$ gives a labeling such that, after ignoring at most $5\epsilon n$ nodes, all remaining nodes in $V_1$ have the same labeling, and all in $V_2$ have the opposite. 
\end{lemma}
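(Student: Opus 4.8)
The plan is to convert the inner-product guarantee \eqref{what_we_have} into a bound on the Euclidean distance $\|\bv{\tilde v}_2 - \bv{v}_2\|_2$, and then observe that, since $\bv{v}_2 = \bs{\chi}/\sqrt n$ has all entries of magnitude exactly $1/\sqrt n$, every coordinate on which $\sign(\bv{\tilde v}_2)$ disagrees with $\bs{\chi}$ must contribute at least $1/n$ to $\|\bv{\tilde v}_2 - \bv{v}_2\|_2^2$. A counting argument then caps the number of such coordinates at $5\epsilon n$.

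First I would normalize the sign: replacing $\bv{\tilde v}_2$ by $-\bv{\tilde v}_2$ if necessary (which only globally swaps the two cluster labels and hence does not affect the statement), we may assume $\bv{\tilde v}_2^T \bv{v}_2 \ge 1-\epsilon$. Then, expanding the square and using $\|\bv{v}_2\|_2 = 1$ together with the two bounds in \eqref{what_we_have},
\[
\|\bv{\tilde v}_2 - \bv{v}_2\|_2^2 \;=\; \|\bv{\tilde v}_2\|_2^2 - 2\,\bv{\tilde v}_2^T\bv{v}_2 + \|\bv{v}_2\|_2^2 \;\le\; (1+\epsilon)^2 - 2(1-\epsilon) + 1 \;=\; 4\epsilon + \epsilon^2 \;\le\; 5\epsilon,
\]
where the last inequality uses $\epsilon \le 1$.

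Next, call a node $u$ \emph{bad} if $\sign\big((\bv{\tilde v}_2)_u\big) \ne \bs{\chi}_u$, where a coordinate equal to $0$ counts as bad, and let $B$ denote the set of bad nodes; these are precisely the nodes on which the labeling fails and which must be ignored. For $u \in V_1$ we have $(\bv{v}_2)_u = 1/\sqrt n > 0$, so if $u$ is bad then $(\bv{\tilde v}_2)_u \le 0$ and hence $\big((\bv{\tilde v}_2)_u - (\bv{v}_2)_u\big)^2 \ge 1/n$; the case $u \in V_2$ is symmetric. Summing only over the coordinates in $B$ and comparing with the distance bound,
\[
\frac{|B|}{n} \;\le\; \sum_{u \in B} \big((\bv{\tilde v}_2)_u - (\bv{v}_2)_u\big)^2 \;\le\; \|\bv{\tilde v}_2 - \bv{v}_2\|_2^2 \;\le\; 5\epsilon,
\]
so $|B| \le 5\epsilon n$. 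After discarding the at most $5\epsilon n$ nodes of $B$, every remaining node of $V_1$ is labeled $+1$ and every remaining node of $V_2$ is labeled $-1$ by $\sign(\bv{\tilde v}_2)$ (or vice versa, if we flipped the sign at the start), which is exactly the claimed conclusion.

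This argument is short and essentially computational, so I do not expect a substantial obstacle; the only points needing a little care are the initial reduction to the case $\bv{\tilde v}_2^T \bv{v}_2 \ge 1-\epsilon$ via a global sign flip, and fixing a convention for $\sign(0)$ (handled by simply counting zero coordinates as bad).
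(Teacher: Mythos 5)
Your proposal is correct and follows essentially the same route as the paper: both arguments bound $\min_{s\in\{-1,1\}}\norm{s\cdot\bv{\tilde v}_2-\bv{v}_2}_2^2$ by roughly $5\epsilon$ using \eqref{what_we_have}, and then observe that each misclassified coordinate contributes at least $1/n$ to this squared distance since every entry of $\bv{v}_2=\bs{\chi}/\sqrt{n}$ has magnitude $1/\sqrt{n}$. The only cosmetic difference is that you fix the sign ambiguity up front by a global flip, whereas the paper carries the minimum over $s\in\{-1,1\}$ through the computation.
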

\begin{proof}
We follow the argument from \cite{spielmanNotes}. 
Let $\norm{\bv{y}}_0$ denote the number of non-zeros in a vector $\bv{y}$. Since $p > q > 0$, $\bv{v}_2 =\frac{\bs{\chi}}{\sqrt{n}}$ by \eqref{spect_c}, so the number of nodes misclassified by $\sign(\bv{\tilde v}_2)$ is:
$$
\min_{s\in \{-1,1\}} \|\sign(s\cdot \bv{\tilde v}_2)- \sign(\bv{v}_2) \|_0.
$$
For $s \in \{-1,1\}$, if
$\sign(s \cdot \bv{\tilde v}_2)$ and $\sign(\bv{v}_2)$ differ on a specific coordinate, then since each entry of $\bv{v}_2$ has value $\pm \frac{1}{\sqrt{n}}$, $s\cdot \bv{\tilde v}_2$ and $\bv{v}_2$ must differ by at least $\frac{1}{\sqrt{n}}$ on that coordinate. It follows that
$
\|\bv{\tilde v}_2 - s\cdot \bv{v}_2 \|_2^2 \geq \frac{1}{n}\|\sign(s \cdot \bv{\tilde v}_2) - \sign(\bv{v}_2) \|_0
$
and hence;
\begin{align}\label{l0l2Bound}
\min_{s \in \{-1,1\}}\|\sign(s \cdot \bv{\tilde v}_2) - \sign(\bv{v}_2) \|_0 \le n\cdot \min_{s \in \{-1,1\}}\|s\cdot\bv{\tilde v}_2 - \cdot \bv{v}_2 \|_2^2.
\end{align}
We can bound the righthand side of \eqref{l0l2Bound} using \eqref{what_we_have}. Specifically, assuming $\epsilon \leq 1$:
\begin{align*}
n\cdot \min_{s \in \{-1,1\}}\|s\cdot \bv{\tilde v}_2 - \bv{v}_2 \|_2^2 &= n\cdot \min_{s \in \{-1,1\}} \left (\norm{\bv{v}_2}_2^2 + \norm{\bv{\tilde v}}_2^2  - 2 (\bv{\tilde v}_i^T \bv{v}_2) \right )\\
&\le n \cdot \left (1 + (1+\epsilon)^2 - 2(1-\epsilon) \right ) \leq 5n\epsilon
\end{align*}
Plugging back into \eqref{l0l2Bound}, we have
$
\min_{s \in \{-1,1\}}\|\sign(s\cdot\bv{\tilde v}_2) - \sign(\bv{v}_2) \|_0 \leq 5 n\epsilon$,
so $\sign(\bv{\tilde v}_2)$ only classifies at most a $5\epsilon$ fraction of nodes incorrectly, giving the lemma.
\end{proof}
With \autoref{approx_eig_starting_point} in place, we can then apply \autoref{cor:ojaAsync2} to prove the correctness of \texttt{AsynchCD} (\ripref{alg:commDetec}) for the \modelpq
\begin{shaded}
\vspace{-.75em}
\begin{theorem}[$\epsilon$-approximate community  detection: \modelpq]\label{thm:modelp}
Consider \ripref{alg:commDetec} in the \modelpq. Let $\rho = \min \left (\frac{q}{p+q}, \frac{p-q}{p+q} \right )$. For sufficiently small constant $c_1$ and sufficiently large $c_2$ and $c_3$, let
\begin{align*}
\hspace{-1em}\eta =\frac{c_1 \epsilon^2 \delta^2 \rho}{\log^3 \left (\frac{n}{\epsilon \delta \rho} \right )}, \hspace{.5em}T = \frac{c_2 n \left (\log^3 \left (\frac{n}{\epsilon \delta \rho} \right ) + \frac{\log \left (\frac{n}{\epsilon \delta \rho} \right )}{\epsilon}\right)}{\epsilon^2 \delta^2 \rho^2},\hspace{.5em} T' = \frac{c_3 n  \left (\log \left (\frac{n}{\epsilon \delta \rho} \right ) + \frac{1}{\epsilon} \right )}{\rho^2}.
\end{align*}
With probability  $1-\delta$, after ignoring $\epsilon n$ nodes, all remaining nodes in $V_1$ terminate in some state $s_1 \in \{-1,1\}$, and all  nodes in $V_2$ terminate in state $s_2 = -s_1$. Suppressing polylogarithmic factors in the parameters, the total number of global rounds and local rounds required are: $T+T' = \tilde O \left ( \frac{n}{\epsilon^3 \delta^2 \rho^2} \right )$ and $L = \tilde O \left (\frac{1}{\epsilon^3 \delta^3  \rho^2} \right)$.
\end{theorem}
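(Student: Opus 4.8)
The plan is to derive \autoref{thm:modelp} by combining \autoref{cor:ojaAsync2} (instantiated with $k=2$) with \autoref{approx_eig_starting_point}, after computing the spectral parameters of the communication matrix $\bv D+\bv W$ in the \modelpqshort. In the \modelpqshort\ every unordered pair has weight $p$ (intracluster) or $q$ (intercluster), so with $Z\eqdef\sum_{(i,j)\in\pairs}w(i,j)=\tfrac n2\bigl(\tfrac n2-1\bigr)p+\tfrac{n^2}{4}q$ one has $\bv W=\tfrac1Z\bigl(\bv C^{(p,q)}-p\bv I\bigr)$ and, by symmetry (each node has $\tfrac n2-1$ intracluster and $\tfrac n2$ intercluster partners), $\bv D=\tfrac2n\bv I$. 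Hence $\bv D+\bv W$ equals a scalar multiple of $\bv I$ plus $\tfrac1Z\bv C^{(p,q)}$, so by \eqref{spect_c} its eigenvectors are $\bv v_1=\bv 1/\sqrt n$, $\bv v_2=\bs\chi/\sqrt n$, and an arbitrary orthonormal basis of $\{\bv 1,\bs\chi\}^\perp$. Using the Rayleigh quotients of $\bv 1$, $\bs\chi$ and a bulk vector I would record $\lambda_1(\bv D+\bv W)=\tfrac4n$ (since $\bv 1^T\bv D\bv 1=\bv 1^T\bv W\bv 1=2$), $\lambda_1-\lambda_2=\tfrac{nq}{Z}$, and $\lambda_2-\lambda_3=\tfrac{n(p-q)}{2Z}$; because $Z\le\tfrac{n^2}{4}(p+q)$ both gaps are at least $\tfrac{2\rho}{n}$, with $\rho=\min\bigl(\tfrac{q}{p+q},\tfrac{p-q}{p+q}\bigr)$, so $\bgap\eqdef\tfrac\rho n$ is a valid gap bound and $\Lambda\eqdef\tfrac8n\ge\lambda_1+\lambda_2$.

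Next I need a valid choice of $\gamma_{mix}$, i.e. a lower bound on $\min\{\tfrac1n,\log\lambda_2^{-1}(\bv I-\tfrac12\bv D+\tfrac12\bv W)\}$, which governs $T'$. Since $\bv I-\tfrac12\bv D+\tfrac12\bv W=\bv I-\tfrac12(\bv D-\bv W)$, and $\bv D-\bv W=(\tfrac2n+\tfrac pZ)\bv I-\tfrac1Z\bv C^{(p,q)}$ annihilates $\bv 1$, the eigenvalues of $\bv D-\bv W$ are $0$, $\tfrac{nq}{Z}$ (on $\bs\chi$), and $\tfrac{n(p+q)}{2Z}$ (on the bulk); as $q<p$ we have $\tfrac{nq}{Z}<\tfrac{n(p+q)}{2Z}$, so the second-largest eigenvalue of $\bv I-\tfrac12(\bv D-\bv W)$ is $1-\tfrac{nq}{2Z}$ and $\log\lambda_2^{-1}\ge\tfrac{nq}{2Z}\ge\tfrac{2\rho}{n}$. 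Thus $\gamma_{mix}\eqdef\tfrac\rho n$ works (and is $\le\tfrac1n$).

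With these parameters in hand, plugging $k=2$, $\Lambda=\tfrac8n$, $\bgap=\gamma_{mix}=\tfrac\rho n$, $\lambda_1(\bv D+\bv W)=\tfrac4n$, and target accuracy $\epsilon/5$ into \autoref{cor:ojaAsync2} reproduces the displayed $\eta,T,T'$ (the factor $5$ and a three-way split of the failure probability are folded into $c_1,c_2,c_3$), and guarantees that with probability $\ge 1-\delta$ the vector $\bv{\hat v}_2$ returned by \ripref{alg:commDetec} obeys $|\bv{\hat v}_2^T\bv v_2|\ge 1-\epsilon/5$ and $\|\bv{\hat v}_2\|_2\le 1+\epsilon/5$. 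Since $\bv v_2=\bs\chi/\sqrt n$, \autoref{approx_eig_starting_point} applies verbatim and shows that $\hat\chi_u=\sign(\hat v_u^{(2)})$ mislabels at most $5\cdot\tfrac\epsilon5\cdot n=\epsilon n$ nodes; after discarding those, all of $V_1$ holds one state $s_1\in\{-1,1\}$ and all of $V_2$ holds $-s_1$, which is the asserted correctness. Finally, to pass to local rounds, note each node is scheduled with probability $\bv D_{u,u}=\tfrac2n$ per global round, so over $T+T'$ rounds its number of interactions has mean $\tfrac{2(T+T')}{n}$; a Markov-inequality bound, using the allowance of $\epsilon n$ unlabeled nodes to absorb the union over nodes, shows that with probability $\ge 1-\delta$ every relevant node's state is fixed after $L=\tilde O\bigl(\tfrac{T+T'}{n}\cdot\tfrac1\delta\bigr)$ local interactions. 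Substituting $T+T'=\tilde O\bigl(\tfrac n{\epsilon^3\delta^2\rho^2}\bigr)$ (the $T$ term dominating) yields $L=\tilde O\bigl(\tfrac1{\epsilon^3\delta^3\rho^2}\bigr)$, completing the proof.

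No step here is deep: the content is a structured eigenvalue computation feeding two previously established results, so the real work is bookkeeping. The least automatic point is verifying that $\Lambda$, $\bgap$, $\gamma_{mix}$ are simultaneously valid --- in particular the $\gamma_{mix}$ estimate, which concerns the spectrum of the \emph{different} matrix $\bv I-\tfrac12\bv D+\tfrac12\bv W$ --- and tracking constants so the substituted parameters reproduce the stated $\eta,T,T'$. The fussiest part is the global-to-local conversion, and specifically obtaining the $\delta^{-3}$ factor: a Chernoff bound would give only $\delta^{-2}$, so one must use the coarser Markov estimate and exploit the $\epsilon n$ slack to handle all nodes at once.
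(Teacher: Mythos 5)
Your main argument is essentially the paper's own proof: you compute $\bv{D}=\tfrac2n\bv{I}$ and $\bv{W}=\tfrac1Z(\bv{C}^{(p,q)}-p\bv{I})$ with $Z=\tfrac{n^2(p+q)-2np}{4}$, read off $\lambda_1=\tfrac4n$, the two gaps $\tfrac{nq}{Z}$ and $\tfrac{n(p-q)}{2Z}$, and the spectrum of $\bv{I}-\tfrac12(\bv{D}-\bv{W})$, then feed $k=2$, $\Lambda=\Theta(1/n)$, $\bgap=\gamma_{mix}=\Theta(\rho/n)$ into \autoref{cor:ojaAsync2} and finish with \autoref{approx_eig_starting_point} after rescaling $\epsilon$ by the factor $5$. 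All of these computations check out and coincide (up to constant factors in $\bgap,\gamma_{mix}$) with the paper's proof, including the verification that $\gamma_{mix}$ is a legitimate bound for the mixing quantity $\log\bigl(\lambda_2^{-1}(\bv{I}-\tfrac12\bv{D}+\tfrac12\bv{W})\bigr)$.

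The one genuine problem is your global-to-local conversion. You reject the Chernoff argument because it "only" gives $L=\tilde O\bigl(\tfrac1{\epsilon^3\delta^2\rho^2}\bigr)$ rather than the stated $\tilde O\bigl(\tfrac1{\epsilon^3\delta^3\rho^2}\bigr)$, but this has the inequality backwards: $L$ is an upper bound, so the smaller $\delta^{-2}$ bound is \emph{stronger} than the claimed $\delta^{-3}$ and proves the theorem a fortiori. Indeed the paper does exactly this: each node's expected number of interactions is $\tfrac{2(T+T')}{n}=\Omega(\log(n/\delta))$, so a Chernoff bound plus a union bound over the $n$ nodes shows every node has $O\bigl(\tfrac{T+T'}{n}\bigr)$ interactions with probability $1-\delta$, and this also matches \autoref{def:cc}, which asks that \emph{every} node's state be fixed after $L$ local interactions. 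Your Markov replacement, as sketched, does not deliver what you claim: with threshold $\tfrac{2(T+T')}{n\delta}$ the expected number of exceeding nodes is $\delta n$, so Markov only gives that at most $\epsilon n$ nodes exceed with probability $1-\delta/\epsilon$, not $1-\delta$; pushing the failure probability back to $\delta$ forces the threshold up to $\Theta\bigl(\tfrac{T+T'}{n\delta\epsilon}\bigr)$, i.e. an extra $1/\epsilon$ beyond the stated bound, and it in any case only covers all but $\epsilon n$ nodes. The fix is simply to keep the Chernoff/union-bound argument you discarded; with that substitution the proof is complete and matches the paper's.
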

\vspace{-.75em}

\end{shaded}
\begin{proof}

In the \modelpq\ the weight and degree matrices are:
\begin{align*}
\bv{W} = \frac{4}{n^2(p+q) - 2n p} \cdot  (\bv{C}^{(p,q)} - p\cdot  \bv{I}_{n\times n}) \hspace{1em}\text{ and }\hspace{1em} \bv{D} = \frac{2}{n} \cdot \bv{I}_{n \times n}.
\end{align*}
Thus, referring to the eigendecomposition of $\bv{C}^{(p,q)} $ shown in \eqref{spect_c}, the top eigenvector of $\bv{D}+\bv{W}$ is $\bv{v}_1 = \bv{1}_{n \times 1}/\sqrt{n}$ with corresponding eigenvalue:
$\lambda_1 = \frac{4}{n^2(p+q) - 2n p} \cdot \left  (\frac{n(p+q)}{2}-p\right )+ \frac{2}{n} = \frac{4}{n}.$
The second eigenvector is the scaled cluster indicator vector $\bv{v}_2 = \bs{\chi}/\sqrt{n}$ with eigenvalue 
$$\lambda_2= \frac{4}{n^2(p+q) - 2n p} \cdot \left  (\frac{n(p-q)}{2}-p\right )+ \frac{2}{n}  =\frac{4}{n} \cdot \frac{p}{p+\frac{n}{n-2}\cdot q}.$$ 
Finally, for all remaining eigenvalues of $\bv{D}+\bv{W}$, $\{\lambda_3,...,\lambda_n\}$, 
$\lambda_i = \frac{2}{n} - \frac{4p}{n^2(p+q) - 2n p}.$
We can bound the eigenvalue gaps:
\begin{align*}
\lambda_1-\lambda_2&\ge \frac{4}{n} - \frac{4}{n}\cdot  \frac{p}{p+q} = \frac{4q}{n(p+q)} & & & 
\lambda_2-\lambda_3&= \frac{2(p-q)}{n(p+q)-2p} \ge \frac{2(p-q)}{n(p+q)}
\end{align*}

Let $\rho = \min \left (\frac{q}{p+q}, \frac{p-q}{(p+q)} \right )$. We bound the mixing time of $\bv{W} + \bv{D}$ by noting that $\lambda_2(\bv{I} - 1/2\bv{D} + 1/2\bv{W}) \le 1-\frac{2q}{n(p+q)}$. Then using that $\log(1/x) \ge 1-x$ for all $x \in (0,1]$, $\log(\lambda_2^{-1}(\bv{I} - 1/2\bv{D} + 1/2\bv{W}) \ge \frac{2q}{n(p+q)} \ge \frac{2\rho}{ n}$.		
We then apply \autoref{cor:ojaAsync2} with $k = 2$, $\Lambda = \frac{4}{n} + \frac{4}{n} \frac{p}{p+\frac{n}{n-2}q} \le \frac{8}{n}$, $\bgap = \frac{4}{n} \cdot \min \left (\frac{q}{p+q}, \frac{p-q}{2(p+q)} \right ) \ge \frac{2\rho}{ n}$, and $\gamma_{mix} = \frac{2\rho}{ n}$. With these parameters
 we set, for sufficiently small $c_1$ and large $c_2,c_3$,
\begin{align*}
\hspace{-.5em}\eta =\frac{c_1 \epsilon^2 \delta^2 \cdot \rho}{\log^3 \left (\frac{n}{\epsilon \delta \rho} \right )},\hspace{.5em}\hspace{.5em}T = \frac{c_2 \cdot n \cdot \left (\log^3 \left (\frac{n}{\epsilon \delta \rho} \right ) + \frac{\log \left (\frac{n}{\epsilon \delta \rho} \right )}{\epsilon}\right)}{\epsilon^2 \delta^2 \rho^2},\hspace{.5em} T' = \frac{c_3\cdot n \cdot \left (\log \left (\frac{n}{\epsilon \delta \rho} \right ) + \frac{1}{\epsilon} \right )}{\rho^2}
\end{align*}
where to bound $T'$ we use that $\frac{\lambda_1(\bv{D}+\bv{W})}{\bgap} \le \frac{2}{\rho}$.
Let $\bv{\hat V} \in \R^{n \times k}$ be given by $(\bv{\hat V})_{u,j} = {\hat v}_u^{(j)}$ where ${\hat v}_u^{(j)}$ are the states of $\texttt{AsynchOja}(T,T',\eta)$ and let $\bv{\hat v}_2$ be the second column of $\bv{\hat V}$. With these parameters, \autoref{cor:ojaAsync2} gives with probability  $\ge 1-\delta$ that
$\left |\bv{\hat v}_2^T \bv{v}_{2}\right| \ge 1-\epsilon$ and $\norm{\bv{\hat v}_2}_2 \le 1+\epsilon.$

 Applying \autoref{approx_eig_starting_point} then gives the theorem if we adjust $\epsilon$ by a factor of $1/5$. Recall that the second eigenvector of $\bv{D}+\bv{W}$ is identical to that of $\bv{C}^{(p,q)}$. Additionally, in expectation, each node is involved in $L = \frac{2(T+T')}{n}$ interactions. This bound holds for all nodes within a factor $2$ with probability $1-\delta$ by a Chernoff bound, since $L = \Omega(\log(n/\delta))$. We can union bound over our two failure probabilities and adjust $\delta$ by $1/2$ to obtain overall failure probability $\le \delta$.
 \end{proof}

\subsection{Community Detection in the \modelg}

In the \modelg, nodes communicate using a random graph which is equal to the communication graph in the \modelpq\  \emph{in expectation}. Using an approach similar to \cite{spielmanNotes}, which is a simplifies the perturbation method used in \cite{959929}, we can prove that in the \modelg\ $\bv{W}$ is a small perturbation of $\bv{C}^{(p,q)}$ and so the second eigenvector of $\bv{D}+\bv{W}$ approximates that of $\bv{C}^{(p,q)}$ -- i.e., the cluster indicator vector $\bs{\chi}$.
We defer this analysis to \autoref{app:gnp}, stating the main result here:

\begin{shaded}
\vspace{-.75em}
\begin{theorem}[$\epsilon$-approximate community  detection: \modelg]\label{thm:modelg}
Consider \ripref{alg:commDetec} in the \modelg. Let $\rho = \min \left (\frac{q}{p+q}, \frac{p-q}{p+q} \right )$. For sufficiently small constant $c_1$ and sufficiently large $c_2$ and $c_3$ let
\begin{align*}
\hspace{-.5em}\eta =\frac{c_1 \epsilon^2 \delta^2 \rho}{\log^3 \left (\frac{n}{\epsilon \delta \rho} \right )},\hspace{.5em}T = \frac{c_2 n \left (\log^3 \left (\frac{n}{\epsilon \delta \rho} \right ) + \frac{\log \left (\frac{n}{\epsilon \delta \rho} \right )}{\epsilon}\right)}{\epsilon^2 \delta^2 \rho^2},\hspace{.5em}T' = \frac{c_3 n \left (\log \left (\frac{n}{\epsilon \delta \rho} \right ) + \frac{1}{\epsilon} \right )}{\rho^2}.
\end{align*}
If $\frac{\min\left  [q, p-q \right]}{\sqrt{p+q}} \ge \frac{c_4\sqrt{\log (n/\delta)}}{\epsilon \sqrt{n}}$ for large enough constant $c_4$,
then, with probability  $1-\delta$, after ignoring $\epsilon n$ nodes, all remaining nodes in $V_1$ terminate in some state $s_1 \in \{-1,1\}$, and all  nodes in $V_2$ terminate in state $s_2 = -s_1$. Supressing polylogarithmic factors, the total number of global rounds and local rounds required are: $T+T' = \tilde O \left ( \frac{n}{\epsilon^3 \delta^2 \rho^2} \right )$ and $L = \tilde O \left (\frac{1}{\epsilon^3 \delta^3  \rho^2} \right)$.
\end{theorem}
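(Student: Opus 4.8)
The plan is to reduce the \modelg\ to the \modelpq\ by showing that, with high probability over the random graph, the communication matrix $\bv{D}+\bv{W}$ is a small spectral perturbation of a deterministic matrix whose top two eigenvectors are exactly $\bv{1}_{n\times1}/\sqrt{n}$ and $\bs{\chi}/\sqrt{n}$, so that \autoref{cor:ojaAsync2} applies with the same $\eta,T,T'$ (up to the constants $c_1,c_2,c_3$) as in \autoref{thm:modelp}, and then to finish with a Davis--Kahan bound and \autoref{approx_eig_starting_point} exactly as before.

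Concretely, let $\bv{A}$ be the symmetric zero-diagonal adjacency matrix of the sampled graph and $m$ its number of edges, so $\bv{W}=\bv{A}/m$, $\bv{D}=\mathrm{diag}(\deg_1,\dots,\deg_n)/m$, and $\E[\bv{A}]=\bv{C}^{(p,q)}-p\bv{I}_{n\times n}$. I would first record three concentration facts, each holding with probability $\ge 1-\delta/4$: (i) $m=\tfrac{n^2(p+q)}{4}(1\pm o(1))$ by a Chernoff bound on a sum of $\binom{n}{2}$ independent Bernoullis; (ii) $\max_u|\deg_u-\E\deg_u|=O(\sqrt{n(p+q)\log(n/\delta)})$ by Chernoff plus a union bound over the $n$ nodes, so $\bv{D}$ equals a multiple of the identity plus a diagonal matrix $\bs{\Delta}$ with $\norm{\bs{\Delta}}_2=O(\tfrac{\sqrt{\log(n/\delta)}}{n\sqrt{n(p+q)}})$; and (iii) $\norm{\bv{A}-\E[\bv{A}]}_2=O(\sqrt{n(p+q)\log(n/\delta)})$ by the matrix Bernstein inequality (equivalently, the standard bound on the spectral norm of a random symmetric matrix with independent bounded entries). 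Collecting all multiples of the identity into a single $\bar d\,\bv{I}$ with $\bar d=\Theta(1/n)$, these give $\bv{D}+\bv{W}=\bar d\,\bv{I}+\tfrac1m\bv{C}^{(p,q)}+\bv{E}$ with $\norm{\bv{E}}_2=O(\tfrac{\sqrt{\log(n/\delta)}}{n\sqrt{n(p+q)}})=:\beta$ (the $-p\bv{I}/m$ correction has norm $O(1/n^2)\ll\beta$ and is absorbed).

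Next I would read the spectrum of the clean matrix $\bar d\,\bv{I}+\tfrac1m\bv{C}^{(p,q)}$ off of \eqref{spect_c}: its top two eigenvectors are $\bv{v}_1=\bv{1}_{n\times1}/\sqrt n$ and $\bv{v}_2=\bs{\chi}/\sqrt n$, with $\lambda_1-\lambda_2=\tfrac{nq}{m}=\Theta(\tfrac{q}{n(p+q)})$, $\lambda_2-\lambda_3=\tfrac{n(p-q)}{2m}=\Theta(\tfrac{p-q}{n(p+q)})$, $\lambda_1+\lambda_2=\Theta(1/n)$, and---bounding the second-smallest eigenvalue of the Laplacian $\bv{D}-\bv{W}$ as in \autoref{thm:modelp}---$\log(\lambda_2^{-1}(\bv{I}-\tfrac12\bv{D}+\tfrac12\bv{W}))=\Theta(\tfrac{q}{n(p+q)})$. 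Since $\rho=\tfrac{\min(q,p-q)}{p+q}$ in every regime, the clean matrix has $\Lambda=\Theta(1/n)$, $\bgap=\Theta(\rho/n)$, $\gamma_{mix}=\Theta(\rho/n)$. The hypothesis $\tfrac{\min[q,p-q]}{\sqrt{p+q}}\ge\tfrac{c_4\sqrt{\log(n/\delta)}}{\epsilon\sqrt n}$ is exactly what forces $\beta\le\epsilon\,\bgap/C$ for any prescribed constant $C$ once $c_4$ is large; hence by Weyl's inequality the same bounds $\Lambda=\Theta(1/n)$, $\bgap=\Theta(\rho/n)$, $\gamma_{mix}=\Theta(\rho/n)$ hold for the true $\bv{D}+\bv{W}$, which is why the displayed $\eta,T,T'$ coincide with those of \autoref{thm:modelp}. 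By the Davis--Kahan $\sin\theta$ theorem, the true second eigenvector $\bv{v}_2^{\bv{D}+\bv{W}}$ of $\bv{D}+\bv{W}$ satisfies $|\inprod{\bv{v}_2^{\bv{D}+\bv{W}},\,\bs{\chi}/\sqrt n}|\ge 1-O((\beta/\bgap)^2)\ge 1-\epsilon/C'$.

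Finally, conditioning on this good event for the graph, I would invoke \autoref{cor:ojaAsync2} with $k=2$ over the scheduler's randomness: with probability $\ge 1-\delta/4-e^{-\Theta(n)}$ it produces $\bv{\hat v}_2$ with $|\inprod{\bv{\hat v}_2,\,\bv{v}_2^{\bv{D}+\bv{W}}}|\ge 1-\epsilon/C''$ and $\norm{\bv{\hat v}_2}_2\le 1+\epsilon/C''$. A short triangle-inequality argument for the angles between $\bv{\hat v}_2$, $\bv{v}_2^{\bv{D}+\bv{W}}$ and $\bs{\chi}/\sqrt n$ then yields $|\inprod{\bv{\hat v}_2,\,\bs{\chi}/\sqrt n}|\ge 1-\epsilon$ and $\norm{\bv{\hat v}_2}_2\le 1+\epsilon$ after adjusting constants; since $\bs{\chi}/\sqrt n$ is the second eigenvector of $\bv{C}^{(p,q)}$, \autoref{approx_eig_starting_point} gives that $\sign(\bv{\hat v}_2)$ misclassifies at most $5\epsilon n$ nodes, and rescaling $\epsilon\mapsto\epsilon/5$ yields the claimed $\epsilon n$. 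The local-round bound is identical to \autoref{thm:modelp}: each node takes $2(T+T')/n$ interactions in expectation, and since this is $\Omega(\log(n/\delta))$ a Chernoff plus union bound keeps every node within a factor $2$; a final union bound over the graph event, the Oja-success event and the interaction-count event, with $\delta$ rescaled by a constant, finishes the proof. I expect the main obstacle to be item (iii)---obtaining the $O(\sqrt{n(p+q)\log(n/\delta)})$ spectral-norm concentration and then verifying that the Weyl/Davis--Kahan perturbed quantities reproduce the weighted-model parameters tightly enough for \autoref{cor:ojaAsync2} to apply with the stated $\eta,T,T'$; threading the non-regular degree fluctuations $\bs{\Delta}$ through is the extra bookkeeping relative to the \modelpq\ case, while the angle combination and $\epsilon$-rescaling are routine.
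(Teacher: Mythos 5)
Your proposal is correct and follows the same proof skeleton as the paper: show that $\bv{D}+\bv{W}$ in the \modelg\ concentrates around a matrix whose second eigenvector is $\bs{\chi}/\sqrt{n}$, use Davis--Kahan under exactly the hypothesis on $\frac{\min[q,p-q]}{\sqrt{p+q}}$, extract $\Lambda=\Theta(1/n)$, $\bgap=\Theta(\rho/n)$, $\gamma_{mix}=\Theta(\rho/n)$, invoke \autoref{cor:ojaAsync2} with $k=2$, combine angles by triangle inequality, finish with \autoref{approx_eig_starting_point}, and handle local rounds and failure events by Chernoff plus union bounds, just as in \autoref{thm:modelp}. The one place you genuinely diverge is the concentration step. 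You split $\bv{D}+\bv{W}$ into adjacency fluctuations $\bv{A}-\E[\bv{A}]$, degree fluctuations $\bs{\Delta}$, and edge-count fluctuations in $m$, bound each separately (matrix Bernstein, Chernoff plus union bound, Chernoff), and recombine via Weyl; the paper (\autoref{gnpConcentration}) instead writes the expected matrix as a Gram matrix $\bv{B}\bv{B}^T=\bv{C}^{(p,q)}+\tfrac{n(p+q)-2p}{2}\bv{I}$ and observes that $\bv{D}+\bv{W}$ is \emph{exactly} a rescaling of the subsampled Gram matrix $\bv{\tilde B}\bv{\tilde B}^T$, so a single matrix Bernstein application to the independent rank-one differences $\bv{\tilde b}_k\bv{\tilde b}_k^T-\bv{b}_k\bv{b}_k^T$ controls the adjacency and degree parts simultaneously, with only the scale $m$ handled by a separate Chernoff bound. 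Both routes give error $\beta=O\bigl(\tfrac{\sqrt{\log(n/\delta)}}{n^{3/2}\sqrt{p+q}}\bigr)$ relative to a gap $\Theta(\rho/n)$ and hence the same threshold $\frac{\min[q,p-q]}{\sqrt{p+q}}\gtrsim\frac{\sqrt{\log(n/\delta)}}{\epsilon\sqrt{n}}$; your version requires the extra (routine) bookkeeping for $\bs{\Delta}$ and the additive $\log(n/\delta)$ term in matrix Bernstein (dominated here since the hypothesis forces $n(p+q)\gtrsim\log(n/\delta)/\epsilon^2$), while the paper's Gram-matrix view buys a cleaner one-shot bound at the cost of introducing the auxiliary matrix $\bv{B}$.
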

\vspace{-.75em}
\end{shaded}
If for example, $p,q = \Theta(1)$ and thus the $G(n,p,q)$  graph is dense, we can recover  the communities with probability $1-\delta$ up to $O(1)$ error as long as $q \le p - c\sqrt{\log (n/\delta)/n}$ for sufficiently  large constant $c$. Alternatively, if $p,q = \Theta \left  (\log(n/\delta)/n\right  )$, so the $G(n,p,q)$  graph is sparse, we require $q \le cp$ for sufficiently small $c$.

\section{Cleanup Phase}\label{sec:cleanup}
After we apply \autoref{thm:modelg} (respectively, \autoref{thm:modelp}) an $\epsilon$-fraction of nodes are incorrectly clustered.
The goal of this section is to provide a simple algorithm that improves this clustering so that \emph{all nodes} are labeled correctly after a small number of rounds.

For the \modelpq, doing so is straightforward. After running \ripref{alg:ojaGossip} and selecting a label, each time a node communicates in the future it records the chosen label of the node it communicates with. Ultimately, it changes its label to the majority of labels encountered. If $\epsilon$ is small enough so $p(1-\epsilon) > q+\epsilon p$, this majority tends towards the node's correct label.
\ifArxiv
\footnote{This bound might look crude, but it is in fact the strongest we can assume, since our techniques 
only give an upper bound on the fraction of incorrectly labeled nodes per clusters and it is conceivable that all nodes of the other cluster are already correctly labeled in which case the bound is tight.}
\fi
The number of required rounds for the majority to be correct, with good probability for all nodes, is a simple a function of $p,q$, and $\epsilon$.

The \modelg is more difficult. \autoref{thm:modelg} does not guarantee how incorrectly labeled nodes are distributed: it is possible that 
a majority of a node's neighbors fall into the set of $\epsilon n$ ``bad nodes''. In that case, even after infinitely many rounds of communication, the majority label encountered will not tend towards the node's correct identity. 

As a remedy, we introduce a phased algorithm (\ripref{alg:clean}) where each node updates its label to the majority of labels seen during a phase. We show that in each phase the fraction of incorrectly labeled nodes decreases by a constant factor. Our analysis establishes a graph theoretic bound on the external edge density of most subsets of nodes. Specifically, for all subsets $S$ below a certain size, 
we show that, with high probability, there are at most $|S|/3$ nodes which have enough connections to $S$ so that if an adversary gave all nodes in $S$ incorrect labels, it could cause these nodes to have an incorrect majority label. This bound guarantees that at most $|S|/3$ bad labels `propagate' to the next phase of the algorithm.
\medskip
\ifArxiv
Our property guarantees that $2|S|/3$ nodes will 
have a neighborhood in which a significant fraction of the nodes are correctly labeled. Hence, taking enough labels and taking Union bound over all nodes guarantees that the set of incorrectly nodes decreases by a constant factor w.h.p..
\fi

We analyze \ripref{alg:clean} in the \modelpq in \autoref{sec:cleanpq} (in this case we just set $k =1$) and \modelg in \autoref{sec:cleang}.
\begin{algorithm}[H]
\caption{\algoname{Cleanup phase} (pseudocode for node $u$) \newline
{\bf Input:} Number of phases $k$ and number of rounds per phase $r$.\newline
{\bf Output:} Label ${\hat \chi}_u \in \{-1,1\}$ 
}
\begin{algorithmic}[1]
\For{Phase $1$ to $k$}
\For{Round $i=1$ to $r$}
\State{$S_i := {\hat \chi}_v$, where $ {\hat \chi}_v$ denotes the $i^\text{th}$ sample of node  $u$. }
\EndFor

\State{${\hat \chi}_u:=1$ if $\sum_i^r S_i \geq 0$, ${\hat \chi}_u:=-1$ otherwise.}
\EndFor

\end{algorithmic}
\label{alg:clean}
\end{algorithm}

\begin{shaded}
\vspace{-.75em}
\begin{theorem}\label{thm:cleanuppq}
Consider the \modelpq.
Assume that a fraction of at most $\epsilon \leq 1/64$ of the nodes are incorrectly clustered after \ripref{alg:ojaGossip}. As long as $p'=(1-\epsilon)p$ and $q'=q+\epsilon p$ satisfy $p' > q'$, \ripref{alg:clean} ensures that all nodes are correctly labeled with high probability after
 $O(\frac{p\ln n}{  (\sqrt{p'}-\sqrt{q'})^2 })$ local rounds.
In particular, for $q\leq p/2$ and $\epsilon < 1/8$, the number of local rounds required is $O( \log n)$.
\end{theorem}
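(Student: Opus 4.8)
The plan is to show that a \emph{single} phase of \ripref{alg:clean} (i.e.\ $k=1$), run for $r=\Theta\!\left(\frac{p\ln n}{(\sqrt{p'}-\sqrt{q'})^2}\right)$ rounds, already relabels \emph{every} node correctly with high probability; since each node participates in exactly $r$ interactions per phase, the stated bound on local rounds then follows. The structural fact that makes this work — and that makes the \modelpq\ much easier than the \modelg\ treated afterward — is that the model is regular in expectation: every node of $V_1$ samples a communication partner from the \emph{same} distribution (and likewise every node of $V_2$), so a single value of $r$ works for all nodes via a union bound, with no need to iterate over phases.

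First I would pin down the per-sample bias. Fix $u\in V_1$ and condition on $u$ being chosen in a round; its partner is node $v$ with probability $w(u,v)/\big(\sum_{v'\neq u}w(u,v')\big)=w(u,v)/\big((\tfrac n2-1)p+\tfrac n2 q\big)$. The sampled label $\hat\chi_v$ equals the \emph{wrong} value $-1$ precisely when $v\in V_1$ is currently mislabeled or $v\in V_2$ is currently correctly labeled. Since at most $\epsilon n$ nodes are mislabeled in total, the weight from $u$ to such partners is at most $q\cdot\tfrac n2+p\cdot\epsilon n$, while the weight from $u$ to correctly labeled members of $V_1$ is at least $p\cdot(\tfrac n2-1-\epsilon n)$. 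Dropping the $\pm O(1)$ additive terms (harmless for $n$ large), this gives $\Pr[\hat\chi_v=+1]\ge\frac{p'}{p'+q'}$ and $\beta_u:=\Pr[\hat\chi_v=-1]\le\frac{q'}{p'+q'}$ with $p'=(1-\epsilon)p$ and $q'=q+\epsilon p$ (up to the precise constant in front of $\epsilon$, which one tracks carefully; the worst case is all mislabeled nodes lying in $u$'s own cluster). The identical bound holds for $u\in V_2$ by symmetry, and it holds for every node whether or not that node is itself currently mislabeled. Under the hypothesis $p'>q'$ we get $\mathbb{E}[\hat\chi_v]=1-2\beta_u\ge\frac{p'-q'}{p'+q'}>0$.

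Because labels are \emph{frozen} throughout a phase, the $r$ samples collected by any fixed node $u$ are i.i.d., and by the bias bound its majority vote disagrees with $u$'s true label only if a sum of $r$ i.i.d.\ $\{-1,1\}$ variables of mean $\ge\frac{p'-q'}{p'+q'}$ has the wrong sign; Hoeffding's inequality bounds this probability by $\exp\!\left(-\tfrac12 r\big(\tfrac{p'-q'}{p'+q'}\big)^2\right)$, uniformly over the $n$ nodes. Taking $r=\Theta\!\left(\frac{(p'+q')^2\ln n}{(p'-q')^2}\right)$ with a large enough constant makes this at most $n^{-c}$, and a union bound over all nodes shows that after one phase every node holds its correct label with high probability. (If $k>1$, each subsequent phase begins from a fully correct labeling, where the bias is the even larger $1-\tfrac{2q}{p+q}>0$, and the same estimate shows correctness is preserved; the total is then $\Theta(kr)$.) It remains to rewrite the round count: since $p'-q'=(\sqrt{p'}-\sqrt{q'})(\sqrt{p'}+\sqrt{q'})$ and $(\sqrt{p'}+\sqrt{q'})^2=\Theta(p'+q')$, and since $p'+q'=\Theta(p)$ (using $q'<p'$ and $\epsilon\le\tfrac1{64}$), we get $\frac{(p'+q')^2}{(p'-q')^2}=\Theta\!\left(\frac{p'+q'}{(\sqrt{p'}-\sqrt{q'})^2}\right)=\Theta\!\left(\frac{p}{(\sqrt{p'}-\sqrt{q'})^2}\right)$, which is exactly the claimed bound. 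For the special case $q\le p/2$, $\epsilon<\tfrac18$, we have $p'\ge\tfrac78 p$ and $q'\le\tfrac58 p$, hence $\sqrt{p'}-\sqrt{q'}\ge\big(\sqrt{7/8}-\sqrt{5/8}\big)\sqrt p=\Omega(\sqrt p)$ and $\frac{p}{(\sqrt{p'}-\sqrt{q'})^2}=O(1)$, so $O(\log n)$ local rounds suffice.

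The part I expect to require the most care is the bias computation: being precise about the fact that the $\epsilon n$ mislabeled nodes may be split arbitrarily between the two clusters (the adversarial case being all of them inside $u$'s own cluster) and that $u$ itself may be mislabeled, so that the ``$1-\epsilon$'' and ``$q+\epsilon p$'' appearing in $p',q'$ come out with the intended constants. Everything after that is a routine Hoeffding bound, a union bound over the $n$ nodes, and the elementary identity converting $(p'+q')^2/(p'-q')^2$ into $p/(\sqrt{p'}-\sqrt{q'})^2$.
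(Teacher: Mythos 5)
Your proposal is correct and takes essentially the same route as the paper's own proof (Lemma \ref{lem:onephasepq}): fix a node, bound the per-sample bias toward the wrong label by roughly $q'/(p'+q')$ with $p'=(1-\epsilon)p$, $q'=q+\epsilon p$, apply a concentration bound to the node's samples within one frozen phase, and union bound over all $n$ nodes, with the same (flagged) constant-factor looseness in how the $\epsilon n$ mislabeled nodes enter $p'$ and $q'$. The only cosmetic difference is that you condition on the node being selected and use a single additive Hoeffding bound on the signed sum of its local samples (which avoids the paper's case split between $q'\ge p'/12$ and $q'\le p'/12$ and yields the local-round count directly), whereas the paper applies two multiplicative Chernoff bounds to the counts of agreeing and disagreeing interactions over global rounds and then converts to local rounds.
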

\vspace{-.75em}
\end{shaded}

\begin{shaded}
\vspace{-.75em}
\begin{theorem}\label{thm:cleanupg}
Consider the \modelg.
Let $\Delta = \frac{p}{2}-\frac{q}{2}-\sqrt{12p\ln n/n}-\sqrt{12q\ln n/n}$.  
Assume that $\Delta =\Omega(\ln n /n)$ and at most $\epsilon \leq \Delta/24p$ nodes are incorrectly clustered after \ripref{alg:ojaGossip}. As long as $p''= \frac{p}{2}-\sqrt{\frac{6 p\ln n}{n}}  -\frac{\Delta}{12}$ and $q''=\frac{q}{2} + \sqrt{\frac{6 q\ln n}{n}}  +\frac{\Delta}{12}$ satisfy $p'' > q''$, 
\ripref{alg:clean} ensures that all nodes are correctly labeled with high probability after
 $O(\frac{p\ln^2 n}{  (\sqrt{p''}-\sqrt{q''})^2 })$ local rounds.
In particular, for $q\leq p/2$ the number of local rounds required is $O( \log^2 n)$.
\end{theorem}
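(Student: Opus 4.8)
The plan is to follow the template of the \modelpq analysis behind \autoref{thm:cleanuppq}, replacing the exact regularity of the complete weighted graph by a graph-theoretic concentration statement for the stochastic block model. Note first that in the \modelg a node communicates with a uniformly random neighbor whenever it is scheduled, so the $r$ samples collected by a node in one phase of \ripref{alg:clean} are (up to harmless conditioning on which node is active) i.i.d.\ uniform over its neighborhood. As a preliminary I would establish, by Chernoff bounds plus a union bound over the $n$ nodes, a \emph{degree-regularity} event: with high probability every $v\in V_i$ has same-cluster degree $|N(v)\cap V_i|\ge\tfrac n2 p-\tfrac n2\sqrt{6p\ln n/n}$ and cross-cluster degree $|N(v)\cap V_{3-i}|\le\tfrac n2 q+\tfrac n2\sqrt{6q\ln n/n}$, and condition on this. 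I will use repeatedly that the hypothesis $\Delta=\Omega(\ln n/n)$ with a large constant forces $p=\Omega(\ln n/n)$ with a large constant, so $\sqrt{p\ln n/n}=o(p)$ and these degree bounds are multiplicatively tight.

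The core is a purely graph-theoretic lemma. Fix a threshold $\theta=\Theta(\Delta n)$ with a small explicit constant (e.g.\ $\theta=\Delta n/12$). Call $v$ \emph{endangered by} a set $S$ if $|N(v)\cap S|\ge\theta$, and let $\mathrm{End}(S)$ be the set of such nodes. I claim that with high probability over $G\sim G(n,p,q)$, \emph{every} $S$ with $|S|\le\epsilon n$ has $|\mathrm{End}(S)|\le|S|/3$. The proof fixes $S$ of size $s$, observes that the events $\{v\in\mathrm{End}(S)\}$ for $v\notin S$ depend on disjoint edge sets and are hence independent, and bounds each by $\beta_s\eqdef\Pr[\mathrm{Bin}(s,p)\ge\theta]$; since $s\le\epsilon n\le\Delta n/(24p)$ we have $\E[\mathrm{Bin}(s,p)]=sp\le\theta/2$, so the multiplicative Chernoff bound makes $\beta_s$ smaller than any prescribed polynomial $n^{-M}$ uniformly in $s$, provided $\Delta n=\Omega(\ln n)$ with a sufficiently large constant (one must use the sharp $e^{\delta}/(1+\delta)^{1+\delta}$ form of Chernoff, and for small $s$ in the sparse regime also the fact that $\theta$ is then a large multiple of $sp$). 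A second Chernoff bound on $|\mathrm{End}(S)|$, a sum of independent indicators, gives $\Pr[\,|\mathrm{End}(S)|>s/3\,]\le(3en\beta_s/s)^{s/3}$, and a union bound over the $\binom ns\le(en/s)^s$ sets of each size and then over $s=1,\dots,\epsilon n$ makes the total failure probability $o(1)$. I expect getting this last union bound to converge simultaneously over \emph{all} subset sizes to be the main obstacle; it is exactly what fixes how large the constants in the hypotheses $\Delta=\Omega(\ln n/n)$ and $\epsilon\le\Delta/(24p)$ must be.

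Granting the lemma, one phase shrinks the bad set by a factor $3$. Condition on the two events above, and let $B$ be the set of mislabeled nodes at the start of a phase; $|B|\le\epsilon n$ holds initially by hypothesis and is preserved inductively since $|B|$ only decreases. Take $v\notin\mathrm{End}(B)$, say $v\in V_1$: among its neighbors the number currently voting $+1$ is at least $|N(v)\cap V_1|-|N(v)\cap B|\ge p''n$ and the number voting $-1$ is at most $|N(v)\cap V_2|+|N(v)\cap B|\le q''n$, where the $\Delta/12$ slack in $p''$ and $q''$ (together with the gap between the $\sqrt{6p\ln n/n}$ used in the degree bound and the $\sqrt{12p\ln n/n}$ appearing in $\Delta$) is precisely what absorbs $|N(v)\cap B|\le\theta$. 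Since $p''>q''$ by hypothesis, a single sample of $v$ reports its true label with probability $\ge\tfrac12+\gamma$ for $\gamma=\Theta((p''-q'')/(p+q))$, so by a Chernoff bound $r=\Theta(\log n/\gamma^2)$ samples per phase make $v$'s majority correct with probability $\ge1-n^{-10}$; using $p''=\Theta(p)$ (again from the $\Delta$-hypothesis) one has $\gamma^2=\Theta((p''-q'')^2/p^2)=\Theta((\sqrt{p''}-\sqrt{q''})^2/p)$, hence $r=O\!\bigl(p\log n/(\sqrt{p''}-\sqrt{q''})^2\bigr)$. A union bound over the $n$ nodes shows every node outside $\mathrm{End}(B)$ is correct after the phase, so the new bad set is contained in $\mathrm{End}(B)$ and has size at most $|B|/3$.

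Iterating, after $k=\lceil\log_3(\epsilon n)\rceil+1=O(\log n)$ phases the bad set is empty; union-bounding the $O(\log n)$ per-phase failures with the one-time failures above gives overall failure probability $n^{-\Omega(1)}$, i.e.\ with high probability. Since each node performs $r$ local rounds per phase, the total is $L=kr=O\!\bigl(p\log^2 n/(\sqrt{p''}-\sqrt{q''})^2\bigr)$, as claimed. For $q\le p/2$ the $\Delta$-hypothesis gives $\Delta=\Theta(p)$, and then $p''$ and $q''$ are both $\Theta(p)$ with $p''-q''=\Theta(p)$, so $(\sqrt{p''}-\sqrt{q''})^2=\Theta(p)$ and $L=O(\log^2 n)$. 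One technical point remains: phases counted in local rounds are not globally synchronized, so a sample may report a neighbor's label from an adjacent phase; I would dispatch this exactly as in the proof of \autoref{thm:cleanuppq}, either by invoking a global clock (permitted by the model, which aligns phases) or by bounding local-clock drift and charging the few stale samples to the $\Delta/12$ slack.
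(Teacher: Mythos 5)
Your overall route is the same as the paper's: a graph-theoretic ``endangered set'' lemma (the paper's smoothness property, \autoref{lem:smooth}) showing that every small set $S$ has at most $|S|/3$ nodes with $\ge \Delta n/12$ edges into $S$, followed by a per-phase majority-vote analysis conditioned on degree regularity (\autoref{lem:exepcteddrop}), iterated over $O(\log n)$ phases. However, there is one concrete gap in your proof of the key lemma. You need the bound $|\mathrm{End}(S)|\le |S|/3$ for \emph{all} endangered nodes, including those inside $S$: in the phase analysis the new bad set is contained in $\mathrm{End}(B)$, and a mislabeled node of $B$ that has many neighbors in $B$ may fail to correct itself, so if all of $S$ could be endangered by itself the factor-$3$ shrinkage would not follow. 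But your argument establishes the bound only for $v\notin S$: you invoke independence of the events $\{v\in\mathrm{End}(S)\}$ because they ``depend on disjoint edge sets,'' which is true only for $v\notin S$, and then you apply a Chernoff bound to $|\mathrm{End}(S)|$ as ``a sum of independent indicators.'' For $u\in S$ the indicators $Y_u^S$ share the edges internal to $S$ and are correlated, so this step fails as written; the paper explicitly flags this and handles it separately.

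The fix is short and is exactly what the paper does: split the budget into $|S|/6$ endangered nodes outside $S$ (your independence argument) and $|S|/6$ inside $S$, and bound the latter by a counting argument --- if at least $|S|/6$ nodes of $S$ each had $\ge \Delta n/12$ edges into $S$, then $E(S,S)\ge \frac{\Delta n}{12}\cdot\frac{|S|}{6}\cdot\frac{1}{2}$, whereas $\E[E(S,S)]\le p|S|^2/2$, which for $|S|\le \Delta n/(24p)$ is smaller by a large factor, so Chernoff gives a failure probability of the same exponential order $e^{-\Omega(|S|\Delta n)}$ that survives the union bound over subsets. With that addition your argument goes through; the remaining ingredients (degree regularity with the $\Delta/12$ slack absorbing the $\le \Delta n/12$ bad neighbors, $r=O\bigl(p\log n/(\sqrt{p''}-\sqrt{q''})^2\bigr)$ local samples per phase via $(p''-q'')^2/p^2=\Theta\bigl((\sqrt{p''}-\sqrt{q''})^2/p\bigr)$, and the $O(\log n)$ phases) match the paper's proof, with your accounting done directly in local rounds rather than the paper's global-round normalization by $|E|$.
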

\vspace{-.75em}
\end{shaded}
Note that if $p-q =\Omega(\sqrt{\log n/n})$, then  $\Delta$ simplifies to $\Delta=\Theta(p-q)$. 
Incidentally,  $p-q =\Omega(\sqrt{\log n/n})$ is sometimes tight because, in this regime, clustering correctly can be infeasible: some nodes will simply have more neighbors in the opposite cluster. Consider for example when $p=1/2+ \sqrt{\ln n/(10n)}$ and $q=1/2$. 
%
\bibliographystyle{plainurl}

\bibliography{distributedClustering}

\appendix

\section{Oja's Error Bound}\label{app:oja}

In this section we give a full proof of \autoref{thm:oja}, restated below:
\begin{reptheorem}{thm:oja}
Let $\bv{M} \in\mathbb{R}^{n \times n}$ be a PSD matrix satisfying: $\frac{\sum_{i=1}^k \lambda_i(\bv{M})}{C} \le \Lambda$, and $\frac{\lambda_k(\bv{M}) - \lambda_{k+1}(\bv{M})}{C} \ge \gap$ for some bounds $\Lambda,\gap$. For any $\epsilon,\delta \in (0,1]$, let $\xi \eqdef \frac{n}{\delta \epsilon \cdot \gap}$, $\eta = \frac{c_1\epsilon^2 \cdot \gap \cdot \delta^2}{C\Lambda k \log^3 \xi}$ for sufficiently small $c_1,$ and $T = \frac{c_2\cdot(\log \xi+1/\epsilon)}{C\cdot\gap \cdot \eta}$ 
for sufficiently large $c_2$. Then \ripref{alg:oja} run with step size $\eta$ satisfies, with probability $\ge 1 -\delta$, for all $t \in [T-\frac{c_2/\epsilon}{2C\gap \cdot \eta},T+\frac{c_2/\epsilon}{2C\gap \cdot \eta}]$, letting $\bv{\tilde V}_t = \orth(\bv{Q}_t)$:
 $$\norm{\bv{Z}^T \bv{\tilde V}_t}_F^2 \le \epsilon.$$
 where $\bv{Z}$ is an orthonormal basis for the bottom $n-k$ eigenvectors of $\bv{M}$.
\end{reptheorem}

\begin{proof}
We first note that Theorem 1 of \cite{allen2016first} requires $\Pr_{\bv{x} \sim \dist}[\norm{\bv{x}}_2^2 \le 1] = 1$, while \autoref{thm:oja} allows vectors with norm up to some bound $C$. It is clear that this suffices since we scale the step size $\eta$, along with the $\gap$ and $\Lambda$ parameters by a factor of $\frac{1}{C}$ as compared to their definitions in \cite{allen2016first}. This translates to applying Theorem 1 to $\frac{1}{C} \bv{M}$ -- which has identical eigenvectors of $\bv{M}$. Thus, for the remainder of the proof we relable $\bv{M} = \frac{1}{C} \bv{M}$ and $\eta = C \eta$. 

We next note that 
Theorem 1 of \cite{allen2016first} sets $\Lambda = \sum_{i=1}^k \lambda_i(\bv{M})$, and $\gap = \lambda_k(\bv{M}) - \lambda_{k+1}(\bv{M})$, while in \autoref{thm:oja} we just require $\Lambda$ and $\gap$ to be bounds on the respective quantities. It is not hard to see that the theorem still holds with these  bounds. Specifically, the proof of Theorem 1 follows from the proof of Theorem 2 with parameter $\rho$ set to any value $\gap \le \lambda_k(\bv{M}) - \lambda_{k+1}(\bv{M})$. $\Lambda$ is also only used as an upper bound on $\sum_{i=1}^k \lambda_i(\bv{M})$ in the proof of Lemma iii.K.1 and the subsequent proofs of Lemma Main 4 and Lemma Main 5. 

Finally, we must argue that Theorem 5 follows from the proof of Theorem 1, even though our version of Oja's algorithm uses a fixed step size $\eta$, rather than a step size which changes by round. 

Even in our fixed $\eta$ setting we employ the analysis of \cite{allen2016first}, which considers three epochs of rounds.
Let $\xi \eqdef \frac{n}{\delta \epsilon \cdot \gap}$.
Set $\eta = \frac{c_1\epsilon^2  \cdot \gap \cdot \delta^2}{\Lambda k \log^3 \xi}$,
$T_0 = \frac{c_2\log \xi}{\gap \cdot \eta}$, $T_1 = \frac{c_3}{\gap \cdot \eta}$, and $T_2 = \frac{c_4 T_1}{\epsilon}$ for some constants $c_1,c_2,c_3,c_4$. Overall we have $T = T_0 + T_1 + T_2 =  O \left (\frac{\Lambda k \log^3 \xi \cdot (\log \xi + 1/\epsilon)}{\epsilon^2 \cdot \gap^2 \cdot \delta^2} \right )$.

Theorem 1 of \cite{allen2016first} follows from Theorem 2. The proof of this theorem first invokes Lemma Main 4. For sufficiently  small $q = \poly(\xi)$, $\Xi_{\bv{Z}} = \Theta \left ( \frac{nk}{\delta^2} \ln \frac{n}{\delta} \right )$ and $\Xi_x = \Theta \left ( \frac{\sqrt{k \ln \frac{T}{\delta}}}{\delta} \right )$, letting $\eta$ be a step size which is fixed over all rounds, it requires:
\begin{align}
\frac{2q(\Xi_{\bv{Z}}^{3/2} + n \Xi_x^2)}{\Lambda} \le \eta = O \left (\frac{\gap}{\Lambda \Xi_x^2 \ln \frac{nT}{q}} \right )\label{req1}\\
\sum_{t=1}^{T} \Lambda \eta^2 \Xi_x^2 = O \left (\frac{1}{\ln \frac{nT}{q}} \right )\label{req2}\\
\exists T_0 \le T: \sum_{t=1}^{T_0} \eta = \Omega \left ( \frac{\ln \Xi_{\bv{Z}}}{\gap} \right)\label{req3}.
\end{align}
Using that $\Xi_x = \Theta \left ( \frac{\sqrt{k \ln \frac{T}{\delta}}}{\delta} \right )$, the upper bound of
\eqref{req1} holds as long as $\eta = O\left( \frac{\gap \cdot \delta^2}{\Lambda k \log^2 \xi} \right)$, which is satisfied by our setting of $\eta$ if $c_1$ is small enough. The lower bound holds easily as well since $q = \poly(\xi)$ for some sufficiently small polynomial.

 \eqref{req2} holds as long as $\eta = O \left (\frac{\delta}{\sqrt{\Lambda T k} \log \xi} \right )$, which holds by our setting of $\eta$ and $T$ as long as $c_1$ is sufficiently small compared to $c_2$, $c_3$, and $c_4$. Finally, \eqref{req3} holds if $T_0 = \Omega \left (\frac{\log \frac{nk}{\delta}}{\gap \cdot \eta} \right )$, which again holds for our setting of parameters.

In the proof of Theorem 2, Lemma Main 6 is next invoked with $\Xi_x = \Xi_{\bv{Z}} = 2$ and $T_0$ identified as the first round of computation. This lemma requires the conditions of Lemma Main 4 (discussed above) along with those of Lemma Main 5, which require that there is some $\Delta \le 1/\sqrt{8}$ such that:
\begin{align}
\frac{T_1}{\ln^2 T_1} = \Omega \left ( \frac{\log \frac{n}{q}}{\Delta^2} \right )\label{req4}\\
\forall t \in [T_1+1,T]: 2\eta \gap - 4\eta^2 = \Omega \left ( \frac{1}{t} \right )\label{req5}\\
\eta = O \left ( \frac{1}{\sqrt{\Lambda} t \Delta} \right )\label{req6}.
\end{align}

In fact, for our result, we can invoke Lemma Main 5 directly. Since we use an eigengap assumption, this lemma bounds (in the notation of \cite{allen2016first}) $\norm{\bv{Z}^T \bv{P}_t \bv{Q}(\bv{V}^T \bv{P}_t \bv{Q})^{-1}}_F^2 \le \frac{5T_1/\ln^2(T_1)}{(t-T_0)/\ln^2 (t-T_0)}$, for all $t \in [T_0+T_1,T]$. This upper bounds $\norm{\bv{Z}^T\bv{\tilde V}_t}_F^2$ by Lemma 2.2. Since we set $T_2 = \frac{c_4 T_1}{\epsilon}$ and $T_1 = \frac{c_3}{\gap \cdot \eta} \ge \frac{1}{\epsilon}$, setting $c_4$ sufficiently large gives $\norm{\bv{Z}^T \bv{\tilde V}_T}_F^2 \le \epsilon$ and in fact, $\norm{\bv{Z}^T \bv{\tilde V}_t}_F^2 \le \epsilon$ for all $t \in [T- (1-c_5)T_2,T]$ for any constant $c_5 \in (0,1]$. Our final error bound follows by driving $c_5$ sufficiently small and noting that we can shift $T$, multiplying it by at most constant factor, so that our bound holds for all $t \in [T- (1-c_5) T_2,T+(1-c_5) T_2]$.

It just remains to verify the conditions of Lemma Main 5. We set $\Delta = \frac{c_5\gap \epsilon}{\sqrt{\Lambda}}$ for sufficiently small $c_5$. 
 \eqref{req4} holds since $T_1 = \frac{c_3}{\gap \eta} = \Omega \left (\frac{\log \xi \cdot \Lambda}{\gap^2 \cdot \epsilon^2} \right )$. For \eqref{req5}, first note that if $c_1$ is set small enough, we can lower bound $2\eta \gap - 4 \eta^2 \ge \eta \gap$. Thus \eqref{req5} holds since we have $T_1 = \frac{c_3}{\gap \cdot \eta}$ for sufficiently large $c_3$. Finally, \eqref{req6} holds since we have $\sqrt{\Lambda} T_2 \Delta = c_5c_4 T_1 \gap = \frac{c_3c_4c_5}{ \eta}$. Thus, all conditions of Lemma Main 5 hold with our fixed $\eta$, yielding the theorem.
\end{proof}

\section{Asynchronous Averaging}\label{app:averaging}

In this section we introduce a simple asynchronous averaging algorithm, which is used in our orthogonalization routine \texttt{AsynchOrth} (\ripref{alg:pporth}), analyzed in \autoref{app:orth}.

Our bounds closely follow classic work on asynchronous gossip algorithms \cite{boyd2005gossip}, however we include a full proof for completeness, since our setting is more general than typically considered.

\begin{algorithm}[H]
\caption{\algoname{Asynchronous Averaging}}
{\bf Initialization}: Each $u$ holds value $x_u$. $y_u := x_u$.\\
{\bf Update}: If $(u,v)$ is chosen by the randomized scheduler:
\begin{algorithmic}[1]
\State{$y_u := \frac{y_u + y_v}{2}.$}
\end{algorithmic}
\label{alg:ppaverage}
\end{algorithm}
\begin{lemma}\label{lem:avg} Consider a set of nodes executing \ripref{alg:ppaverage} in the asynchronous communication model (\autoref{def:weightedPopProtocol}) with weight matrix $\bv{W}$ and degree matrix $\bv{D}$. Let $x_{avg} = \frac{1}{n}\sum_{v \in V} x_v$. With probability $\ge 1-\delta$ in all rounds $t \ge \frac{\log (1/\epsilon\delta)}{\log \left(\lambda_2^{-1}(\bv{I}-\frac{1}{2}\bv{D} + \frac{1}{2}\bv{W})\right)}$, $\sum_{v \in V} (y_v - x_{avg})^2 \le \epsilon \cdot \sum_{v \in V} (x_v - x_{avg})^2$.
\end{lemma}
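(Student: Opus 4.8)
The plan is to analyze \ripref{alg:ppaverage} as a random linear dynamical system and track the decay of the deviation from the average in the $\ell_2$ sense. First I would set up notation: let $\bv{y}^{(t)} \in \R^n$ be the vector of node values after $t$ rounds, and observe that a single interaction on pair $(u,v)$ replaces $\bv{y}^{(t)}$ by $\bv{A}_{u,v} \bv{y}^{(t)}$ where $\bv{A}_{u,v} = \bv{I} - \tfrac12 (\bv{e}_u - \bv{e}_v)(\bv{e}_u - \bv{e}_v)^T$ is the averaging matrix for that edge (it fixes all coordinates except $u,v$, which it both sets to their mean). Since each $\bv{A}_{u,v}$ is doubly stochastic and symmetric, the quantity $x_{avg}$ is preserved, so writing $\bv{z}^{(t)} = \bv{y}^{(t)} - x_{avg}\bv{1}$ we have $\bv{z}^{(t+1)} = \bv{A}_{u,v}\bv{z}^{(t)}$ with $\bv{z}^{(t)} \perp \bv{1}$ for all $t$, and the claim becomes $\norm{\bv{z}^{(t)}}_2^2 \le \epsilon \norm{\bv{z}^{(0)}}_2^2$.

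Next I would take expectations over the scheduler's choice in one round. Because the pair $(u,v)$ is drawn with probability $\bv{W}_{u,v}$, and $\bv{A}_{u,v}^T\bv{A}_{u,v} = \bv{A}_{u,v}$ (it is a projection), we get
\begin{align*}
\E\!\left[\norm{\bv{z}^{(t+1)}}_2^2 \,\middle|\, \bv{z}^{(t)}\right] &= (\bv{z}^{(t)})^T \left(\sum_{(u,v)} \bv{W}_{u,v} \bv{A}_{u,v}\right) \bv{z}^{(t)}.
\end{align*}
A direct computation gives $\sum_{(u,v)} \bv{W}_{u,v}\bv{A}_{u,v} = \bv{I} - \tfrac12 \bv{D} + \tfrac12 \bv{W}$, using that $\sum_{(u,v)}\bv{W}_{u,v}(\bv{e}_u-\bv{e}_v)(\bv{e}_u-\bv{e}_v)^T = \bv{D} - \bv{W}$ (the scaled Laplacian, cf.\ the remark after \eqref{eq:avg}). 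Call this matrix $\bv{B} \eqdef \bv{I} - \tfrac12\bv{D} + \tfrac12\bv{W}$. It is symmetric, has $\bv{1}$ as a top eigenvector with eigenvalue $1$ (since $\bv{D}\bv{1} = \bv{W}\bv{1}$... more precisely each row of $\tfrac12\bv{D}-\tfrac12\bv{W}$ sums to $0$), and since $\bv{z}^{(t)}\perp\bv{1}$, the Rayleigh quotient is bounded by $\lambda_2(\bv{B})$. Hence $\E[\norm{\bv{z}^{(t+1)}}_2^2 \mid \bv{z}^{(t)}] \le \lambda_2(\bv{B})\,\norm{\bv{z}^{(t)}}_2^2$, and iterating (tower property) yields $\E[\norm{\bv{z}^{(t)}}_2^2] \le \lambda_2(\bv{B})^t \norm{\bv{z}^{(0)}}_2^2$.

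Finally I would convert this expectation bound into a high-probability bound via Markov's inequality: $\Pr[\norm{\bv{z}^{(t)}}_2^2 > \epsilon\norm{\bv{z}^{(0)}}_2^2] \le \lambda_2(\bv{B})^t/\epsilon$, which is at most $\delta$ once $t \ge \frac{\log(1/\epsilon\delta)}{\log(\lambda_2^{-1}(\bv{B}))}$. To get the statement for \emph{all} rounds $t \ge t_0$ simultaneously, I note that $\norm{\bv{z}^{(t)}}_2^2$ is nonincreasing in $t$ (each $\bv{A}_{u,v}$ is a projection, hence $\norm{\bv{A}_{u,v}\bv{z}}_2 \le \norm{\bv{z}}_2$), so the bound at the single round $t_0 = \lceil \frac{\log(1/\epsilon\delta)}{\log(\lambda_2^{-1}(\bv{B}))}\rceil$ already implies it for every later round. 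The only mild subtlety — and the step I would be most careful about — is the monotonicity argument that lets us avoid a union bound over infinitely many rounds; it rests on the projection property $\bv{A}_{u,v}^2 = \bv{A}_{u,v}$ together with symmetry, giving $\norm{\bv{A}_{u,v}\bv{z}}_2^2 = \bv{z}^T\bv{A}_{u,v}\bv{z} \le \norm{\bv{z}}_2^2$. (One should also verify $\lambda_2(\bv{B}) < 1$, i.e.\ that the communication graph induced by $\bv{W}$ is connected, which is implicit in the models considered.)
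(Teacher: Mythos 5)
Your proposal is correct and follows essentially the same route as the paper's proof: write each interaction as multiplication by the projection $\bv{I}-\tfrac{1}{2}(\bv{e}_u-\bv{e}_v)(\bv{e}_u-\bv{e}_v)^T$, bound the conditional expected decay of $\norm{\bv{z}^{(t)}}_2^2$ by $\lambda_2\bigl(\bv{I}-\tfrac{1}{2}\bv{D}+\tfrac{1}{2}\bv{W}\bigr)$ via the Rayleigh quotient on $\bv{1}^\perp$, iterate, apply Markov, and use monotonicity of the error to cover all later rounds. No gaps; your handling of the ``all rounds $t\ge t_0$'' point via the projection property matches the paper's argument.
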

That is, the mean squared error of the estimates of $x_{avg}$ converges linearly, with rate dependent on the second eigenvalue of $\bv{I}-\frac{1}{2}\bv{D} + \frac{1}{2}\bv{W}$. 
Note that $\bv{D} - \bv{W}$ is the Laplacian matrix corresponding to the communication model and $\log \left(\lambda_2^{-1}(\bv{I}-\frac{1}{2}\bv{D} + \frac{1}{2}\bv{W})\right) = \log \left (\frac{1}{1 - \frac{1}{2}\lambda_{n-1}(\bv{L})} \right )\approx \lambda_{n-1}(\bv{L})$ is roughly its smallest nonzero eigenvalue.
\begin{proof}
Our proof follows that of \cite{boyd2005gossip,boyd2006randomized}. We can write $x_{avg} = \frac{1}{n}\sum_{v \in V} x_v = \frac{1}{n} \cdot \bv{1}^T\bv{x}$ where $\bv{x} \in \R^n$ contains each node's value as its entries. Recall that for convenience we identify the vertex set $V$ with $[n] = \{1,...,n\}$. So the $u^{th}$ entry of $\bv x$ contains $x_u$. Let $\bv{y}^t$ denote the vector of estimated averages at round $t$. Initially $\bv{y}^0 = \bv{x}$.

In each step of \ripref{alg:ppaverage}, two nodes $u$ and $v$, selected with probability $\bv{W}_{u,v}$, average their two values. We can write this update as a matrix product with $\bv{y}^t$. Specifically, for any pair $(u,v)$, let $\euv = \bv{e}_u - \bv{e}_v$. We have, for $\euv$ chosen with probability $\bv{W}_{u,v}$, 
\begin{align}
\bv{y}^{t+1} = (\bv{I} - \frac{1}{2}\euv \euv^T) \bv{y}^t.\label{eq:avgUpdate}
\end{align}
We first note that $\bv{1}^T(\euv \euv^T) = \bv{0}$. Thus, by \eqref{eq:avgUpdate}, for every $t$, we have $\frac{1}{n}\bv{1}^T \bv{y}^t = \frac{1}{n}\bv{1}^T \bv{y}^0 = \frac{1}{n} \bv{1}^T\bv{x} = x_{avg}$. That is, the average value held at the nodes always equals the true average. We bound the error from this average: $\bv{z}^t \eqdef \bv{y}^t - x_{avg} \cdot \bv{1} = (\bv{I} - \frac{1}{n}\bv{1}\bv{1}^T) \bv{y}^t$. 
By \eqref{eq:avgUpdate} we have:
\begin{align*}
\bv{z}^{t+1} = (\bv{I} - \frac{1}{n}\bv{1}\bv{1}^T) (\bv{I} - \frac{1}{2}\euv \euv^T) \bv{y}^{t} &=(\bv{I} - \frac{1}{2}\euv \euv^T)(\bv{I} - \frac{1}{n}\bv{1}\bv{1}^T) \bv{y}^t\nonumber\\
&= (\bv{I} - \frac{1}{2}\euv \euv^T) \bv{z}^t
\end{align*}
where the second step follows since $\euv \euv^T \bv{1}\bv{1}^T = \bv{1}\bv{1}^T \euv \euv^T = \bv{0}$ and $\bv{I}$ commutes with all matrices. We can thus compute the expected norm $\norm{\bv{z}^{t+1}}_2^2 = (\bv{z}^{t+1})^T \bv{z}^{t+1}$ as:
\begin{align}\label{eq:initalExpectation}
\E \left [(\bv{z}^{t+1})^T \bv{z}^{t+1} \big | \bv{z}^t \right] = \sum_{\pairs}\bv{W}_{u,v} \cdot (\bv{z}^t)^T \left(\bv{I} - \frac{1}{2}\euv \euv^T\right)^2 \bv{z}^t
\end{align}
recalling that $\pairs$ is the set of unordered pairs $(u,v)$ with $u \neq v$ and $\bv{W}_{u,v}$ is the probability that such a pair is chosen to interact in any round $t$.
$\left(\bv{I} - \frac{1}{2}\euv \euv^T\right)^2 = \bv{I} - \euv \euv^T + \frac{1}{4} (\euv \euv^T)^2 = \bv{I} - \frac{1}{2}\euv \euv^T$ since $(\euv \euv^T)^2 = 2 \euv \euv^T$. That is, $\bv{I} - \frac{1}{2}\euv \euv^T$ is a projection matrix. Plugging back into \eqref{eq:initalExpectation}:
\begin{align}
\E \left [(\bv{z}^{t+1})^T \bv{z}^{t+1} \big | \bv{z}^t \right] &= (\bv{z}^t)^T \left (\sum_{\pairs}\bv{W}_{u,v} \cdot  \left(\bv{I} - \frac{1}{2}\euv \euv^T\right) \right) \bv{z}^t\nonumber\\
&= (\bv{z}^t)^T \left (\bv{I} - \frac{1}{2}(\bv{D}-\bv{W}) \right ) \bv{z}^t\label{eq:expectation2}.
\end{align}
Denote $\bv{H} \eqdef (\bv{I} - \frac{1}{2}(\bv{D}-\bv{W}))$. $\bv{H}$ is a sum over PSD matrices $\left (\bv{I} - \frac{1}{2}\euv \euv^T\right)$ and so is itself PSD. It has top eigenvalue $\lambda_1(\bv{H}) = 1$, and top eigenvector $\bv{1}$ (we can see this since $(\bv{D}-\bv{W})\bv{1} = \bv{0}$). Since $\bv{1}^T \bv{z}^t = \bv{1}^T (\bv{I}-\frac{1}{n}\bv{1}\bv{1}^T)\bv{y}^t = 0$, we thus have from \eqref{eq:expectation2}, $\E \left [\norm{\bv{z}^{t+1}}_2^2\ \big |\ \norm{\bv{z}^t}_2^2 \right ] \le \lambda_2(\bv{H}) \norm{\bv{z}^t}_2^2$ and by iterating:
\begin{align}\label{eq:avgConvergence}
\E \left [\norm{\bv{z}^{t}}_2^2 \right ] \le \lambda_2(\bv{H})^t \norm{\bv{z}^0}_2^2
\end{align} 
If we set $t \ge \frac{\log(1/\epsilon \delta)}{\log \left(\lambda_2(\bv{H})^{-1}\right)}$ applying \eqref{eq:avgConvergence} gives $\E \left [\norm{\bv{z}^{t}}_2^2 \right ] \le \epsilon \delta \norm{\bv{z}^0}_2^2$ and thus by Markov's inequality, $\Pr \left [\norm{\bv{z}^{t}}_2^2 \le \epsilon \norm{\bv{z}^0}_2^2 \right ] \ge 1 - \delta$. This gives the bound since $\norm{\bv{z}^t}_2^2 = \sum_{v \in V} (y_v^t - x_{avg})^2$. It just remains to note that since $\bv{z}^{t+1} = \left ( \bv{I} - \frac{1}{2}\euv \euv^T\right ) \bv{z}^t$ and $\norm{\bv{I} - \frac{1}{2}\euv \euv^T}_2 = 1$, the error strictly decreases in each round, so once it is bounded in round $t$, it is bounded in all subsequent rounds.
\end{proof}

\section{Distributed orthogonalization proofs}\label{app:orth}
In this section we give a full analysis of the distributed orthogonalization routine \texttt{AsynchOrth} described in \ripref{alg:pporth}. We are interested in the error bounds it gives for eigenvector approximation when used as a subroutine in \ripref{alg:ojaGossip}. We begin by analyzing the idealized case when \ripref{alg:pporth} is assumed to \emph{exactly} implement the centralized \ripref{alg:chol}. We then account for the fact that the distributed implementation is approximate.

\begin{corollary}[Distributed eigenvector approximation with exact Cholesky orthogonalization]\label{cor:ojaAsync}
Let $\bv{v}_1,...,\bv{v}_k$ be the top $k$ eigenvectors of the communication matrix $\bv{D}+\bv{W}$ in an asynchronous communication model, and let $\Lambda,\bgap,$ be  bounds satisfying: $\sum_{j=1}^k \lambda_j(\bv{D}+\bv{W}) \le \Lambda$ and $\min_{j \in [k]} [\lambda_j(\bv{D}+\bv{W})-\lambda_{j+1}(\bv{D}+\bv{W})] \ge \bgap$.

For any $\epsilon,\delta \in(0,1)$, let $\xi = \frac{n}{\delta \epsilon \cdot \bgap}$ and let $\eta = \frac{c_1 \epsilon^2 \cdot \bgap \cdot \delta^2}{\Lambda k^3 \log^3 \xi}$ for sufficiently small $c_1$ and $T = \frac{c_2 \cdot (\log \xi + 1/\epsilon)}{\bgap \cdot \eta}$ for sufficiently large $c_2$.
For all $u \in [n],i \in [k]$, let ${q}_u^{(j)}$ be the local state computed by \ripref{alg:ojaGossip} prior to Step 4. 
Then for all $u \in [n],i \in [k]$, let ${\tilde v}_u^{(j)}$ be result of running \texttt{AsynchOrth} in Step 4 with an algorithm that exactly implements \ripref{alg:chol}. If $\bv{\tilde V}$ is given by $\left(\bv{\tilde V}\right)_{u,j}  = {\tilde v}_u^{(j)}$ and $\bv{\tilde v}_i$ is the $i^\text{th}$ column of $\bv{\tilde V}$, then with probability  $\ge 1 -\delta$:
 \begin{align*} \left |\bv{\tilde v}_i^T \bv{v}_{i} \right | \ge 1-\epsilon\hspace{1em}\text{ and }\hspace{1em}\norm{\bv{\tilde v}_i}_2 = 1\hspace{1em}\text{ for all } i \in [k].
 \end{align*}
\end{corollary}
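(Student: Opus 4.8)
The plan is to apply \autoref{thm:oja} once for \emph{every} rank $j\in[k]$ — not just $j=k$ — and then to exploit the triangular structure of the Cholesky orthogonalization to promote the resulting $k$ subspace bounds into per-eigenvector bounds. As observed in \autoref{dist_ojas}, run up to Step~4, \ripref{alg:ojaGossip} exactly simulates \ripref{alg:oja} on $\bv{M}=\bv{D}+\bv{W}$ with iterates $\bv{x}_t=\euv$ drawn i.i.d.\ from $\dw$; here $\norm{\euv}_2^2=2$, so the norm bound is $C=2$, and $\bv{M}$ is PSD with $\E_{\euv\sim\dw}[\euv\euv^T]=\bv{M}$ by \eqref{eq:basicExp}. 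Since the Oja update $\bv{Q}_{t+1}=(\bv{I}+\eta\bv{x}_t\bv{x}_t^T)\bv{Q}_t$ acts column-wise, for each $j$ the first $j$ columns $\bv{Q}_T^{(j)}$ of $\bv{Q}_T$ are exactly the output of a rank-$j$ run of \ripref{alg:oja} (same iterate sequence, first $j$ Gaussian columns of $\bv{Q}_0$); and since $(\bv{L}^T)^{-1}$ is upper triangular for $\bv{L}=\chol(\bv{Q}_T^T\bv{Q}_T)$, the first $j$ columns of $\bv{\tilde V}=\bv{Q}_T(\bv{L}^T)^{-1}$ are precisely the Cholesky orthogonalization of $\bv{Q}_T^{(j)}$, i.e.\ an orthonormal basis $\bv{\tilde V}^{(j)}=[\bv{\tilde v}_1\mid\cdots\mid\bv{\tilde v}_j]$ of $\operatorname{span}(\bv{q}_1,\dots,\bv{q}_j)$. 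Because the Frobenius quantity $\norm{\bv{Z}_j^T\bv{\tilde V}^{(j)}}_F$ depends only on the column span of $\bv{\tilde V}^{(j)}$, \autoref{thm:oja} applied to the rank-$j$ run — valid with $\gap=\bgap$ and the given $\Lambda$, since $\bgap\le\lambda_j(\bv{M})-\lambda_{j+1}(\bv{M})$ and $\sum_{i=1}^j\lambda_i(\bv{M})\le\sum_{i=1}^k\lambda_i(\bv{M})\le\Lambda$ for every $j\in[k]$ — yields, at target accuracy $\epsilon/2$ and failure probability $\delta/k$, that $\norm{\bv{Z}_j^T\bv{\tilde V}^{(j)}}_F^2\le\epsilon/2$, where $\bv{Z}_j$ is an orthonormal basis for the orthogonal complement of $\operatorname{span}(\bv{v}_1,\dots,\bv{v}_j)$.

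For the parameter bookkeeping, one checks that the $\eta,T$ in the statement are (after absorbing $C=2$ into constants) exactly what \autoref{thm:oja} prescribes for a rank-$k$ instance at accuracy $\epsilon/2$ and failure $\delta/k$: the substitution $\delta\mapsto\delta/k$ moves an extra $k^2$ into the denominator of the step size (Oja's $\eta$ scales with $\delta^2$) and $\xi$ grows only by a $\poly(k)$ factor absorbed into the logarithms; together with the factor $k$ already present this gives $k^3$. Since the requirements in the proof of \autoref{thm:oja} are monotone (a smaller step size with proportionally more rounds is still admissible) and the rank-$j$ requirements for $j<k$ are weaker than the rank-$k$ ones, this single choice of $\eta,T$ meets the hypotheses of all $k$ invocations at once; a union bound over $j\in[k]$ then makes all $k$ subspace bounds hold simultaneously with probability $\ge 1-\delta$.

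Condition on that event. By the nestedness above, $\bv{P}_j\eqdef\sum_{i\le j}\bv{\tilde v}_i\bv{\tilde v}_i^T$ is the orthogonal projection onto $\operatorname{span}(\bv{q}_1,\dots,\bv{q}_j)$, so $\bv{P}_j-\bv{P}_{j-1}=\bv{\tilde v}_j\bv{\tilde v}_j^T$ (with $\bv{P}_0=\bv{0}$). Fixing $j$ and decomposing $\bv{v}_j=\bv{P}_{j-1}\bv{v}_j+(\bv{\tilde v}_j^T\bv{v}_j)\bv{\tilde v}_j+(\bv{I}-\bv{P}_j)\bv{v}_j$ into mutually orthogonal pieces yields $(\bv{\tilde v}_j^T\bv{v}_j)^2=1-\norm{\bv{P}_{j-1}\bv{v}_j}_2^2-\norm{(\bv{I}-\bv{P}_j)\bv{v}_j}_2^2$. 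The last term is one summand of $\norm{(\bv{I}-\bv{P}_j)[\bv{v}_1\mid\cdots\mid\bv{v}_j]}_F^2$, which by the symmetry of principal angles equals $\norm{\bv{Z}_j^T\bv{\tilde V}^{(j)}}_F^2\le\epsilon/2$. For the middle term, since $\bv{v}_j\perp\operatorname{span}(\bv{v}_1,\dots,\bv{v}_{j-1})$ we have $\bv{v}_j=\bv{Z}_{j-1}\bv{Z}_{j-1}^T\bv{v}_j$ with $\norm{\bv{Z}_{j-1}^T\bv{v}_j}_2=1$, so $\norm{\bv{P}_{j-1}\bv{v}_j}_2\le\norm{\bv{P}_{j-1}\bv{Z}_{j-1}}_2\le\norm{\bv{Z}_{j-1}^T\bv{\tilde V}^{(j-1)}}_F\le\sqrt{\epsilon/2}$ (the case $j=1$ being vacuous). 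Hence $(\bv{\tilde v}_j^T\bv{v}_j)^2\ge 1-\epsilon$, i.e.\ $|\bv{\tilde v}_j^T\bv{v}_j|\ge 1-\epsilon$, while $\norm{\bv{\tilde v}_j}_2=1$ exactly because $\bv{\tilde V}$ has orthonormal columns ($\bv{Q}_T$ is full rank, by the remark after \ripref{alg:chol}).

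The step I expect to be the main obstacle is the parameter bookkeeping in the reduction: confirming rigorously that one globally fixed $\eta$ and $T$ — calibrated in the statement for rank $k$, accuracy $\epsilon$, and failure $\delta$ — simultaneously dominate the requirements of all $k$ rank-$j$ invocations of \autoref{thm:oja} at accuracy $\epsilon/2$ and failure $\delta/k$, and accounting cleanly for why $k^3$ rather than $k$ appears. Everything after the $k$ subspace bounds are in hand is routine linear algebra once the nestedness of Cholesky orthogonalization is isolated.
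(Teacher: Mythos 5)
Your proposal is correct and follows essentially the same route as the paper's proof: the nestedness of Cholesky orthogonalization (first $j$ columns of $\bv{\tilde V}$ orthonormalize the first $j$ columns of $\bv{Q}_T$), $k$ applications of \autoref{thm:oja} at accuracy $\epsilon/2$ and failure probability $\delta/k$ with a union bound, and then rank-$j$ plus rank-$(j-1)$ subspace bounds combined to get the per-eigenvector guarantee. Your orthogonal-decomposition step for $\bv{v}_j$ is just a repackaging of the paper's Frobenius-norm/Pythagorean bookkeeping, and your parameter accounting for the $k^3$ factor is consistent with (indeed slightly more explicit than) what the paper does.
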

We note that a similar eigenvector bound is given in \cite{allen2016first} with better dependence on $k$. However, it has worse dependence on $\Lambda$ and so is too weak for our applications.
\begin{proof}
Let $\bv{\tilde Q}$ be given by $\left(\bv{Q}\right)_{u,j}  = {q}_u^{(j)}$.
\ripref{alg:chol} first computes $\bv{L} = \chol(\bv{Q}^T \bv{Q})$ and then $\bv{\tilde V} = \bv{Q} (\bv{L}^T)^{-1}$. Letting $\bv{Q}_i \in \R^{n \times i}$ denote the first $i$ columns of $\bv{Q}$, it is well known that, letting $\bv{L}_i$ denote the upper left $i \times i$ submatrix of $\bv{L}$, $\bv{L}_i = \chol(\bv{Q}_i^T \bv{Q}_i)$. Furthermore, since $\bv{L}$ is triangular, $(\bv{L}_i^T)^{-1}$ is just the upper $i \times i$ submatrix of  $(\bv{L}^T)^{-1}$. Thus we can see that the first $i$ columns of $\bv{V}$ are \emph{identical} to the output that would be
%
%
%
obtained if the algorithm were run with the same step size $\eta$ and step count $T$, but with just the first $i$ vectors of $\bv{Q}$ -- i.e. with each node only keeping track of just $q_u^{(1)},...,q_u^{(i)}$ instead of $q_u^{(1)},...,q_u^{(k)}$ 

With this observation, we can prove the corollary by applying \autoref{thm:oja} for each $i \in [k]$. 
We apply the theorem with rank $i$, error $\epsilon/2$, failure probability $\delta/k$, $\Lambda \ge \sum_{j=1}^k \lambda_j(\bv{D}+\bv{W}) \ge \sum_{j=1}^i \lambda_j(\bv{D}+\bv{W})$, and $\bgap \le \min_{j\in[k]} [\lambda_j(\bv{M}) - \lambda_{j+1}(\bv{M})] \le \lambda_i(\bv{D}+\bv{W}) - \lambda_{i+1}(\bv{D}+\bv{W})$. 

Denote $\bv{V}_{i} = [\bv{v}_1,...,\bv{v}_i]$, and let $\bv{V}_{-i}$ span the remaining eigenvectors of $\bv{D} +\bv{W}$. 
By our application of the theorem with rank $i$, with probability  $\ge 1-\delta/k$, $\norm{\bv{V}_{-i}^T\bv{\tilde V}_i}_F^2 \le \epsilon/2$ and:
\begin{align}
\norm{\bv{V}_{i}^T\bv{\tilde V}_i}^2_F \ge i-\epsilon/2\label{eq:pythagBound}
\end{align}
 since $\norm{\bv{\tilde V}_i}_F^2 = i$ and by the Pythagorean theorem, $\norm{\bv{\tilde V}_i}_F^2 = \norm{\bv{V}_{i}^T\bv{\tilde V}_i}^2_F + \norm{\bv{V}_{-i}^T\bv{\tilde V}_i}^2_F$. This holds for all $i$ simultaneously with probability $\ge 1-\delta$ after union bounding over $k$ applications of the theorem.
 Since $\norm{\bv{V}_i^T \bv{\tilde V}_{i-1}}_F^2 \le \norm{\bv{\tilde V}_{i-1}}_F^2 = i-1$, \eqref{eq:pythagBound} gives: 
\begin{align}\label{induct1}
\norm{\bv{V}_i^T \bv{\tilde v}_{i}}_2^2 = \norm{\bv{V}_{i}^T\bv{\tilde V}_i}_F^2- \norm{\bv{V}_i^T \bv{\tilde V}_{i-1}}_F^2 \ge 1-\epsilon/2.
\end{align}
For $i = 1$, this completes the proof since $\norm{\bv{V}_1^T \bv{\tilde v}_{1}}_2^2 = (\bv{v}_1^T \bv{\tilde v}_{1})^2$. For $i > 1$, we also have by our application of \autoref{thm:oja} with rank $i-1$, $\norm{\bv{V}_{i-1}^T \bv{\tilde V}_{i-1}}_F^2 \ge i-1-\epsilon/2$ and so:
\begin{align}\label{induct2}
\norm{\bv{V}_{i-1}^T \bv{\tilde v}_i}_2^2 \le \norm{\bv{\tilde V}_{i-1}}_F^2 - \norm{\bv{V}_{i-1}^T \bv{\tilde V}_{i-1}}_F^2 \le \epsilon/2.
\end{align}
Since $\norm{\bv{V}_i^T \bv{\tilde v}_{i}}_2^2 = \norm{\bv{V}_{i-1}^T \bv{\tilde v}_i}_2^2 +(\bv{v}_i^T \bv{\tilde v}_i)^2$, in combination \eqref{induct1} and \eqref{induct2} give $(\bv{v}_i^T \bv{\tilde v}_i)^2 \ge 1-\epsilon$ and hence $\left |\bv{v}_i^T \bv{\tilde v}_i \right | \ge 1-\epsilon$. $\norm{\bv{\tilde v}_i}_2 = 1$ follows from: $$\bv{\tilde V}^T \bv{\tilde V} = \bv{L}^{-1} \bv{Q}^T \bv{Q} (\bv{L}^T)^{-1} = \bv{L}^{-1} \bv{L} \bv{L}^T (\bv{L}^T)^{-1} = \bv{I}_{k \times k}.$$
\end{proof}

With \autoref{cor:ojaAsync} in place, we next focus on the additional error introduced by the fact that \ripref{alg:pporth} only implements \ripref{alg:chol} approximately. We first bound how well \ripref{alg:pporth}  approximates $\bv{ Q}^T\bv{ Q}$ via averaging.
\begin{lemma}\label{lem:dotProducts} Consider a set of nodes executing \texttt{AsynchOrth}$(T)$ (\ripref{alg:pporth}) in the asynchronous communication model with weight matrix $\bv{W}$ and degree matrix $\bv{D}$ satisfying: \\$\gamma_{mix} \le \min \left[\frac{1}{n}, \log \left(\lambda_2^{-1}(\bv{I}-\frac{1}{2}\bv{D} + \frac{1}{2}\bv{W})\right) \right]$. Let $\bv{Q} \in \R^{n \times k}$ be given by $(\bv{Q})_{u,j} = {q}_u^{(j)}$.

For any $\epsilon,\delta \in (0,1)$, if $T \ge \frac{c \log \left (\frac{n\norm{\bv{ Q}}_2}{\epsilon \delta}\right)}{\gamma_{mix}}$ for sufficiently large constant $c$, with probability $\ge 1-\delta$, for all $u$, and all $i,j \in [k]$, $$\left |(\bv{R}_u)_{i,j} - (\bv{ Q}^T \bv{ Q})_{i,j} \right | \le \epsilon.$$
\end{lemma}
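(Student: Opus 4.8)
The plan is to recognize \ripref{alg:pporth} as running, for each pair of indices $(i,j)\in[k]\times[k]$, an independent instance of the asynchronous averaging protocol \ripref{alg:ppaverage} analyzed in \autoref{lem:avg}. Indeed, the variable $r_u^{(i,j)}$ is initialized to $q_u^{(i)}q_u^{(j)}$, and whenever the scheduler picks $(u,v)$ the two nodes each replace their value by the average of the two old values -- exactly the update of \ripref{alg:ppaverage} on the scalar data $x_u := q_u^{(i)}q_u^{(j)}$. The true average of this data is $x_{\mathrm{avg}} = \frac1n\sum_u q_u^{(i)}q_u^{(j)} = \frac1n(\bv{Q}^T\bv{Q})_{i,j}$, and by construction $(\bv{R}_u)_{i,j} = n\, r_u^{(i,j)}$ at termination. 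So it suffices to show that after $T$ rounds every node's averaging estimate $r_u^{(i,j)}$ is within $\epsilon/n$ of $x_{\mathrm{avg}}$, simultaneously over all $u$ and all $i,j$, with probability $\ge 1-\delta$.

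First I would invoke \autoref{lem:avg} for a single pair $(i,j)$ with accuracy parameter $\epsilon_0$ and failure probability $\delta_0$ to be fixed: with probability $\ge 1-\delta_0$, for every round $t$ at least $\log(1/\epsilon_0\delta_0)/\log\!\big(\lambda_2^{-1}(\bv{I}-\frac12\bv{D}+\frac12\bv{W})\big)$, we get $\sum_v(r_v^{(i,j)}-x_{\mathrm{avg}})^2 \le \epsilon_0\sum_v(x_v-x_{\mathrm{avg}})^2$, so in particular $(r_u^{(i,j)}-x_{\mathrm{avg}})^2 \le \epsilon_0\sum_v(x_v-x_{\mathrm{avg}})^2$ for each individual $u$. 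To control the right-hand side I would use the variance bound $\sum_v(x_v-x_{\mathrm{avg}})^2 \le \sum_v x_v^2 = \sum_v(q_v^{(i)})^2(q_v^{(j)})^2$, and then $(q_v^{(j)})^2 \le \|\bv{Q}\bv{e}_j\|_2^2$, which gives $\sum_v(q_v^{(i)})^2(q_v^{(j)})^2 \le \|\bv{Q}\bv{e}_j\|_2^2\sum_v(q_v^{(i)})^2 = \|\bv{Q}\bv{e}_i\|_2^2\|\bv{Q}\bv{e}_j\|_2^2 \le \|\bv{Q}\|_2^4$. Hence $|(\bv{R}_u)_{i,j}-(\bv{Q}^T\bv{Q})_{i,j}| = n\,|r_u^{(i,j)}-x_{\mathrm{avg}}| \le n\sqrt{\epsilon_0}\,\|\bv{Q}\|_2^2$, and choosing $\epsilon_0 = \epsilon^2/(n^2\|\bv{Q}\|_2^4)$ makes this at most $\epsilon$.

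Finally I would union bound over the $k^2\le n^2$ index pairs (using that $\bv{Q}$ has full column rank, so $k\le n$), taking $\delta_0 = \delta/k^2$. Since $\gamma_{mix} \le \log\!\big(\lambda_2^{-1}(\bv{I}-\frac12\bv{D}+\frac12\bv{W})\big)$ by hypothesis, the per-instance iteration threshold is at most $\log(1/\epsilon_0\delta_0)/\gamma_{mix}$, and $\log(1/\epsilon_0\delta_0) = \log\!\big(n^2\|\bv{Q}\|_2^4k^2/(\epsilon^2\delta)\big) = O\!\big(\log(n\|\bv{Q}\|_2/(\epsilon\delta))\big)$, which is absorbed into the constant $c$ of the claimed bound $T \ge c\log(n\|\bv{Q}\|_2/(\epsilon\delta))/\gamma_{mix}$. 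The bound persists for all $t\ge T$ because the last sentence of \autoref{lem:avg} shows the averaging error is non-increasing. The step I expect to demand the most care -- though it is bookkeeping rather than a genuine obstacle -- is this last one: converting the single aggregate mean-squared guarantee of \autoref{lem:avg} into a simultaneous per-coordinate, per-pair $\ell_\infty$ bound while keeping all the polylogarithmic blow-up ($n$, $\|\bv{Q}\|_2$, $k$, $1/\epsilon$, $1/\delta$) inside one $\log(n\|\bv{Q}\|_2/(\epsilon\delta))$ factor; everything else is a direct reduction to \autoref{lem:avg}.
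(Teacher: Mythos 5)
Your proposal is correct and follows essentially the same route as the paper: identify each $r_u^{(i,j)}$ update as an instance of \ripref{alg:ppaverage}, apply \autoref{lem:avg} with a per-instance accuracy of order $\epsilon^2/\mathrm{poly}(n,\norm{\bv{Q}}_2)$, bound the initial variance by a polynomial in $n\norm{\bv{Q}}_2$, and absorb the resulting logarithmic factors (including the union bound over the $k^2$ pairs and the monotonicity of the averaging error) into the constant $c$. Your variance bound $\sum_v (q_v^{(i)})^2(q_v^{(j)})^2 \le \norm{\bv{Q}}_2^4$ is in fact slightly cleaner than the paper's loose bound, but the argument is the same.
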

\begin{proof}

We apply \autoref{lem:avg} of \autoref{app:averaging} to bound the accuracy of the averaging protocol in computing $(\bv{ Q}^T \bv{ Q})_{i,j}$. Specifically, with probability $\ge 1-\delta$, for every round $t' \ge T$, letting $\epsilon' =\left( \frac{\epsilon}{n^{3/2} \norm{\bv{Q}}_2}\right)^2$ the theorem gives:
\begin{align}\label{avgInitialBound}
\left [(\bv{R}_u)_{i,j}- \frac{1}{n}(\bv{ Q}^T \bv{ Q})_{i,j}\right]^2 \le n^2 \cdot \sum_{u \in V} \left [r_u^{(i,j)} \frac{1}{n}(\bv{ Q}^T \bv{ Q})_{i,j}\right]^2 \le n^2\epsilon' \cdot \sum_{u \in U} \left [{q}_u^{(i)} {q}_u^{(j)}-\frac{1}{n}(\bv{ Q}^T \bv{ Q})_{i,j} \right ]^2.
\end{align}
We can loosely bound:
\begin{align*}
\sum_{u \in U} \left [{q}_u^{(i)} {q}_u^{(j)}-\frac{1}{n}(\bv{ Q}^T \bv{ Q})_{i,j} \right ]^2 \le \sum_{u \in U} \left [{q}_u^{(i)} {q}_u^{(j)} \right ]^2 = \norm{\bv{Q}}_F^2 \le n\norm{\bv{ Q}}_2^2.
\end{align*}
The lemma follows by plugging into \eqref{avgInitialBound} and taking a square root of the error bound.
\end{proof}

\begin{remark} We note that, as shown in \autoref{lem:avg}, the accuracy of averaging to approximate $\bv{Q}^T\bv{Q}$ only decreases with each round. Thus, \texttt{AsynchOrth} (Algorithm  \ref{alg:pporth}) does not need to terminate the averaging after $T$ rounds, but can continuously maintain approximations to $\bv{Q}^T\bv{Q}$ and $[\hat v_u^{(1)},...,\hat v_u^{(k)}]$, which will only become more accurate over time.
\end{remark}

We next show how the entrywise approximation bound of \autoref{lem:dotProducts} translates into error in computing $\bv{\tilde V} = \bv{ Q} (\bv{L}_u^T)^{-1}$.  We use a forward stability result on the Cholesky decomposition:
\begin{theorem}[Theorem 10.8 of \cite{higham2002accuracy}, from \cite{sun1992rounding}]\label{thm:forward} Let $\bv{R} \in \R^{k\times k}$ be positive definite with Cholesky decomposition $\bv{R} = \bv{L}\bv{L}^T$. Let $\bs{\Delta}_{R}$ be a symmetric matrix satisfying $\norm{\bv{R}^{-1}\cdot \bs{\Delta}_{R}}_2 <1$. Then $\bv{R} + \bs{\Delta}_{R}$ has the Cholesky decomposition $\bv{R} + \bs{\Delta}_{R} = (\bv{L} + \bs{\Delta}_L)(\bv{L} + \bs{\Delta}_L)^T$ where:
\begin{align*}
\frac{\norm{\bs{\Delta}_L}_F}{\norm{\bv{L}}_2} \le \frac{\norm{\bv{R}^{-1}}_2 \cdot \norm{\bs{\Delta}_R}_F}{\sqrt{2}\left(1-\norm{\bv{R}^{-1}}_2 \cdot \norm{\bs{\Delta}_R}_F\right)}.
\end{align*}
\end{theorem}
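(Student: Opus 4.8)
This is the standard forward-error (perturbation) bound for the Cholesky factorization, and the plan is to reduce it to a scalar fixed-point inequality. First I would establish existence of the perturbed factor: writing $\bv{R}+\bs{\Delta}_R = \bv{R}^{1/2}\bigl(\bv{I}+\bv{R}^{-1/2}\bs{\Delta}_R\bv{R}^{-1/2}\bigr)\bv{R}^{1/2}$ and noting that $\bv{R}^{-1/2}\bs{\Delta}_R\bv{R}^{-1/2}$ is similar (via conjugation by $\bv{R}^{1/2}$) to $\bv{R}^{-1}\bs{\Delta}_R$, its spectral norm — being symmetric, it equals its spectral radius — is at most $\norm{\bv{R}^{-1}\bs{\Delta}_R}_2<1$. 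Hence $\bv{R}+\bs{\Delta}_R\succ 0$ and it has a unique Cholesky factorization, which I write as $(\bv{L}+\bs{\Delta}_L)(\bv{L}+\bs{\Delta}_L)^T$ with $\bv{L}+\bs{\Delta}_L$ lower triangular with positive diagonal, so that $\bs{\Delta}_L$ is itself lower triangular.

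Next I would derive the governing matrix equation. Subtracting $\bv{L}\bv{L}^T=\bv{R}$ from $(\bv{L}+\bs{\Delta}_L)(\bv{L}+\bs{\Delta}_L)^T=\bv{R}+\bs{\Delta}_R$ gives $\bv{L}\bs{\Delta}_L^T+\bs{\Delta}_L\bv{L}^T+\bs{\Delta}_L\bs{\Delta}_L^T=\bs{\Delta}_R$; setting $\bv{F}:=\bv{L}^{-1}\bs{\Delta}_L$ (lower triangular, as a product of lower-triangular matrices) and conjugating by $\bv{L}^{-1}$ turns this into the symmetric equation $\bv{F}+\bv{F}^T+\bv{F}\bv{F}^T=\bv{W}$, where $\bv{W}:=\bv{L}^{-1}\bs{\Delta}_R\bv{L}^{-T}$ and $\norm{\bv{W}}_F\le\norm{\bv{L}^{-1}}_2^2\,\norm{\bs{\Delta}_R}_F=\norm{\bv{R}^{-1}}_2\,\norm{\bs{\Delta}_R}_F=:w$. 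Because $\bv{F}$ is lower triangular while the right-hand side is symmetric, taking the strictly-lower-triangular part plus half the diagonal of both sides recovers $\bv{F}$ exactly, i.e. $\bv{F}=\Phi\bigl(\bv{W}-\bv{F}\bv{F}^T\bigr)$, where $\Phi$ denotes that linear map. A one-line computation gives $\norm{\Phi(\bv{X})}_F\le\tfrac{1}{\sqrt{2}}\norm{\bv{X}}_F$ for symmetric $\bv{X}$ — this is exactly the source of the $\sqrt{2}$ — and combined with $\norm{\bv{F}\bv{F}^T}_F\le\norm{\bv{F}}_F^2$ this yields the scalar inequality $x\le\tfrac{1}{\sqrt{2}}(w+x^2)$ for $x:=\norm{\bv{F}}_F$.

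Finally I would solve this inequality and translate back. The constraint $x^2-\sqrt{2}\,x+w\ge 0$ confines $x$ to lie below the smaller root or above the larger root of the quadratic; a continuity argument along the homotopy $t\mapsto t\bs{\Delta}_R$, $t\in[0,1]$ — along which $\bv{R}+t\bs{\Delta}_R$ stays positive definite, so $\bv{F}$ depends continuously on $t$ and vanishes at $t=0$ — rules out the upper branch, leaving $x\le\tfrac{2w}{\sqrt{2}+\sqrt{2-4w}}$, which a short manipulation bounds by $\tfrac{w}{\sqrt{2}(1-w)}$. Then $\norm{\bs{\Delta}_L}_F=\norm{\bv{L}\bv{F}}_F\le\norm{\bv{L}}_2\,\norm{\bv{F}}_F$, so $\norm{\bs{\Delta}_L}_F/\norm{\bv{L}}_2\le x\le\tfrac{\norm{\bv{R}^{-1}}_2\norm{\bs{\Delta}_R}_F}{\sqrt{2}\,(1-\norm{\bv{R}^{-1}}_2\norm{\bs{\Delta}_R}_F)}$, as claimed. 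The main obstacle is the branch selection: algebra alone does not exclude the large root, so one genuinely needs the continuity-in-$\bs{\Delta}_R$ argument, and one must also track the regime of $w$ in which the discriminant stays positive so the small-root bound is meaningful; everything else is routine bookkeeping with sub-multiplicativity of the Frobenius and spectral norms.
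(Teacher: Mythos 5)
The paper never proves this statement at all --- it is imported verbatim from Higham's book (Theorem 10.8), originally due to Sun --- so your argument has to be judged on its own merits rather than against an in-paper proof. Most of it is sound and is essentially the classical Stewart--Sun style argument: existence via $\bv{R}+\bs{\Delta}_R=\bv{R}^{1/2}\bigl(\bv{I}+\bv{R}^{-1/2}\bs{\Delta}_R\bv{R}^{-1/2}\bigr)\bv{R}^{1/2}$ and similarity to $\bv{R}^{-1}\bs{\Delta}_R$; the equation $\bv{F}+\bv{F}^T+\bv{F}\bv{F}^T=\bv{W}$ with $\bv{F}=\bv{L}^{-1}\bs{\Delta}_L$, $\bv{W}=\bv{L}^{-1}\bs{\Delta}_R\bv{L}^{-T}$ and $\norm{\bv{W}}_F\le\norm{\bv{L}^{-1}}_2^2\norm{\bs{\Delta}_R}_F=\norm{\bv{R}^{-1}}_2\norm{\bs{\Delta}_R}_F=:w$; the $1/\sqrt{2}$ bound for the ``lower triangle plus half the diagonal'' map on symmetric matrices; the resulting inequality $x^2-\sqrt{2}x+w\ge 0$ for $x=\norm{\bv{F}}_F$; the homotopy argument selecting the lower branch; and the final algebra $\frac{2w}{\sqrt{2}+\sqrt{2-4w}}\le\frac{w}{\sqrt{2}(1-w)}$ are all correct.

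The genuine gap is the range of validity, which you mention in passing but dismiss as bookkeeping. Your scalar inequality constrains $x$ only when the discriminant $2-4w$ is nonnegative, i.e.\ when $w\le 1/2$; for $w\in(1/2,1)$ the inequality $x\le\frac{1}{\sqrt{2}}(w+x^2)$ is satisfied by every $x\ge 0$, the ``forbidden interval'' is empty, and the homotopy argument concludes nothing --- yet the theorem as stated assumes only $\norm{\bv{R}^{-1}\bs{\Delta}_R}_2<1$, which implies neither $w\le 1/2$ nor even $w<1$, and its bound is still a nontrivial claim in that regime. So what you have actually proved is the theorem under the stronger hypothesis $\norm{\bv{R}^{-1}}_2\norm{\bs{\Delta}_R}_F\le 1/2$. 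Closing the gap needs a different device, e.g.\ differentiating the Cholesky factor along $t\mapsto\bv{R}+t\bs{\Delta}_R$ (the same identity gives $\bv{L}(t)^{-1}\dot{\bv{L}}(t)=\Phi\bigl(\bv{L}(t)^{-1}\bs{\Delta}_R\bv{L}(t)^{-T}\bigr)$, hence $\norm{\dot{\bv{L}}(t)}_F\le\tfrac{1}{\sqrt{2}}\norm{\bv{L}(t)}_2\norm{(\bv{R}+t\bs{\Delta}_R)^{-1}}_2\norm{\bs{\Delta}_R}_F$) and integrating, which produces a $(1-w)^{-1}$-type denominator with no discriminant restriction; this is essentially Sun's route. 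For what it is worth, the restriction is harmless for this paper: in its only application (\autoref{lem:stabilityAnalysis}) the quantity $\norm{\bv{R}^{-1}}_2\norm{\bs{\Delta}_R}_F$ is explicitly forced below $1/2$, so your restricted version would suffice there --- but as a proof of the theorem as stated, the case $w\in(1/2,1)$ is missing.
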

Using this result we can show:
\begin{lemma}\label{lem:stabilityAnalysis}
Consider $\bv{Q} \in \R^{n \times k}$ with maximum and minimum singular values $\smax(\bv{Q})$ and $\smin(\bv{Q})$.
For all $u \in [n]$, let $\bv{R}_u \in \R^{k \times k}$ be any symmetric matrix with $\left |(\bv{R}_u)_{i,j} - (\bv{ Q}^T \bv{ Q})_{i,j} \right | \le \epsilon$ for all $i,j$ and some $\epsilon \le \frac{n\min(1,\smin(\bv{Q})^3)}{2 \sqrt{2} \max(1,\smax(\bv{Q}))}$. Let $\bv{L} = \chol(\bv{ Q}^T \bv{ Q})$, $\bv{L}_u = \chol(\bv{R}_u)$, $\bv{\tilde V} = \bv{ Q}(\bv{L}^T)^{-1}$,  and $\bv{\wh V}$ have $u^{th}$ row equal to the $u^{th}$ row of $\bv{ Q}(\bv{L}_u^T)^{-1}$. Then letting $\bv{\tilde v}_i$ and $\bv{\wh v}_i$ be the $i^{th}$ columns of $\bv{\tilde V}$ and $\bv{\wh V}$ respectively, for all $i \in [k]$, 
$$\norm{\bv{\hat v}_i - \bv{\tilde v}_i}_2 \le \frac{2\sqrt{2}\epsilon n^2 \smax(\bv{Q})}{\smin(\bv{Q})^3}.$$ 
\end{lemma}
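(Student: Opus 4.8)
The plan is to reduce the claim to the forward-stability bound for the Cholesky decomposition (\autoref{thm:forward}) combined with a standard matrix-inverse perturbation identity. Write $\bv{R} \eqdef \bv{Q}^T\bv{Q}$, which is positive definite since $\smin(\bv{Q}) > 0$, fix a node $u$, and put $\bs{\Delta}_R \eqdef \bv{R}_u - \bv{R}$, so that by hypothesis $\norm{\bs{\Delta}_R}_F \le k\epsilon \le n\epsilon$. First I would record the elementary spectral facts: since $\bv{L}\bv{L}^T = \bv{R}$, the singular values of $\bv{L}$ coincide with those of $\bv{Q}$, so $\norm{\bv{L}}_2 = \smax(\bv{Q})$, $\norm{\bv{L}^{-1}}_2 = 1/\smin(\bv{Q})$, and $\norm{\bv{R}^{-1}}_2 = 1/\smin(\bv{Q})^2$. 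Using the assumed upper bound on $\epsilon$, verify that $\norm{\bv{R}^{-1}}_2\norm{\bs{\Delta}_R}_F \le \tfrac12$; this both ensures $\bv{R}_u = \bv{R} + \bs{\Delta}_R$ is positive definite (so $\chol(\bv{R}_u)$ is well-defined and unique) and lets us invoke \autoref{thm:forward}, yielding $\bv{L}_u = \bv{L} + \bs{\Delta}_L$ with
$$\norm{\bs{\Delta}_L}_F \;\le\; \sqrt{2}\,\norm{\bv{L}}_2\,\norm{\bv{R}^{-1}}_2\,\norm{\bs{\Delta}_R}_F \;\le\; \frac{\sqrt{2}\,n\epsilon\,\smax(\bv{Q})}{\smin(\bv{Q})^2}.$$

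Next I would control $\norm{(\bv{L}_u^T)^{-1}}_2$. By the perturbation bound for singular values, $\smin(\bv{L}_u) \ge \smin(\bv{L}) - \norm{\bs{\Delta}_L}_2 \ge \smin(\bv{Q}) - \norm{\bs{\Delta}_L}_F \ge \tfrac12\smin(\bv{Q})$, where the last step again uses the hypothesis on $\epsilon$; hence $\norm{(\bv{L}_u^T)^{-1}}_2 \le 2/\smin(\bv{Q})$. The crucial step is then an exact identity for the relevant row. Writing $\bv{q}_u^T$ for the $u$-th row of $\bv{Q}$ and $\bv{\tilde v}_u^T \eqdef \bv{q}_u^T(\bv{L}^T)^{-1}$ for the $u$-th row of $\bv{\tilde V} = \bv{Q}(\bv{L}^T)^{-1}$, the identity $\bv{A}^{-1} - \bv{B}^{-1} = \bv{A}^{-1}(\bv{B}-\bv{A})\bv{B}^{-1}$ with $\bv{A} = \bv{L}^T$, $\bv{B} = \bv{L}_u^T = (\bv{L}+\bs{\Delta}_L)^T$ shows that the $u$-th row of $\bv{\wh V} - \bv{\tilde V}$ equals $-\,\bv{\tilde v}_u^T\bs{\Delta}_L^T(\bv{L}_u^T)^{-1}$. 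Since $\bv{\tilde V}^T\bv{\tilde V} = \bv{I}_{k\times k}$ (established in the proof of \autoref{cor:ojaAsync}), $\norm{\bv{\tilde V}}_2 = 1$ and therefore $\norm{\bv{\tilde v}_u}_2 \le 1$ for every $u$ — this is what replaces the potentially large row norms of $\bv{Q}$.

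Combining the three bounds, for every $u \in [n]$ and $i \in [k]$,
$$\bigl|(\bv{\wh V} - \bv{\tilde V})_{u,i}\bigr| \;\le\; \norm{\bv{\tilde v}_u}_2\,\norm{\bs{\Delta}_L}_2\,\norm{(\bv{L}_u^T)^{-1}}_2 \;\le\; \frac{2\sqrt{2}\,n\epsilon\,\smax(\bv{Q})}{\smin(\bv{Q})^3}.$$
Summing the squares of these $n$ entries in column $i$ and taking a square root bounds $\norm{\bv{\hat v}_i - \bv{\tilde v}_i}_2$ by $\sqrt{n}$ times the right-hand side, which is at most $\frac{2\sqrt{2}\,\epsilon n^2\,\smax(\bv{Q})}{\smin(\bv{Q})^3}$, with slack in the power of $n$. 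The work that remains is purely bookkeeping: checking that the assumed ceiling on $\epsilon$ is exactly strong enough to keep $\norm{\bv{R}^{-1}}_2\norm{\bs{\Delta}_R}_F$ and $\norm{\bs{\Delta}_L}_F/\smin(\bv{Q})$ below the constants used above (so $\bv{R}_u$ stays positive definite and the $1-(\cdot)$ denominator in \autoref{thm:forward} stays $\ge \tfrac12$), and noting that \autoref{thm:forward} is applied separately for each node's matrix $\bv{R}_u$. I expect the only genuinely delicate point to be the row identity and the observation that its prefactor is governed by the orthonormality of $\bv{\tilde V}$ rather than by $\bv{Q}$ itself; everything else is a routine chain of operator-norm inequalities.
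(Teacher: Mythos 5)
Your proposal is correct and follows essentially the same route as the paper: write $\bv{R}_u = \bv{Q}^T\bv{Q} + \bs{\Delta}_R$ with $\norm{\bs{\Delta}_R}_F \le n\epsilon$, invoke the Cholesky forward-stability result (\autoref{thm:forward}) to get the same bound $\norm{\bs{\Delta}_L}_F \le \sqrt{2}\,n\epsilon\,\smax(\bv{Q})/\smin(\bv{Q})^2$, and then exploit the orthonormality of $\bv{\tilde V}$ so each row difference is at most $2\sqrt{2}\,n\epsilon\,\smax(\bv{Q})/\smin(\bv{Q})^3$ before aggregating over rows. The only (harmless) local difference is that where the paper cites a black-box linear-system stability bound (Theorem 7.2 of Higham) for the step from $\bs{\Delta}_L$ to the row perturbation, you derive it directly from the identity $\bv{A}^{-1}-\bv{B}^{-1}=\bv{A}^{-1}(\bv{B}-\bv{A})\bv{B}^{-1}$ together with $\smin(\bv{L}_u) \ge \tfrac{1}{2}\smin(\bv{Q})$, arriving at the same constant.
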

\begin{proof}
Denote $\bv{R} = \bv{ Q}^T \bv{ Q}$. By the bound $\left |(\bv{R}_u)_{i,j} - (\bv{ Q}^T \bv{ Q})_{i,j} \right | \le \epsilon$ we can write $\bv{R}_u = \bv{R} + \bs{\Delta}_R$ where $\bs{\Delta}_R$ is symmetric with $\norm{\bs{\Delta}_R}_F \le n\epsilon$. We have
$$\norm{\bv{R}^{-1} \bs{\Delta}_R}_2 \le \norm{\bv{R}^{-1}}_2\cdot \norm{\bs{\Delta}_R}_F < \frac{n\epsilon}{\smin(\bv{Q})^2} < \frac{1}{2}$$
where the last step follows from our upper bound on $\epsilon$.
Plugging into \autoref{thm:forward} gives that $\bv{R}_u$ has Cholesky decomposition: $\bv{R}_u = \bv{L}_u \bv{L}_u^T$ where $\bv{L}_u = \bv{L} + \bs{\Delta}_L$ and
\begin{align}\label{lbound}
\norm{\bs{\Delta}_L}_2 \le \norm{\bs{\Delta}_L}_F \le \norm{\bv{L}}_2 \cdot \sqrt{2} \norm{\bv{R}^{-1}}_2 \norm{\bs{\Delta}_R}_F \cdot \le \frac{\sqrt{2}\cdot \epsilon n \cdot \smax(\bv{Q})}{\smin(\bv{Q})^2}.
\end{align}

Let $\bv{\tilde V}^{(u)}$ be the $u^{th}$ row of $\bv{\tilde V} = \bv{Q}(\bv{L}^T)^{-1}$ and $\bv{\wh V}^{(u)}$ be the $u^{th}$ row of $\bv{\wh V}$ equal to the $u^{th}$ row of $\bv{Q}(\bv{L}_u^T)^{-1}$. 
Using \eqref{lbound}, we can bound $\norm{\bv{L}^{-1}}_2 \cdot \norm{\bs{\Delta}_L}_2 \le \frac{\sqrt{2}\epsilon n \smax(\bv{Q})}{\smin(\bv{Q})^3} \le \frac{1}{2}$ by our upper bound on $\epsilon$. By a standard linear system stability bound (e.g., Theorem 7.2 of \cite{higham2002accuracy}):
\begin{align*}
\norm{\bv{\tilde V}^{(u)} - \bv{\wh V}^{(u)}}_2 &\le \norm{\bv{\tilde V}^{(u)}}_2 \cdot \frac{\norm{\bv{L}^{-1}}_2 \cdot \norm{\bs{\Delta}_L}_2}{1- \norm{\bv{L}^{-1}}_2 \cdot \norm{\bs{\Delta}_L}_2} \le \frac{2\sqrt{2}\epsilon n \smax(\bv{Q})}{\smin(\bv{Q})^3}
\end{align*}
where we bound $\norm{\bv{\tilde V}^{(u)}}_2 \le 1$ since $\bv{\tilde V}$ is orthonormal. Since every entry in $\bv{\tilde V}^{(u)} - \bv{\wh V}^{(u)}$ can be loosely bounded in magnitude by the vector's norm, summing over all $n$ rows gives $\norm{\bv{\hat v}_i - \bv{\tilde v}_i}_2 \le \frac{2\sqrt{2}\epsilon n^2 \smax(\bv{Q})}{\smin(\bv{Q})^3}$, completing the lemma.

\end{proof}
To apply \autoref{lem:stabilityAnalysis} we must bound $\smax(\bv{Q})$ and $\smin(\bv{Q})$. We have the following:
\begin{lemma}[Conditioning of $\bv{Q}$.]\label{lem:qcond}
Consider a set of nodes executing \texttt{AsynchOja}$(T,T',\eta)$ (\ripref{alg:ojaGossip}) in the asynchronous communication model with weight matrix $\bv{W}$ and with step size $\eta$ and stopping time $T$ as specified in \autoref{cor:ojaAsync}.  Let $\bv{ Q} \in\R^{n\times k}$ be given by $(\bv{ Q})_{u,j} = {q}_u^{(j)}$ after round $T$. For some large enough constant $c$, ith probability $1-\delta - e^{-\Theta(n)}$:
\begin{align*}
\smax(\bv{ Q}) \le e^{c(\log \xi + 1/\epsilon)\cdot \left [\frac{ \lambda_1(\bv{D}+\bv{W})}{\bgap}+1\right]} \hspace{1em}\text{ and }\hspace{1em} \smin(\bv{{Q}}) \ge \delta/c.
\end{align*}
\end{lemma}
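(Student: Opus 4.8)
The plan is to write the matrix of final states as $\bv{Q} = \bv{A}\bv{Q}_0$, where $\bv{Q}_0 \in \R^{n\times k}$ has i.i.d.\ $\mathcal{N}(0,1)$ entries (the initialization of \ripref{alg:ojaGossip}) and $\bv{A} = \prod_{t=T-1}^{0}(\bv{I}+\eta\bv{x}_t\bv{x}_t^T)$ is the product of the $T$ update matrices, where each $\bv{x}_t = \euv$ is the pair chosen by the scheduler and is drawn i.i.d.\ from $\dw$; in particular $\norm{\bv{x}_t}_2^2 = 2$ holds \emph{deterministically} and $\E_{\bv{x}_t\sim\dw}[\bv{x}_t\bv{x}_t^T] = \bv{D}+\bv{W}$ by \eqref{eq:basicExp}, and $\bv{A}$ is independent of $\bv{Q}_0$. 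I would bound $\smin(\bv{Q})$ and $\smax(\bv{Q})$ separately; the two arguments draw on essentially disjoint parts of the randomness.

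For $\smin(\bv{Q})$: each factor $\bv{I}+\eta\bv{x}_t\bv{x}_t^T$ is symmetric with eigenvalues $1$ (multiplicity $n-1$) and $1+2\eta$, hence invertible with $\norm{(\bv{I}+\eta\bv{x}_t\bv{x}_t^T)^{-1}}_2 = 1$, so $\norm{\bv{A}^{-1}}_2 \le \prod_t \norm{(\bv{I}+\eta\bv{x}_t\bv{x}_t^T)^{-1}}_2 = 1$, i.e.\ $\smin(\bv{A}) \ge 1$ for \emph{every} realization of the scheduler. This reduces the claim to $\smin(\bv{Q}) \ge \smin(\bv{A})\smin(\bv{Q}_0) \ge \smin(\bv{Q}_0)$, and I would finish with a standard anti-concentration bound for the least singular value of a Gaussian matrix, $\Pr[\smin(\bv{Q}_0) \le t] \le (Ct)^{n-k+1}$ for a universal constant $C$: taking $t = \delta/c$ with $c$ large enough and using $\delta<1$ and (say) $k\le n/2$ makes this probability $e^{-\Theta(n)}$.

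For $\smax(\bv{Q})$: I would control $\smax(\bv{Q})^2 \le \norm{\bv{Q}}_F^2$ in expectation via a one-step recursion. Using $\norm{\bv{x}_t}_2^2 = 2$ one gets $\E_{\bv{x}_t}[(\bv{I}+\eta\bv{x}_t\bv{x}_t^T)^2] = \bv{I} + (2\eta+2\eta^2)(\bv{D}+\bv{W})$, so $\E[\norm{\bv{Q}_{t+1}}_F^2 \mid \bv{Q}_t] = \tr(\bv{Q}_t^T(\bv{I}+(2\eta+2\eta^2)(\bv{D}+\bv{W}))\bv{Q}_t) \le (1+(2\eta+2\eta^2)\lambda_1(\bv{D}+\bv{W}))\norm{\bv{Q}_t}_F^2$; iterating from $\E[\norm{\bv{Q}_0}_F^2] = nk$, and using that $\eta$ is small enough that $\eta\lambda_1(\bv{D}+\bv{W}) \le \eta\Lambda \le 1$ together with $\eta T = c_2(\log\xi+1/\epsilon)/\bgap$ from \autoref{cor:ojaAsync}, gives $\E[\norm{\bv{Q}}_F^2] \le e^{O(\lambda_1(\bv{D}+\bv{W})(\log\xi+1/\epsilon)/\bgap)}\cdot nk$. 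Markov then yields, with probability $\ge 1-\delta$, $\smax(\bv{Q}) \le \sqrt{nk/\delta}\cdot e^{O(\lambda_1(\bv{D}+\bv{W})(\log\xi+1/\epsilon)/\bgap)}$, and since $\sqrt{nk/\delta} \le e^{O(\log\xi+1/\epsilon)}$ (as $k\le n$ and $\xi = n/(\delta\epsilon\bgap) \ge n/\delta$) this folds into the exponent to give the claimed $\smax(\bv{Q}) \le e^{c(\log\xi+1/\epsilon)[\lambda_1(\bv{D}+\bv{W})/\bgap+1]}$. A union bound over the two failure events then gives both conclusions with probability $\ge 1-\delta-e^{-\Theta(n)}$.

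The step I expect to be the main (if mild) obstacle is getting the exponent in the $\smax$ bound right: the naive estimate $\norm{\bv{A}}_2 \le (1+2\eta)^T = e^{\Theta((\log\xi+1/\epsilon)/\bgap)}$ puts a bare $1/\bgap$ in the exponent, which is \emph{not} dominated by the target $\lambda_1(\bv{D}+\bv{W})/\bgap + 1$ when $\lambda_1(\bv{D}+\bv{W})$ is bounded away from $1$. One therefore has to track the Frobenius norm rather than the operator norm of the product, where the deterministic identity $\norm{\bv{x}_t}_2^2 = 2$ lets the $\eta^2$ term be absorbed and, crucially, makes the per-step growth factor $1 + O(\eta\lambda_1(\bv{D}+\bv{W}))$ rather than $1 + O(\eta\norm{\bv{x}_t}_2^2)$. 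Everything else is routine; the only care point is that the Gaussian least-singular-value bound needs $n$ large and $\delta$ not super-polynomially small for the $e^{-\Theta(n)}$ term.
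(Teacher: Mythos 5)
Your proposal is correct and follows essentially the same route as the paper: the same factorization $\bv{Q}=\bigl(\prod_t(\bv{I}+\eta\bv{x}_t\bv{x}_t^T)\bigr)\bv{Q}_0$, the same lower bound via all factors having eigenvalues $\ge 1$ plus a Gaussian least-singular-value bound on $\bv{Q}_0$, and the same expected-trace recursion using $\E[\bv{x}_t\bv{x}_t^T]=\bv{D}+\bv{W}$ and $\norm{\bv{x}_t}_2^2=2$ followed by Markov. The only (immaterial) difference is that you fold $\bv{Q}_0$ into the Frobenius-norm recursion, whereas the paper bounds $\norm{\bv{Q}_0}_2$ separately and multiplies it against the trace bound on the product of update matrices.
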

\begin{proof} We can write $\bv{Q} = \left (\prod_{j=1}^T (\bv{I} + \eta \bv{A}_j) \right) \bv{Q}_0$ for $T = \frac{c_2(\log \xi + 1/\epsilon)}{\bgap \cdot \eta}$ where $\bv{A}_j = \euv \euv^T$ for some pair $u,v$. 
Recall that $\bv{Q}_0$ has all entries independently selected from $\mathcal{N}(0,1)$. We can loosely bound that, with probability $\ge 1-\delta/3$, $\norm{\bv{Q}_0}_2 \le \norm{\bv{Q}_0}_F \le cnk \sqrt{\log(1/\delta)}$ for some sufficiently large $c$. Additionally, by Theorem 1.1 of \cite{rudelson2009smallest} we have $\smin(\bv{Q}_0) \ge \delta/c$ with probability $\ge 1- \delta/3 - e^{-\Theta(n)}$ for sufficiently large $c$.
Starting with the lower bound, since $\bv{A}_j$ is always PSD and so all eigenvalues of $\bv{I}+\eta \bv{A}_j$ are $\ge 1$, 
\begin{align*}
\smin(\bv{ Q}) &\ge \prod_{j=1}^T \smin(\bv{I} + \eta \bv{A}_j) \cdot \smin(\norm{\bv{Q}_0}_2)\\
&\ge \smin(\norm{\bv{Q}_0}_2) \ge \delta/c.
\end{align*}

For the upper bound, for any  $0 \le t \le T$ let $\bv{B}_t = \prod_{j=1}^t (\bv{I} + \eta \bv{A}_j)$. Let $\bv{B}_0 = \bv{I}$. Define $\alpha_t  \eqdef \E[\tr(\bv{B}_t \bv{B}_t^T)] = \tr(\E[\bv{B}_t \bv{B}_t^T])$. We have $\alpha_t  \ge \E[\norm{\bv{B}_t}_2^2]$. Further, we can bound $\alpha_t$ by:
\begin{align*}
\alpha_t &= \E \left[\tr((\bv{I}+\eta \bv{A}_t)\bv{B}_{t-1} \bv{B}_{t-1}^T (\bv{I}+\eta \bv{A}_t)^T)\right]\\
&= \E  \left [\tr(\bv{B}_{t-1} \bv{B}_{t-1}^T (\bv{I}+\eta \bv{A}_t)^T(\bv{I}+\eta \bv{A}_t))\right ]\tag{By cyclic property of the trace}\\
&= \tr \left ( \E [\bv{B}_{t-1} \bv{B}_{t-1}^T] \E [(\bv{I}+\eta \bv{A}_t)^T(\bv{I}+\eta \bv{A}_t) ]\right )\tag{By independence of samples $\bv{A}_t$}\\
&= \tr \left ( \E [\bv{B}_{t-1} \bv{B}_{t-1}^T](\bv{I} + 2(\eta+\eta^2)(\bv{D}+\bv{W})) \right )\tag{Since $\E[\bv{A}_t] = \bv{D} + \bv{W}$ and $\bv{A}_t^2  = 2\bv{A}_t$}\\
&\le \tr \left ( \E [\bv{B}_{t-1} \bv{B}_{t-1}^T] \right  ) \cdot \norm{\bv{I}+2(\eta+\eta^2)(\bv{D}+\bv{W})}_2\\
&\le \alpha_{t-1} \cdot \left (1+ 4 \eta \lambda_1(\bv{D}+\bv{W})\right)
\end{align*}
where the second to last bound follows since $\bv{D}+\bv{W}$ and $\bv{B}_t\bv{B}_t$ are PSD. The last bound follows since $\eta < 1$. We thus have, applying this argument inductively, $\alpha_T \le \left (1+ 4 \eta \lambda_1(\bv{D}+\bv{W})\right)^T$, and so, $\norm{\bv{B}_t}_2 \le \norm{\bv{B}_t}_2^2 \le \tr(\bv{B}_t \bv{B}_t^T) \le \frac{\left (1+ 4 \eta \lambda_1(\bv{D}+\bv{W})\right)^T}{\delta/3}$ with probability $\ge 1-\delta/3$ by Markov's inequality. Combined with our bound on $\norm{\bv{Q}_0}_2$ this gives, with probability $\ge 1-\delta/3-\delta/3$:

\begin{align*}
\smax(\bv{ Q}) = \norm{\bv{ Q}}_2 &\le \prod_{j=1}^T \norm{\bv{I} + \eta \bv{A}_j}_2 \cdot \norm{\bv{Q}_0}_2\\
&\le \frac{(1+4\eta \lambda_1(\bv{D}+\bv{W}))^{\frac{c_2(\log \xi + 1/\epsilon)}{\bgap \cdot \eta}}}{\delta}  \cdot \norm{\bv{Q}_0}_2\\
& \le e^{\frac{c(\log \xi + 1/\epsilon)\cdot  \lambda_1(\bv{D}+\bv{W})}{\bgap}} \cdot \frac{\norm{\bv{Q}_0}_2}{\delta/3}\\
&\le e^{\frac{c(\log \xi + 1/\epsilon)\cdot \lambda_1(\bv{D}+\bv{W})}{\bgap}+c\log \xi}\\ &\le e^{c(\log \xi + 1/\epsilon)\cdot \left [\frac{ \lambda_1(\bv{D}+\bv{W})}{\bgap}+1\right]}
\end{align*}
for some large enough $c$. In the second to last step we bound $\frac{\norm{\bv{Q}_0}_2}{\delta/3} \le \frac{cnk \sqrt{\log(1/\delta)}}{\delta/3} \le c\log \xi$ for large enough $c$.
The theorem follows after union bounding, which gives that both our upper and lower bounds hold with probability  $\ge 1-\delta$. 
\end{proof}

We can finally complete our analysis, proving our main asynchronous eigenvector approximation theorem, restated below:

\begin{reptheorem}{cor:ojaAsync2}
Let $\bv{v}_1,...,\bv{v}_k$ be the top $k$ eigenvectors of the communication matrix $\bv{D}+\bv{W}$ in an asynchronous communication model, and let $\Lambda,\bgap, \gamma_{mix}$ be  bounds satisfying: $\Lambda \ge \sum_{j=1}^k \lambda_j(\bv{D}+\bv{W})$, $\bgap \le \min_{j \in [k]} [\lambda_j(\bv{D}+\bv{W})-\lambda_{j+1}(\bv{D}+\bv{W})]$, and $\gamma_{mix} \le \min \left [ \frac{1}{n},\log \left(\lambda_2^{-1}(\bv{I}-\frac{1}{2}\bv{D} + \frac{1}{2}\bv{W})\right)\right ]$.

For any $\epsilon,\delta \in(0,1)$, let $\xi = \frac{n}{\delta \epsilon \cdot \bgap}$. Let $\eta =\frac{c_1 \epsilon^2 \cdot \bgap \cdot \delta^2}{\Lambda k^3 \log^3 \xi}$ for sufficiently small $c_1$, and $T = \frac{c_2 \cdot (\log \xi + 1/\epsilon)}{\bgap \cdot \eta}$, $T' = \frac{c_3(\log \xi + 1/\epsilon)\cdot \lambda_1(\bv{D}+\bv{W})}{\bgap \cdot \gamma_{mix}}$ for sufficiently large $c_2,c_3$.
 For all $u \in [n],i \in [k]$, let ${\hat v}_u^{(j)}$ be the local state computed by \ripref{alg:ojaGossip}.
If $\bv{\hat V} \in \R^{n \times k}$ is given by $(\bv{\hat V})_{u,j} = {\hat v}_u^{(j)}$
 and $\bv{\hat v}_i$ is the $i^{th}$ column of $\bv{\hat V}$, then with probability $\ge 1 -\delta-e^{-\Theta(n)}$, for all $i \in [k]$:
 \begin{align*}
 \left |\bv{\hat v}_i^T \bv{v}_{i}\right| \ge 1-\epsilon\hspace{1em}\text{ and }\hspace{1em}\norm{\bv{\hat v}_i}_2 \le 1+\epsilon.
 \end{align*}
\end{reptheorem}
\begin{proof}
Let $\bv{ Q} \in \mathbb{R}^{n \times k}$ be given by $(\bv{ Q})_{u,j} = {q}_u^{(j)}$, after round $T$ (i.e. $\bv{Q}$ is the input passed to \texttt{AsynchOrth} by \texttt{AsynchOja}.
Let $\bv{\tilde V}$ be the output given by running exact Cholesky orthogonalization (\ripref{alg:chol}) on $\bv{Q}$. Let $\bv{\tilde v}_i$ be its $i^{th}$ column.
Let $\epsilon' =  \frac{\epsilon \delta^3}{n^2 \cdot exp\left( \frac{c(\log \xi + 1/\epsilon)\cdot \lambda_1(\bv{D}+\bv{W})}{\bgap}\right)}$. 
By our bound on $\sigma_{\max}(\bv{Q}) = \norm{\bv{Q}}_2$ in \autoref{lem:qcond}, with probability  $1-\delta-\epsilon^{-\Theta(n)}$, $T'$  satisfies $
T' = \Omega \left ( \frac{\log \left ( \frac{n \norm{\bv{Q}}_2}{\epsilon'  \delta} \right  )}{\gamma_{mix}} \right ).$
So by \autoref{lem:dotProducts}, conditioned on the previous bound holding, with probability  $\ge 1-\delta$, for all $u$ and $i,j\in[k]$, $\bv{R}_u$ computed in \ripref{alg:pporth} satisfies: $$\left |(\bv{R}_u)_{i,j} - (\bv{ Q}^T \bv{ Q})_{i,j} \right | \le \epsilon'.$$
We can then
apply \autoref{lem:stabilityAnalysis} and our bounds on the maximum and minimum singular values of $\bv{Q}$ in  \autoref{lem:qcond}. With probability  $1-2\delta-e^{-\Theta(n)}$, for all $i \in [k]$:
\begin{align}\label{normBound}
\norm{\bv{\hat v}_i  -  \bv{\tilde v}_i}_2 \le \frac{2\sqrt{2} \epsilon' n^2 e^{ \frac{c(\log \xi + 1/\epsilon) \cdot\lambda_1(\bv{D}+\bv{W})}{\bgap}}}{(\delta/c)^3} = O(\epsilon).
\end{align}

Now, by \autoref{cor:ojaAsync} and our setting of $\eta$ and $T$,
with probability  $1-\delta$, for all $i \in[k]$, $\left |\bv{\tilde  v}_i^T\bv{v}_i \right | \ge 1-\epsilon$. 
Since $\norm{\bv{v}_i}_2 = 1$, \eqref{normBound} gives, with overall probability $\ge 1-3\delta - e^{-\Theta(n)}$:
\begin{align*}
 \left |\bv{\hat v}_i^T \bv{v}_i \right | = \left |\bv{\tilde  v}_i^T \bv{v}_i - (\bv{\tilde v}_i^T  - \bv{\hat v}_i^T) \bv{v}_i \right |&\ge \left |\bv{\tilde  v}_i^T \bv{v}_i\right  | - \left |(\bv{\tilde v}_i^T  - \bv{\hat v}_i^T) \bv{v}_i \right |\\
&\ge 1-\epsilon - \norm{\bv{\hat v}_i - \bv{\tilde v}_i}_2\\
&\ge 1-O(\epsilon).
\end{align*}
Additionally, using \eqref{normBound} and $\norm{\bv{\tilde v}_i}_2 = 1$ (shown in \autoref{cor:ojaAsync}), we can apply triangle inequality to show $\norm{\bv{\hat v}_i}_2 \le 1 + O(\epsilon)$. The theorem follows by adjusting constants on $\epsilon,\delta$.
\end{proof}

\section{Community Detection in the \modelpq}\label{app:gnp}

To convert our matrix concentration bound to a bound on the closeness of the two eigenvectors, we apply the
 Davis-Kahan theorem: 

\begin{theorem}[Davis-Kahan Theorem -- \cite{demmel1997applied} Theorem 5.4]\label{davisKahan} Let $\bv{M}$ and $\bv{\hat M}$ be symmetric matrices with eigenvectors $\bv{v}_1,...,\bv{v}_n$ and $\bv{\hat v}_1,...,\bv{\hat v}_n$ respectively. We have:
\begin{align*}
1 - \left |\bv{\hat  v}_i^T \bv{v}_i\right | \le \frac{2\norm{\bv{M}-\bv{\hat M}}_2}{\min\left [\lambda_{i-1}(\bv{M})-\lambda_{i}(\bv{M}),\lambda_{i}(\bv{M})-\lambda_{i+1}(\bv{M})\right]}.
\end{align*}
\end{theorem}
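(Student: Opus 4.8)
The plan is to prove the bound by a standard two-sided estimate on how far $\bv{\hat M}$ moves the eigenvector $\bv{v}_i$ away from scaling by the nominal eigenvalue $\lambda_i \eqdef \lambda_i(\bv{M})$, expressed in the eigenbasis of $\bv{\hat M}$. First I would fix the sign of $\bv{\hat v}_i$ so that $\bv{\hat v}_i^T\bv{v}_i \ge 0$, and expand $\bv{v}_i = \sum_{j=1}^n c_j \bv{\hat v}_j$ with $\sum_j c_j^2 = 1$; then $\bv{\hat v}_i^T\bv{v}_i = c_i$, so with $\theta \eqdef \arccos(c_i) \in [0,\pi/2]$ the quantity to control is $1 - c_i = 1 - \cos\theta$, and it suffices to bound $\sin\theta = (\sum_{j\ne i}c_j^2)^{1/2}$ because $1-\cos\theta = 2\sin^2(\theta/2) \le 2\sin(\theta/2)\cos(\theta/2) = \sin\theta$ on $[0,\pi/2]$ (here $\lambda_i$ is implicitly simple, since otherwise the denominator in the statement vanishes).

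Next I would estimate $\norm{(\bv{\hat M}-\lambda_i\bv{I})\bv{v}_i}_2$ two ways. On one hand, $(\bv{\hat M}-\lambda_i\bv{I})\bv{v}_i = (\bv{M}-\lambda_i\bv{I})\bv{v}_i + (\bv{\hat M}-\bv{M})\bv{v}_i = (\bv{\hat M}-\bv{M})\bv{v}_i$ since $\bv{v}_i$ is an eigenvector of $\bv{M}$, so this vector has norm at most $\epsilon \eqdef \norm{\bv{M}-\bv{\hat M}}_2$. On the other hand, expanding in the $\bv{\hat v}_j$ basis, $(\bv{\hat M}-\lambda_i\bv{I})\bv{v}_i = \sum_j c_j(\lambda_j(\bv{\hat M})-\lambda_i)\bv{\hat v}_j$, whose squared norm is $\sum_j c_j^2(\lambda_j(\bv{\hat M})-\lambda_i)^2 \ge \sin^2\theta \cdot \min_{j\ne i}(\lambda_j(\bv{\hat M})-\lambda_i)^2$. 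To lower-bound $|\lambda_j(\bv{\hat M})-\lambda_i|$ for $j \ne i$: Weyl's inequality gives $|\lambda_j(\bv{\hat M})-\lambda_j(\bv{M})| \le \epsilon$, while by monotonicity of the $\lambda_j(\bv{M})$ and the definition of the gap $\delta \eqdef \min[\lambda_{i-1}(\bv{M})-\lambda_i(\bv{M}),\,\lambda_i(\bv{M})-\lambda_{i+1}(\bv{M})]$ we have $|\lambda_j(\bv{M})-\lambda_i| \ge \delta$ for all $j\ne i$; hence $|\lambda_j(\bv{\hat M})-\lambda_i| \ge \delta - \epsilon$. Combining, $\epsilon^2 \ge \sin^2\theta\,(\delta-\epsilon)^2$, i.e. $\sin\theta \le \epsilon/(\delta-\epsilon)$ whenever $\epsilon < \delta$.

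Finally I would close with a short case split on $\epsilon$ versus $\delta$. If $\epsilon \le \delta/2$, then $\delta-\epsilon \ge \delta/2$, so $1 - |\bv{\hat v}_i^T\bv{v}_i| = 1-\cos\theta \le \sin\theta \le \epsilon/(\delta-\epsilon) \le 2\epsilon/\delta$, as desired. If $\epsilon > \delta/2$, then $2\epsilon/\delta > 1 \ge 1 - |\bv{\hat v}_i^T\bv{v}_i|$ and the inequality holds trivially. The only delicate point is exactly this last case — the two-sided estimate degenerates when the perturbation is comparable to the spectral gap — but it costs nothing since the target bound already exceeds $1$ there, which is precisely the slack that the factor $2$ in the statement absorbs. (Equivalently, one can simply cite the classical Davis–Kahan $\sin\theta$ theorem with the gap taken on the $\bv{M}$ side and then apply $1-\cos\theta \le \sin\theta$; the computation above is essentially that proof specialized to a single eigenvector.)
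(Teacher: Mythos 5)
Your proposal is correct, but it takes a genuinely different route from the paper, which does not prove the bound at all: the paper simply cites Theorem 5.4 of \cite{demmel1997applied}, whose stated bound controls $\sin 2\theta_i$, and remarks that the proof there also bounds $\sin\theta_i = \sqrt{1-(\bv{\hat v}_i^T\bv{v}_i)^2} \ge 1-|\bv{\hat v}_i^T\bv{v}_i|$, giving the statement. You instead reprove the single-vector $\sin\theta$ theorem from first principles: normalize the sign, expand $\bv{v}_i$ in the eigenbasis of $\bv{\hat M}$, evaluate $\norm{(\bv{\hat M}-\lambda_i\bv{I})\bv{v}_i}_2$ two ways (at most $\epsilon = \norm{\bv{M}-\bv{\hat M}}_2$ via the residual identity, at least $\sin\theta\cdot\min_{j\ne i}|\lambda_j(\bv{\hat M})-\lambda_i|$ via the expansion), and then transfer the gap from $\bv{\hat M}$'s spectrum to $\bv{M}$'s gap $\delta$ via Weyl, with the case split $\epsilon\le\delta/2$ versus $\epsilon>\delta/2$ absorbing the loss into the factor $2$. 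This buys something the citation does not: classical $\sin\theta$ statements measure the distance from $\lambda_i$ to the \emph{other} matrix's spectrum, whereas the version quoted here uses the gap of $\bv{M}$ alone, and your Weyl-plus-case-split step is precisely what justifies that translation and explains where the constant $2$ comes from; the only applications of the bound with $\delta-\epsilon\le 0$ are handled by the trivial case, and the elementary facts you use ($1-\cos\theta\le\sin\theta$ on $[0,\pi/2]$, vacuity when the gap vanishes) are all invoked correctly. So your argument is a valid, self-contained replacement for the paper's appeal to Demmel, at the cost of a slightly longer write-up.
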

\begin{proof}
Theorem 5.4 of \cite{demmel1997applied} states the above bound with the lefthand side equal to $\sin 2\theta_i$, where $\theta_i$ is the angle between $\bv{v}_i$ and $\bv{\hat  v}_i$. It is noted in the proof that the bound can also be proven  on $\sin \theta_i = \sqrt{1-\cos^2 \theta_i} = \sqrt{1-(\bv{\hat  v}_i^T \bv{v}_i)^2} \ge 1 - \left |\bv{\hat  v}_i^T \bv{v}_i\right |$, giving our statement of the bound.
\end{proof}
Using this theorem we show:
\begin{lemma}[Concentration of $G(n,p,q)$ communication matrix second eigenvector]\label{gnpConcentration}
Let  $\bv{v}_2$ be the second eigenvector of $\bv{C}^{(p,q)}$ for any $p > q > 0$. 
Let $\bv{\tilde v}_2$ be the second eigenvector of $(\bv{D}+\bv{W})$, where $\bv{W}$ is the communication weight matrix and  $\bv{D}$ is the degree matrix in the \modelg. Then if $\frac{\min\left  [q, p-q \right]}{\sqrt{p+q}} \ge \frac{9\sqrt{\log(n/\delta)}}{\epsilon \sqrt{n}}$, with probability  $\ge 1-\delta$:
\begin{align*}
\left | \bv{\tilde v}_2^T  \bv{v}_2 \right | \ge  1-\epsilon.
\end{align*}
\end{lemma}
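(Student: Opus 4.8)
The plan is to rescale so that $\bv D+\bv W$ and $\bv C^{(p,q)}$ are comparable, and then to apply the Davis--Kahan theorem (\autoref{davisKahan}). Let $\bv A\in\{0,1\}^{n\times n}$ be the adjacency matrix of the random graph $G$ from \autoref{def:sbm}, let $m$ be its number of edges, and let $\bv D_{\bv A}$ be the diagonal degree matrix of $\bv A$. Since the scheduler picks a uniformly random edge, $\bv W=\tfrac1m\bv A$ and $\bv D=\tfrac1m\bv D_{\bv A}$, so $\bv D+\bv W=\tfrac1m(\bv D_{\bv A}+\bv A)$; as scaling by the positive scalar $m$ preserves eigenvectors and their ordering, $\bv{\tilde v}_2$ is also the second eigenvector of $\bv B\eqdef\bv D_{\bv A}+\bv A$. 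I would then compute $\E[\bv B]$: every off-diagonal entry of $\bv C^{(p,q)}$ equals the corresponding $\E[\bv A_{u,v}]$, and every vertex has the same expected degree $\tfrac{n(p+q)}{2}-p$, so
\begin{align*}
\E[\bv B]=\E[\bv A]+\E[\bv D_{\bv A}]=\bigl(\bv C^{(p,q)}-p\bv I\bigr)+\Bigl(\tfrac{n(p+q)}{2}-p\Bigr)\bv I=\bv C^{(p,q)}+c_0\bv I, \qquad c_0\eqdef\tfrac{n(p+q)}{2}-2p.
\end{align*}
Call this matrix $\bv M$. Adding a multiple of the identity does not move eigenvectors, so $\bv M$ has exactly the eigenvectors of $\bv C^{(p,q)}$; in particular its second eigenvector is $\bv v_2=\bs\chi/\sqrt n$. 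Thus the whole statement reduces to a perturbation bound between the second eigenvector of $\bv M$ and that of $\bv B$, obtained from \autoref{davisKahan} applied with unperturbed matrix $\bv M$ and perturbed matrix $\bv B$.

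Next I would record the eigengap of $\bv M$. By the eigendecomposition in \eqref{spect_c}, $\bv C^{(p,q)}$ has eigenvalues $\tfrac{n(p+q)}{2}\ge\tfrac{n(p-q)}{2}\ge 0$ (the last repeated $n-2$ times), so for $p>q>0$ the ordered eigenvalues of $\bv M$ are $n(p+q)-2p\ge(n-2)p\ge\tfrac{n(p+q)}{2}-2p$ with top eigenvectors $\bv v_1,\bv v_2$, giving
\begin{align*}
\lambda_1(\bv M)-\lambda_2(\bv M)=nq, \qquad \lambda_2(\bv M)-\lambda_3(\bv M)=\tfrac{n(p-q)}{2}.
\end{align*}
Hence the denominator in \autoref{davisKahan} at index $2$ is $\min\{nq,\tfrac{n(p-q)}{2}\}\ge\tfrac12\min\{nq,n(p-q)\}$.

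The technical heart --- and the step I expect to be the main obstacle --- is the spectral-norm bound $\norm{\bv B-\bv M}_2\le\norm{\bv A-\E[\bv A]}_2+\norm{\bv D_{\bv A}-\E[\bv D_{\bv A}]}_2=O\bigl(\sqrt{n(p+q)\log(n/\delta)}\bigr)$ with a small absolute constant. The degree term equals $\max_u\abs{\deg(u)-\E[\deg(u)]}$, a max over $n$ sums of independent $\{0,1\}$ variables of mean $\Theta(n(p+q))$, and a Chernoff bound plus a union bound over vertices controls it at the stated rate. For the adjacency term I would write $\bv A-\E[\bv A]$ as a sum over unordered pairs of independent, mean-zero symmetric matrices, each supported on the two rows/columns of that pair hence of spectral norm $1$, with $\sum_{(u,v)}\operatorname{Var}(\bv A_{u,v})\cdot(\bv{e}_u\bv{e}_u^T+\bv{e}_v\bv{e}_v^T)$ of operator norm $\le n(p+q)$; matrix Bernstein then yields $\norm{\bv A-\E[\bv A]}_2=O(\sqrt{n(p+q)\log(n/\delta)})$ provided $n(p+q)=\Omega(\log(n/\delta))$. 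Crucially, the hypothesis $\tfrac{\min[q,p-q]}{\sqrt{p+q}}\ge\tfrac{9\sqrt{\log(n/\delta)}}{\epsilon\sqrt n}$ with $\epsilon\le1$ forces exactly this regime: it gives $nq,\,n(p-q)\ge 81\log(n/\delta)$, hence $n(p+q)\ge 81\log(n/\delta)$. (This is precisely the perturbation-of-the-block-model computation used in \cite{959929,spielmanNotes}, and any clean version of their matrix concentration bound may be substituted here; tracking its constant is what fixes the constant $9$ in the statement.) On this event one also has $\bv A\ne\bv 0$, so the rescaling by $m$ above is legitimate.

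Finally I would combine. Multiplying the hypothesis through by $n$ rewrites it as $\min\{nq,n(p-q)\}\ge\tfrac9\epsilon\sqrt{n(p+q)\log(n/\delta)}$, so \autoref{davisKahan} at index $2$ gives, with probability $\ge1-\delta$,
\begin{align*}
1-\abs{\bv{\tilde v}_2^T\bv v_2}\le\frac{2\norm{\bv B-\bv M}_2}{\min\{nq,\tfrac{n(p-q)}{2}\}}\le\frac{4C\sqrt{n(p+q)\log(n/\delta)}}{\min\{nq,n(p-q)\}}\le\frac{4C}{9}\,\epsilon\le\epsilon,
\end{align*}
where $C$ is the absolute constant from the concentration bound (taken, via the constants in matrix Bernstein and the Chernoff bound, so that $C\le 9/4$). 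This is the claimed inequality. The only places that genuinely require care are the constant-tracking in the concentration step and confirming that the hypothesis lands us in the regime where that concentration bound holds; everything else is bookkeeping.
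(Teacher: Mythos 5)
Your proposal is correct and follows essentially the same route as the paper: both pass to the unnormalized matrix $\bv{D}_{\bv{A}}+\bv{A}$ (scaling preserves eigenvectors), compare it with its expectation, which is $\bv{C}^{(p,q)}$ plus a multiple of $\bv{I}$ with eigengaps $nq$ and $n(p-q)/2$, bound the spectral deviation by $O(\sqrt{n(p+q)\log(n/\delta)})$ via matrix Bernstein, and conclude with Davis--Kahan; the paper merely applies Bernstein once to the rank-one edge terms $\euv\euv^T$ instead of splitting into adjacency and degree deviations as you do. The only delicate point is the one you flag yourself: your claim that the combined constant can be taken $\le 9/4$ is tight (the paper's own bookkeeping gives $\sqrt{70}\approx 8.4$ against the stated $9$), but this affects only the explicit constant, not the validity of the argument.
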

\begin{proof}
Consider the $n\times {n \choose 2}$ matrix $\bv{B}$ with columns indexed by unordered pairs of vertices, $(u,v)$ with $u \neq v$. If $u$ and $v$ are in the same cluster, the $(u,v)$ column is $\sqrt{p} \cdot \euv$. If they are in different clusters it is $\sqrt{q}\cdot \euv$. We can see  that $\bv{B} \bv{B}^T = \bv{C}^{(p,q)} + \frac{n(p+q)-2p}{2} \cdot \bv{I}_{n \times n}$. Thus, the second eigenvector of $\bv{B}\bv{B}^T$ is identical to that of $\bv{C}^{(p,q)}$, $\bv{v}_2$. Further we can compute the eigenvalues:
\begin{align*}
\lambda_1(\bv{B}\bv{B}^T) &= \lambda_1(\bv{C}^{(p,q)}) + \frac{n(p+q)-2p}{2}  = n(p+q)-p\\
\lambda_2(\bv{B}\bv{B}^T) &= \lambda_2(\bv{C}^{(p,q)}) + \frac{n(p+q)-2p}{2}  = np-p\\
\lambda_i(\bv{B}\bv{B}^T) &= \frac{n(p+q)}{2}-p\text{ for all $i \in \{3,...,n\} $}.
\end{align*}

Let $\bv{\tilde B} \bv{\tilde B}^T$ be obtained from $\bv{B}$ by independently setting each column to $\bv{b}_i/\sqrt{p_i}$ with probability $p_i$ equal to its squared norm ($p$ for intracluster columns and $q$ for intercluster columns) and to $\bv{0}$ otherwise. In this way $\E[\bv{\tilde B} \bv{\tilde B}^T] = \bv{B}\bv{B}^T$. Further, we can see that sampling $\bv{\tilde B} \bv{\tilde B}^T$ is identical to sampling $\bv{D}+\bv{W}$ in the \modelg, up to a scaling. Thus, to prove the lemma, it suffices to show that $\bv{\tilde v}_2$, the second eigenvector of $\bv{\tilde B} \bv{\tilde B}^T$ is close to $\bv{v}_2$ with good probability. We do this via a matrix Bernstein bound, which shows that 
$\bv{\tilde B} \bv{\tilde B}^T$ is close to $\bv{ B} \bv{ B}^T$ with good probability.

Specifically let $\bv{b}_i$ and $\bv{\tilde b}_i$ be the $i^{th}$ columns of $\bv{B}$  and $\bv{\tilde B}$  respectively.
We can apply Theorem 1.4 of \cite{tropp2012user}, where in the notation of the theorem we set $\bv{X}_k = \bv{\tilde b}_k\bv{\tilde b}_k^T - \bv{b}_k\bv{ b}_k^T$. We  have $\E [\bv{X}_k] = \bv{0}$ and $\norm{\bv{X}_k} \le 2$ always, and
\begin{align*}
\sigma^2  = \left  \| \sum_k \E(\bv{X}_k^2)\right  \|_2 &= \left  \|\sum_{k} \left (\E[\norm{\bv{\tilde b}_k}_2^2 \cdot \bv{\tilde b}_k \bv{\tilde b}_k^T]-2  \E [\bv{\tilde b}_k \bv{\tilde b}_k^T \bv{b}_k \bv{b}_k^T] + \norm{\bv{b}_k}_2^2\bv{b}_k \bv{b}_k^T \right  )\right \|_2\\
&= \left  \|\sum_{k} \left (2 \bv{b}_k \bv{b}_k^T - \norm{\bv{b}_k}_2^2\cdot \bv{b}_k \bv{b}_k^T\right) \right \|_2\tag{Since $\norm{\bv{\tilde b}_k}_2^2 = 2$ with probability  $\norm{\bv{b}_k}_2^2$, $0$ otherwise.}\\
&\le \norm{  2 \bv{B}\bv{B}^T}_2 \le  2n(p+q).
\end{align*}
where the final bound follows by our computation of $\lambda_1(\bv{B}\bv{B}^T) = \norm{  \bv{B}\bv{B}^T}_2$ above.
The inequality  follows from the fact that $2 - \norm{\bv{b}_k}_2^2 \in [2-p^2,2-q^2]$ and so $\sum_{k} \left (2 \bv{b}_k \bv{b}_k^T - \norm{\bv{b}_k}_2^2\cdot \bv{b}_k \bv{b}_k^T\right)$ is PSD and $\preceq (2-q^2)\bv{B}\bv{B}^T \preceq 2 \bv{B}\bv{B}^T$.

Plugging the above bounds into Theorem 1.4 of \cite{tropp2012user} gives:
\begin{align}\label{troppBound}
\Pr  \left  [\norm{\bv{\tilde B} \bv{\tilde B}^T - \bv{B} \bv{B}^T}_2 > \Delta \right  ] \le n \cdot \exp \left (\frac{-\Delta^2/2}{2n(p+q)+2\Delta/3} \right  ).
\end{align}

Applying the Davis-Kahan theorem (\autoref{davisKahan}) to $\bv{BB}^T$  and $\bv{\tilde B  \tilde B}^T$, we have using our eigenvalue calculations for $\bv{BB}^T$:
\begin{align}\label{BBgaps}
\min\left [\lambda_{1}(\bv{BB}^T)-\lambda_{2}(\bv{BB}^T),\lambda_{2}(\bv{BB}^T)-\lambda_{3}(\bv{BB}^T)\right] = \min \left  (q\cdot n, \frac{n(p-q)}{2} \right).
\end{align}
and so $1 - \left |\bv{\tilde  v}_2^T \bv{v}_2\right | \le \epsilon$ as long as $\Delta \le \epsilon \cdot  \min \left  [\frac{q\cdot n}{2}, \frac{n(p-q)}{4} \right].$ Plugging this in  to \eqref{troppBound} and simplifying:
\begin{align}\label{finalBernstein}
\Pr  \left  [\norm{\bv{\tilde B} \bv{\tilde B}^T - \bv{B} \bv{B}^T}_2 > \epsilon \cdot  \min \left  [\frac{q\cdot n}{2}, \frac{n(p-q)}{4} \right] \right  ] &\le n \cdot \exp \left (\frac{-\epsilon^2 \cdot \min\left  [\frac{q^2\cdot n^2}{4}, \frac{n^2(p-q)^2}{16} \right]/2}{2.166 \cdot n (p+q)} \right  )\nonumber\\
&\le n \cdot \exp \left (\frac{-\epsilon^2 n \cdot \min\left  [q^2, (p-q)^2 \right]}{70(p+q)} \right  )
\end{align}
This probability is bounded by $\delta$ if $\frac{\min\left  [q, p-q \right]}{\sqrt{p+q}} \ge \frac{\sqrt{70 \log(n/\delta)}}{\epsilon \sqrt{n}}$, giving the lemma.

\end{proof}

With \autoref{gnpConcentration} ensuring that the second eigenvector of $\bv{D}+\bv{W}$ in fact  approximates the cluster indicator vector $\bs{\chi}$, we can now show that  approximately computing this eigenvector using the \texttt{AsynchOja} algorithm and thresholding its entries by their signs gives an approximately  correctly distributed community detection protocol.

\begin{reptheorem}{thm:modelg}
Consider a set of nodes executing \texttt{AsynchCD}$(T,T',\eta)$ (\ripref{alg:commDetec}) in the \modelg. Let $\rho = \min \left (\frac{q}{p+q}, \frac{p-q}{p+q} \right )$. If for sufficiently small constant $c_1$ and sufficiently large constants $c_2,c_3$,
\begin{align*}
\hspace{-.5em}\eta =\frac{c_1 \epsilon^2 \delta^2 \cdot \rho}{\log^3 \left (\frac{n}{\epsilon \delta \rho} \right )},\hspace{.5em}\hspace{.5em}T = \frac{c_2 \cdot n \cdot \left (\log^3 \left (\frac{n}{\epsilon \delta \rho} \right ) + \frac{\log \left (\frac{n}{\epsilon \delta \rho} \right )}{\epsilon}\right)}{\epsilon^2 \delta^2 \rho^2},\hspace{.5em}\text{ and }\hspace{.2em}T' = \frac{c_3\cdot n \cdot \left (\log \left (\frac{n}{\epsilon \delta \rho} \right ) + \frac{1}{\epsilon} \right )}{\rho^2},
\end{align*}
and if $\frac{\min\left  [q, p-q \right]}{\sqrt{p+q}} \ge \frac{c_4\sqrt{\log (n/\delta)}}{\epsilon \sqrt{n}}$ for large enough constant $c_4$,
then, with probability  $1-\delta$, after ignoring $\epsilon \cdot n$ nodes, all remaining nodes in $V_1$ terminate in some state $s_1$, and all remaining nodes in $V_2$ terminate in some different state $s_2$. Supressing polylogarithmic factors in the parameters, the total number of global rounds and local rounds required are: $T+T' = \tilde O \left ( \frac{n}{\epsilon^3 \delta^2 \rho^2} \right )$ and $L = \tilde O \left (\frac{1}{\epsilon^3 \delta^3  \rho^2} \right)$.
\end{reptheorem}
\begin{proof}
Let  $\bv{v}_2 = \bs{\chi}$ be the second eigenvector of $\bv{C}^{(p,q)}$ and $\bv{\tilde v}_2$ be the second eigenvector of $\bv{D}+\bv{W}$. By \autoref{gnpConcentration}, with probability $1-\delta$, $|\bv{\tilde v}_2^T \bv{v}_2| \ge  1-\epsilon$ which gives if $\bv{\tilde v}_2^T \bv{v}_2 \ge 0$,
\begin{align*}
\epsilon \ge 1 - \bv{\tilde v}_2^T \bv{v}_2 = \bv{\tilde v}_2^T(\bv{\tilde v}_2 - \bv{v}_2)
\end{align*}
and so $\norm{\bv{\tilde v}_2 - \bv{v}_2}_2 \le \epsilon$. Similarly, if $\bv{\tilde v}_2^T \bv{v}_2 \le 0$ we have $\norm{\bv{\tilde v}_2 + \bv{v}_2}_2 \le \epsilon$.
If $\bv{\hat v}_2$ satisfies $|\bv{\hat  v}_2^T \bv{\tilde v}_2| \ge 1-\epsilon$ and $\norm{\bv{\hat v}}_2 \le 1+\epsilon$ then this gives:
\begin{align*}
|\bv{\hat  v}_2^T \bv{v}_2| \ge |\bv{\hat  v}_2^T \bv{\tilde v}_2| - |\bv{\hat  v}_2^T (\bv{v}_2 - \bv{\tilde v}_2)| \ge 1-(2+\epsilon) \epsilon 
\end{align*} 
in the case where $\bv{\tilde  v}_2^T \bv{ v}_2 \ge 0$ and
\begin{align*}
|\bv{\hat  v}_2^T \bv{v}_2| \ge |-\bv{\hat  v}_2^T \bv{\tilde v}_2| - |\bv{\hat  v}_2^T (\bv{v}_2 + \bv{\tilde v}_2)| \ge 1-(2+\epsilon)\epsilon
\end{align*} 
in the case where $\bv{\tilde  v}_2^T \bv{ v}_2 \le 0$. Either way, we have:
\begin{align}\label{eigenConversion}
|\bv{\hat  v}_2^T \bv{v}_2| = 1-O(\epsilon).
\end{align}

Now, in the \modelg\ $\bv{D} + \bv{W}$ is a scaling of $\bv{\tilde B} \bv{\tilde B}^T$, as defined in the proof of \autoref{gnpConcentration}. The scale factor is the inverse of the number of edges in the sampled graph which is $\frac{n^2(p+q)-2np}{4}$ in expectation. As long as $p+q  \ge \frac{c\log(n/\delta)}{n}$ for large enough $c$ (which is implied by  our assumption $\frac{\min\left  [q, p-q \right]}{\sqrt{p+q}} \ge \frac{c_4\sqrt{\log (n/\delta)}}{\epsilon \sqrt{n}}$), then by a simple Chernoff bound, the number of sampled edges will be within a $2$ factor of this expected value with probability  $1-\delta$. Thus:
\begin{align}\label{eq:scale}
\bv{D} + \bv{W} =  s \cdot  \bv{\tilde B} \bv{\tilde B}^T \hspace{1em} \text { for some }\hspace{1em} s \ge \frac{2}{n^2(p+q)}.
\end{align}
Again applying the matrix Bernstein bound in \autoref{gnpConcentration}, equation \eqref{finalBernstein}, with probability  $\ge 1-\delta$:
\begin{align}\label{camPertrub}
\norm{\bv{\tilde B}\bv{\tilde B}^T-\bv{ B}\bv{ B}^T}_2 \le \epsilon \min \left  [\frac{q\cdot  n}{2},\frac{n(p-q)}{4} \right  ].
\end{align}

Combined with the eigenvalue gap calculations for $\bv{BB}^T$ shown in \eqref{BBgaps}, we have:
\begin{align}\label{Btildegaps}
\min\left [\lambda_{1}(\bv{\tilde B \tilde B}^T)-\lambda_{2}(\bv{\tilde B \tilde B}^T),\lambda_{2}(\bv{
\tilde B \tilde B}^T)-\lambda_{3}(\bv{\tilde B \tilde B}^T)\right] \ge (1-\epsilon) \cdot \min \left  (q\cdot n, \frac{n(p-q)}{2} \right).
\end{align}
After scaling, by \eqref{eq:scale}:
\begin{align*}
\hspace{-1em}\min\left [\lambda_{1}(\bv{D}+\bv{W})-\lambda_{2}(\bv{D}+\bv{W}),\lambda_{2}(\bv{D}+\bv{W})-\lambda_{3}(\bv{D}+\bv{W})\right] &\ge \frac{2(1-\epsilon)}{n^2(p+q)} \cdot \min \left  (q n, \frac{n(p-q)}{2} \right)\\
&\ge \frac{\rho}{2n}
\end{align*}
where in the final bound we assume $1-\epsilon \ge 1/2$ which is without loss of generality, since we can always scale $\epsilon$ down by  a $1/2$ factor.

We similarly use the perturbation bound  of \eqref{camPertrub}, the scale bound of \eqref{eq:scale} and our eigenvalue calculations for $\bv{BB}^T$ to argue that that $\lambda_1(\bv{D}+\bv{W}) \le \frac{16}{n}$ and that $\lambda_2(\bv{I} - 1/2\bv{D} + 1/2\bv{W}) \le 1-\frac{q}{2n(p+q)}$ and so $\log(\lambda_2^{-1}(\bv{I} - 1/2\bv{D} + 1/2\bv{W}) \ge \frac{q}{2n(p+q)} \ge \frac{\rho}{2 n}$.		
													
We can thus apply \autoref{cor:ojaAsync2} with $k = 2$, $\Lambda = \frac{32}{n}$, $\bgap = \frac{\rho}{2 n}$, and $\gamma_{mix} = \frac{\rho}{2 n}$. With these parameters
 we can set,  for sufficiently small $c_1$ and large $c_2,c_3$,
\begin{align*}
\hspace{-.5em}\eta =\frac{c_1 \epsilon^2 \delta^2 \cdot \rho}{\log^3 \left (\frac{n}{\epsilon \delta \rho} \right )},\hspace{.5em}T = \frac{c_2 \cdot n \cdot \left (\log^3 \left (\frac{n}{\epsilon \delta \rho} \right ) + \frac{\log \left (\frac{n}{\epsilon \delta \rho} \right )}{\epsilon}\right)}{\epsilon^2 \delta^2 \rho^2},\hspace{.5em}\text{ and }\hspace{.2em}T' = \frac{c_3\cdot n \cdot \left (\log \left (\frac{n}{\epsilon \delta \rho} \right ) + \frac{1}{\epsilon} \right )}{\rho^2}.
\end{align*}
where to bound $T'$ we use that $\frac{\lambda_1(\bv{D}+\bv{W})}{\bgap} \le \frac{32}{\rho}$.
Letting $\bv{\hat V} \in \R^{n \times k}$ be given by $(\bv{\hat V})_{u,j} = {\hat v}_u^{(j)}$ where ${\hat v}_u^{(j)}$ are the outputs of $\texttt{AsynchOja}(T,T',\eta)$ and letting $\bv{\hat v}_2$ be the second column of $\bv{\hat V}$, with these parameters, \autoref{cor:ojaAsync2} gives  that with probability $1-\delta$,
 \begin{align*}
 \left |\bv{\hat v}_2^T \bv{\tilde  v}_{2}\right| \ge 1-\epsilon\hspace{1em}\text{ and }\hspace{1em}\norm{\bv{\hat v}_2}_2 \le 1+\epsilon
 \end{align*}
 where $\bv{\tilde  v}_2$ is the second eigenvector of $\bv{D}+\bv{W}$. By \eqref{eigenConversion} we thus have $\left  |\bv{\hat v}_2^T \bv{v}_2 \right | \ge 1-O(\epsilon)$.
 
 Applying \autoref{approx_eig_starting_point} after adjusting $\epsilon$ by  a constant  factor then gives the theorem. Additionally, in expectation, each node is involved in $L = \Theta \left ( \frac{T+T'}{n}\right  )$ interactions. This bound also holds for all nodes with probability $1-\delta$ by a Chernoff bound, since $L = \Omega(\log(n/\delta))$. We can union bound over the various events required for the theorem to hold, all occurring with probability $\ge 1-\delta$, which gives the final theorem after adjusting $\delta$ by a constant factor. 
 \end{proof}

\section{Analysis of  \algoname{Cleanup Phase} in \modelpq}\label{sec:cleanpq}

As we will argue in \autoref{sec:cleang}, if $q \leq p/2$, then the global number of rounds simply becomes $O(n \log^2 n)$.

\begin{lemma}\label{lem:onephasepq}
	Consider \modelpq.  After $O\left(\frac{72 n \ln n  }{ (\sqrt{p'}-\sqrt{q'})^2  }\right)$ global rounds Algorithm \algoname{Cleanup Phase} with parameters $k=1$ 
	and $r=\frac{72 n \ln n  }{ (\sqrt{p'}-\sqrt{q'})^2  }$, where $p'=\frac{(1-\epsilon)p}{p+q}>0$ and $q'=\frac{q+\epsilon p}{p+q}$, $p'>q'$
	we have that all nodes are correctly labeled.
\end{lemma}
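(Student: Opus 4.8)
The plan is to show that a single cleanup phase ($k=1$) already succeeds: each node collects a large number of independent samples of its neighbours' current labels, its majority vote equals its correct label except with probability $n^{-\Omega(1)}$, and a union bound over the $n$ nodes finishes. Throughout one phase no node changes its label, so the labelling stays frozen at whatever \ripref{alg:ojaGossip} produced, and --- after fixing the permutation of cluster names so that $V_1\leftrightarrow +1$ and $V_2\leftrightarrow -1$ --- at most an $\epsilon$ fraction of nodes is mislabelled.

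\textbf{The law of one sample.} Fix $u\in V_1$; the case $u\in V_2$ is symmetric. Conditioned on $u$ being one of the two nodes chosen in a given global round, the partner $v$ equals a fixed $w\neq u$ with probability proportional to $w(u,w)$ --- proportional to $p$ if $w\in V_1$ and to $q$ if $w\in V_2$ --- and the normalising denominator is $(\tfrac n2-1)p+\tfrac n2 q$. Because almost all this mass sits on $V_1$ and only a small fraction of $V_1$ is mislabelled, bounding the number of correctly-labelled $V_1$-nodes from below and the denominator from above gives
\[
\Pr[\hat\chi_v=+1\mid u\text{ chosen}]\ \ge\ p'=\tfrac{(1-\epsilon)p}{p+q},\qquad \Pr[\hat\chi_v=-1\mid u\text{ chosen}]\ \le\ q'=\tfrac{q+\epsilon p}{p+q}
\]
(up to an $O(1/n)$ additive error from the ``$-1$'' in $\tfrac n2-1$, which is swallowed by the final constants). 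This is exactly where $p',q'$ in the statement come from; note $p'+q'=1$.

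\textbf{Majority vote and sample count.} Since the scheduler is memoryless and labels are frozen within a phase, the samples recorded by $u$ are i.i.d., each equal to $u$'s correct label with probability $\ge p'$. If $u$ records $m$ samples, then by the optimised exponential-moment (Chernoff) bound for the binomial --- whose optimum value is the Bhattacharyya coefficient $2\sqrt{p'q'}$ --- and using $p'+q'=1$ so that $1-2\sqrt{p'q'}=(\sqrt{p'}-\sqrt{q'})^2$,
\[
\Pr[\text{majority of the $m$ samples is wrong}]\ \le\ \bigl(2\sqrt{p'q'}\bigr)^m\ =\ \bigl(1-(\sqrt{p'}-\sqrt{q'})^2\bigr)^m\ \le\ e^{-m(\sqrt{p'}-\sqrt{q'})^2};
\]
the tie-breaking of \ripref{alg:clean} is harmless since for $u\in V_2$ the same bound holds with $2\sqrt{q'(1-q')}\le 2\sqrt{q'p'}$. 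In the \modelpq\ every node is chosen in a round with probability exactly $\bv D_{u,u}=2/n$, so over $r=\tfrac{72n\ln n}{(\sqrt{p'}-\sqrt{q'})^2}$ rounds the number of samples of each node is $\mathrm{Bin}(r,2/n)$ with mean $\tfrac{144\ln n}{(\sqrt{p'}-\sqrt{q'})^2}=\Omega(\ln n)$; a Chernoff bound plus a union bound show every node records at least $\tfrac{72\ln n}{(\sqrt{p'}-\sqrt{q'})^2}$ samples with high probability, far more than the $\Theta(\ln n/(\sqrt{p'}-\sqrt{q'})^2)$ needed to drive the last display below $n^{-2}$. A final union bound over the $n$ nodes then shows that after these $O\!\bigl(\tfrac{n\ln n}{(\sqrt{p'}-\sqrt{q'})^2}\bigr)$ global rounds all nodes hold their correct label.

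\textbf{Main obstacle.} The two points requiring care are: (i) carrying out the one-sample estimate so that the constant lands exactly on $p'=(1-\epsilon)p/(p+q)$, which needs the $\le\epsilon$-mislabelled budget accounted against the true normalisation $(\tfrac n2-1)p+\tfrac n2 q$ --- a sloppy bound loses a factor and yields only $\approx(1-2\epsilon)p/(p+q)$; and (ii) in the concentration step, invoking the sharp moment-generating-function bound whose optimum is $2\sqrt{p'q'}$ rather than the crude sub-Gaussian bound $e^{-2m(p'-1/2)^2}$, since only the former produces the rate $1/(\sqrt{p'}-\sqrt{q'})^2$ claimed by the lemma (and hence by \autoref{thm:cleanuppq}).
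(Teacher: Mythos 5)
Your proposal is correct, but the concentration step runs along a different track than the paper's. The paper never conditions on the number of samples a node collects: it works directly over the $r$ global rounds, letting $X_u$ and $X'_u$ be the counts of interactions of $u$ with agreeing/disagreeing labels (Bernoulli per round with probabilities $p_1\approx 2p'/n$ and $p_2\approx 2q'/n$), applies additive-type Chernoff bounds with deviations $\sqrt{6\ln n\cdot\E[X_u]}$ and $\sqrt{6\ln n\cdot\E[X'_u]}$, and extracts the rate from the algebra $rp_1-\sqrt{6\ln n\,rp_1}>rp_2+\sqrt{6\ln n\,rp_2}$ iff $r\ge 6\ln n/(\sqrt{p_1}-\sqrt{p_2})^2$; because the multiplicative bound needs $\E[X'_u]\ge 6\ln n$, the paper must split into two cases ($q'\ge p'/12$ and $q'\le p'/12$, the latter handled with the $R\ge 6\E[X]$ tail). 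You instead first concentrate the local sample count $m\sim\mathrm{Bin}(r,2/n)$ and then hit the wrong-majority event with the optimized MGF bound $(2\sqrt{p'q'})^m=(1-(\sqrt{p'}-\sqrt{q'})^2)^m$, which yields the $(\sqrt{p'}-\sqrt{q'})^{-2}$ rate in one shot, needs no case distinction for small $q'$, and makes transparent where that rate comes from; the paper's route avoids reasoning about conditional i.i.d.-ness of the sampled labels given the participation rounds (a point you do justify correctly). One caveat: your claim in the ``obstacle'' paragraph that careful accounting lands exactly on $p'=(1-\epsilon)p/(p+q)$ holds only if the $\epsilon$-mislabelled budget is read per cluster; under a global budget of $\epsilon n$ bad nodes the worst case gives roughly $(1-2\epsilon)p/(p+q)$. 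This is not a gap relative to the paper --- its own proof uses the same $(1-\epsilon)\frac n2 p$ numerator and acknowledges the per-cluster reading --- and in either reading the constant $72$ absorbs the slack, but you should not present the constant as exact.
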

\begin{proof}
	 Fix an arbitrary node $u$. 
Let $X_u$ ($X'_u)$, respectively) denote the number of times an edge $(u,v)$ was chosen where $v$'s current label is the same as $u$'s ground-truth label meaning that both are in $V_1$ or $V_2$ (opposite, respectively). Such an edge is scheduled at a given time step with probability at least $p_1 \eqdef \frac{(1-\epsilon)n/2p}{p\left(\frac{n}{2}\right)^2
+q\left(\frac{n}{2}\right)^2
} =\frac{2(1-\epsilon)p}{(p+q)n}$ (at most $p_2 \eqdef \frac{2(q+\epsilon p)}{(p+q)n}$, respectively).

We show that w.h.p. $X_u > X'_u$; taking Union bound over the complementary events for all nodes  yields the claim.

We distinguish between two cases. First assume $q' \geq p'/12$.
Let $\delta=\sqrt{6 \ln n/\Ex{X_u}} , \delta'=\sqrt{6 \ln n/\Ex{X'_u}}$ and observe that $\delta,\delta'\leq 1$.
By Chernoff bounds with, we get,

\begin{align}
	\Pro{X_u \geq (1-\delta)\Ex{X_u} } + \Pro{X'_u \leq (1+\delta')\Ex{X'_u} }   
	\leq 2 e^{-2 \ln n}.
\end{align}

Conditioning on this, we have 
\begin{align}
X_u -X_u'\geq (1-\delta)\Ex{X_u}- (1+\delta)\Ex{X'_u}  = r p_1-\sqrt{6 \ln nrp_1} - 
\left( r p_2 +\sqrt{6\ln n rp_2} \right) > 0,
\end{align}
for $r=\frac{72 n \ln n  }{ (\sqrt{p'}-\sqrt{q'})^2  } \geq \frac{6\ln n}{(\sqrt{p_1}-\sqrt{p_2})^2}.$
Similarly, if $q' \leq p'/12$, then, by \autoref{chernoff},
\begin{align}
	\Pro{X_u < X'_u} &\leq \Pro{X_u < \Ex{X_u}/2 } + \Pro{X'_u \geq 6\max\{ \ln n,\Ex{X'_u} \}} \leq   
	 2 e^{-2 \ln n}.
\end{align}

Thus, in both cases we get
\begin{align}
	\Pro{X_u \leq X'_u} &	\leq 2 e^{-2 \ln n}.
\end{align}
This finishes the proof.
\end{proof}

\begin{proof}[Proof of \autoref{thm:cleanuppq}]
The lemma is a direct consequence of 
\autoref{lem:onephasepq}.
\end{proof}

\section{Analysis of  \algoname{Cleanup Phase} in \modelg}\label{sec:cleang}

Let $E(u,S)$ denote the number of edges between $u\in V$ and $S \subseteq V$.
Recall that \[\Delta = \frac{p}{2}-\frac{q}{2}-\sqrt{12p\ln n/n}-\sqrt{12q\ln n/n}. \] 
Throughout this section we will assume that 
\begin{align}\label{eq:assumptions}
 \Delta \geq 2^{15} \ln n/n. 
\end{align}

 Let $n^* = \frac{n\Delta}{240p}$.
Fix an arbitrary set $S \subseteq V$ with $|S| \leq  n^*.$
For $u\in V$ define \[ Y_u^S 
=\begin{cases}
	1 & \text{ $E(u,S) \geq \Delta n /12$} \\
	0 & \text{ otherwise}
\end{cases}.
\]
%
%
%
%
Let
 \[ Y^S 
=\begin{cases}
	1 & \text{ $\sum_{u \in V} Y_u^S \geq |S|/3$} \\
	0 & \text{ otherwise}
\end{cases}.
\]
\begin{lemma}
Assume that $|S| \leq n^*$.
We have that
	\[ \Pro{ Y_S	 =1 } \leq e^{- \Delta |S| n /2^{11}} .\]
\end{lemma}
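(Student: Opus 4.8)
The plan is to bound $\Pro{Y_S = 1}$ by first controlling each indicator $Y_u^S$ individually and then applying a union-bound-free large-deviation argument over the $\binom{n}{|S|/3}$ possible ``bad witness sets'' of size $|S|/3$. First I would fix $S$ with $|S| \le n^*$ and a node $u$, and bound $\Pr[Y_u^S = 1] = \Pr[E(u,S) \ge \Delta n/12]$. Since the edges from $u$ to $S$ are independent with probability at most $p$, $E(u,S)$ is stochastically dominated by $\mathrm{Bin}(|S|, p)$, so $\E[E(u,S)] \le |S| p \le n^* p = \frac{n\Delta}{240}$. The threshold $\Delta n/12$ is a factor $20$ above this mean, so a Chernoff bound for the upper tail (in the form $\Pr[X \ge t] \le (e\mu/t)^t$ valid when $t \ge e\mu$) gives $\Pr[Y_u^S=1] \le \left(\frac{e \cdot |S| p}{\Delta n/12}\right)^{\Delta n/12} \le \left(\frac{e}{20}\right)^{\Delta n /12} =: \beta$, a quantity that is small like $e^{-\Theta(\Delta n)}$.

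Next I would observe that the events $\{Y_u^S = 1\}$ for distinct $u$ are mutually independent, since $Y_u^S$ depends only on the edges incident to $u$ (and $S$ is fixed). Therefore $\sum_{u \in V} Y_u^S$ is stochastically dominated by $\mathrm{Bin}(n, \beta)$, and $\Pr[Y^S = 1] = \Pr\!\left[\sum_u Y_u^S \ge |S|/3\right] \le \binom{n}{|S|/3}\beta^{|S|/3}$, bounding the probability that at least $|S|/3$ of these independent rare events occur. Using $\binom{n}{|S|/3} \le \left(\frac{en}{|S|/3}\right)^{|S|/3} = \left(\frac{3en}{|S|}\right)^{|S|/3}$, the bound becomes
\begin{align*}
\Pr[Y^S = 1] \le \left(\frac{3en}{|S|} \cdot \beta\right)^{|S|/3} = \left(\frac{3en}{|S|} \cdot \left(\frac{e}{20}\right)^{\Delta n/12}\right)^{|S|/3}.
\end{align*}
Now I would use the assumption $\Delta \ge 2^{15}\ln n / n$ from \eqref{eq:assumptions} to dominate the logarithmic/polynomial factors: $\left(\frac{e}{20}\right)^{\Delta n/12} \le \left(\frac{e}{20}\right)^{2^{15}\ln n/12} \le n^{-\Omega(1)}$ with a huge constant, which easily absorbs the factor $\frac{3en}{|S|} \le 3en$. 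More carefully, I want the final exponent per unit of $|S|$ to come out to roughly $-\Delta n / 2^{11}$ inside the $(\cdot)^{|S|}$: taking logarithms, $\log \Pr[Y^S=1] \le \frac{|S|}{3}\left(\log(3en) + \frac{\Delta n}{12}\log\frac{e}{20}\right)$, and since $\log(20/e) > 2$ we get $\frac{\Delta n}{12}\log\frac{e}{20} \le -\frac{\Delta n}{6}$, while $\log(3en) \le 2\ln n \le \frac{\Delta n}{2^{14}}$ by \eqref{eq:assumptions}; combining and being generous with constants yields $\log \Pr[Y^S = 1] \le -\frac{\Delta |S| n}{2^{11}}$, i.e. $\Pr[Y^S = 1] \le e^{-\Delta|S|n/2^{11}}$, as claimed.

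The main obstacle — though it is bookkeeping rather than conceptual — will be carefully tracking the numerical constants so that the factor-of-$20$ gap between the threshold $\Delta n/12$ and the mean $\le \Delta n /240$ in the per-node Chernoff bound, combined with the entropy term $\log\binom{n}{|S|/3}$, still nets out to the clean constant $2^{11}$ in the exponent; this is exactly where assumption \eqref{eq:assumptions} ($\Delta \ge 2^{15}\ln n/n$) is used, to guarantee the $\log n$-sized binomial-coefficient term is swamped by the $\Delta n$-sized Chernoff savings. One should also double-check the independence claim: $Y_u^S$ depends on edges $\{(u,v): v \in S\}$, and for $u, u' \notin S$ these edge sets are disjoint, while for $u \in S$ the set $\{(u,v):v\in S, v\neq u\}$ is still disjoint from the corresponding set for any other $u' \in S$ — so the full collection $\{Y_u^S\}_{u \in V}$ is indeed independent, and the domination by $\mathrm{Bin}(n,\beta)$ is legitimate.
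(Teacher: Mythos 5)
There is a genuine gap, and it is exactly the point the paper flags and handles separately. Your argument hinges on the claim that the full collection $\{Y_u^S\}_{u \in V}$ is mutually independent, so that $\sum_u Y_u^S$ is dominated by $\mathrm{Bin}(n,\beta)$ and the bound $\binom{n}{|S|/3}\beta^{|S|/3}$ applies. This is true for $u,u' \notin S$ (disjoint edge sets), but false for $u,u' \in S$: the indicator $Y_u^S$ depends on the edges $\{(u,v): v \in S\}$, which for $u \in S$ includes the edge $(u,u')$ for every other $u' \in S$, and that same (unordered) edge also enters $Y_{u'}^S$. So the indicators of nodes inside $S$ share edges and are positively correlated — a dense patch of edges inside $S$ can fire many of them simultaneously — and your check that the relevant edge sets "are still disjoint" for $u,u' \in S$ is incorrect. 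Since the event $\sum_{u\in V} Y_u^S \ge |S|/3$ can in principle be realized entirely by nodes of $S$ (there are $|S| \ge |S|/3$ of them), you cannot discard these terms, and the binomial-domination step collapses for precisely the part of the sum that carries the correlation.

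The paper's proof is otherwise close to yours for the uncorrelated part: it bounds $\Pro{Y_u^S=1} \le e^{-\Delta n/72}$ by Chernoff for $u \notin S$, then controls $\Pro{\sum_{u\in V\setminus S} Y_u^S \ge |S|/6}$ with a binomial tail bound whose entropy term $\ln(6m)$ is absorbed using $\Delta \ge 2^{15}\ln n/n$ — the same bookkeeping you describe. The missing ingredient is the separate treatment of $u \in S$: the paper uses a counting argument, noting that $\sum_{u\in S} Y_u^S \ge |S|/6$ forces $E(S,S) \ge \frac{\Delta n}{12}\cdot\frac{|S|}{6}\cdot\frac{1}{2}$ (each internal edge is counted at most twice), and $E(S,S)$ \emph{is} a sum of independent Bernoullis, so a single Chernoff bound on it yields $\exp(-|S|\Delta n/2^{10})$. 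Splitting the threshold $|S|/3$ into two halves of $|S|/6$ (one for $V\setminus S$, one for $S$) and adding the two failure probabilities gives the stated bound. If you add this split and the edge-counting step for the inside-$S$ contribution, your argument goes through; as written, the domination by $\mathrm{Bin}(n,\beta)$ is not justified.
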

\begin{proof}
Observe that the $\{ Y_u^S \colon u\not \in S\}$ are independent $0/1$-variables. 
We remark that $\{ Y_u^S \colon u\in S\}$, are correlated and we will need a different approach to bound $\sum_{u\in S} Y_u^S$ which we save for later.
Applying Chernoff with $\delta = \frac{n\Delta}{p |S|12} -1$, $\Ex{E(u,S)}=p|S|$. Note that $\frac{n\Delta}{p |S|12} -1 \geq 
\frac{n\Delta}{p |S|24} \geq 1 $, due to the bound on $|S|$. We get
\begin{align}\label{eq:ranoutofnames} p' &\eqdef  \Pro{ Y^S_u	 =1 } = \Pro{E(u,S) \geq n \Delta/12} =\Pro{E(u,S) \geq (1+\delta)\Ex{E(u,S)}  } \\
&\leq \exp\left(- \frac{p|S|}{3}\left(\frac{n\Delta}{p |S|12} -1\right)  \right)
\leq e^{-\Delta n/72 } . \end{align}

where the penultimate inequality stems from Chernoff bounds.
	By \autoref{thm:bounds-binomial-distribution}, with parameters $\alpha = \frac{|S|}{6m}, m=n-|S|, p=p'$,
	we get that
	\begin{align*}
	 \Pro{\sum_{u \in V\setminus S} Y_u^S \geq |S|/6}  &\leq 
	\left(\left(\frac{p'}{\frac{|S|}{6m}} \right)^{\frac{|S|}{6m}} \left(\frac{1-p'}{1-\frac{|S|}{6m}} \right)^{1-\frac{|S|}{6m}}\right)^{m}\\
	&\leq 
	\left(\left(\frac{p'}{\frac{|S|}{6m}} \right)^{\frac{|S|}{6m}} \left(\frac{1}{1-\frac{|S|}{6m}} \right)\right)^m 
	\leq 
	\left(\left( 6mp' \right)^{\frac{|S|}{6m}} \right)^m  \left(\frac{1}{1-\frac{|S|}{6m}} \right)^m
\\&	\leq \left( 6m \right)^{\frac{|S|}{6}} \left( p' \right)^{\frac{|S|}{6}} \left(\frac{1}{1-\frac{|S|}{6n}} \right)^m
\\
		&\leq 
		\exp \left(  \frac{|S|}{6}\ln(6m) - \frac{|S|}{6} \Delta n/72 +m \ln \left( \frac{1}{1-\frac{|S|}{6m}} \right)\right)\\
				&\leq 
		\exp \left(  \frac{|S|}{6}\ln(6m) - \frac{|S|}{6} \Delta n/72 +m \frac{e|S|}{6m}\right)\\
		&
		\leq 
		\exp \left( -  \frac{|S| \Delta n}{2^{10}} \right),
	\end{align*}
	where the penultimate inequality comes from $\ln(1/(1-x))\leq ex$ for  $x \in (0,0.8]$ and the last inequality comes from \eqref{eq:assumptions}.

	We now turn to bounding $\{ Y_u^S \colon u\in S\}$. 
	In order for $\sum_{u \in S} Y_u^S \geq |S|/6 $ a counting argument shows that the number of required edges with both endpoints in $S$ needs to be at least $\frac{\Delta n}{12}\frac{|S|}{6} \frac{1}{2}$.
	we have $\Ex{E(S,S)}=p|S|^2/2$. 
	In
	Applying Chernoff bounds yields $\delta=\frac{n\Delta}{72p|S|} -1 \geq  \frac{n\Delta}{144p|S|} \geq 1$
	\begin{align*}
		\Pro{\sum_{u \in S} Y_u^S \geq |S|/6}   &\leq \Pro{ E(S,S) \geq \frac{\Delta n}{12}\frac{|S|}{6}} = \Pro{ E(S,S) \geq (1+\delta)\Ex{E(S,S)}}
		\\&\leq 
		\exp \left( -  \frac{|S| \Delta n}{2^{10}} \right).
	\end{align*}

	We have
	\begin{align*}
			\Pro{Y^S =1} &\leq \Pro{\sum_{u \in V\setminus S} Y_u^S \geq |S|/6}  + \Pro{\sum_{u \in S} Y_u^S \geq |S|/6}  \leq 
		2\exp \left( -  \frac{|S| \Delta n}{2^{10}} \right)
	\end{align*}
	
\end{proof}

We say a graph is \emph{smooth} if for all subsets $S$ of size $|S| \in [ \nzero,n^*]$ we have
$Y^S=0$.
\begin{lemma}\label{lem:smooth}
	Let $G_{n,p}$ be an Erd\H{o}s-R\'eny graph with parameters $n,p$ satisfying \eqref{eq:assumptions}.
Then,
  $G_{n,p}$ is smooth w.h.p..
  
\end{lemma}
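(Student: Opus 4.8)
The plan is a straightforward union bound over all small subsets, using the single-set tail bound established in the preceding lemma. Fix a size $s$ with $\nzero \le s \le n^*$. There are $\binom{n}{s} \le n^{s}$ subsets of $V$ of size $s$, and for each such $S$ the preceding lemma gives $\Pro{Y^S = 1} \le e^{-\Delta s n / 2^{11}}$. Hence, by a union bound over these $\binom{n}{s}$ sets,
\[
\Pro{\exists\, S \subseteq V,\ |S| = s,\ Y^S = 1}\ \le\ \binom{n}{s}\, e^{-\Delta s n / 2^{11}}\ \le\ \exp\!\Big(s \ln n - \tfrac{\Delta s n}{2^{11}}\Big).
\]

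Next I would plug in assumption \eqref{eq:assumptions}: since $\Delta \ge 2^{15}\ln n / n$, we have $\Delta n / 2^{11} \ge 2^{4}\ln n = 16\ln n$, so the exponent above is at most $s\ln n - 16 s \ln n = -15 s \ln n$, giving a bound of $n^{-15 s}$ for each size $s$. Summing the resulting geometric series over all sizes $s = \nzero, \dots, n^*$ yields
\[
\Pro{G_{n,p}\text{ is not smooth}}\ \le\ \sum_{s \ge 1} n^{-15 s}\ \le\ 2 n^{-15}\ =\ o(1),
\]
so $G_{n,p}$ is smooth with high probability, as claimed.

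The argument has essentially no obstacle: the only thing to check is that the entropy term $\ln\binom{n}{s} \le s\ln n$ produced by the union bound is dominated by the exponent $\Delta s n / 2^{11}$ of the single-set bound, uniformly in $s \in [\nzero, n^*]$, and assumption \eqref{eq:assumptions} is calibrated precisely so that this holds with room to spare (any constant strictly larger than $2^{11}$ in place of $2^{15}$ would already suffice for an $o(1)$ failure probability). If one wanted a failure probability of the form $n^{-c}$ for a prescribed constant $c$, it would suffice to take the constant in \eqref{eq:assumptions} correspondingly larger.
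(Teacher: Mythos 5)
Your proof is correct and follows essentially the same route as the paper: a union bound over all subset sizes $s \in [\nzero, n^*]$ and all $\binom{n}{s}$ sets of each size, combining the single-set bound $e^{-\Delta s n/2^{11}}$ with the entropy term (you use $\binom{n}{s}\le n^s$, the paper uses $(en/s)^s$, an immaterial difference), and then invoking $\Delta \ge 2^{15}\ln n/n$ to make the exponent dominate. The paper sums per-size terms of at most $n^{-3}$ to get failure probability $\le n^{-2}$, whereas you get the slightly stronger $O(n^{-15})$; both are w.h.p., so there is nothing further to add.
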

\begin{proof}
%

Applying Union bound,
\begin{align}
\Pro{ \sum_{S, |S|\in [\nzero, n^*]} Y_S	 =0 }
&\leq \sum_{S, |S|\in [\nzero ,n^*]}  \Pro{ Y_S	 =0 }
\leq \sum_{i=\nzero}^{n^*} \sum_{S, |S|=i}  \Pro{ Y_S	 =0}\\
&\leq \sum_{i=\nzero}^{n^*}  {n \choose i}\Pro{ Y_S	 =0~|~|S|=i  }
\leq \sum_{i=\nzero}^{n^*}  \left(\frac{e n}{ i} \right)^i  e^{- \Delta n i/2^{11}} \\
&\leq \sum_{i=\nzero}^{n^*}   \exp\left( i \ln n -\Delta n i/2^{11}\right)
 \leq \sum_{i=\nzero}^{n^*} \frac{1}{n^3}
\leq \frac{1}{n^2}.
\end{align}

\end{proof}

\begin{lemma}\label{lem:exepcteddrop}
	Consider \modelg. Assume  the graph is smooth. 
	Let $S_t$ be the set of nodes that are incorrectly labeled after phase $t$.
	If $|S_t|\in [ \nzero,n^*]$, then after one additional phase of $O\left(\frac{ n p\log n  }{ (\sqrt{p''}-\sqrt{q''})^2  }\right)$ global rounds Algorithm \algoname{Cleanup Phase} with parameter	 $r=\frac{72 p n \log n  }{ (\sqrt{p''}-\sqrt{q''})^2  }$, where
	\begin{enumerate}
		\item $p''= \frac{p}{2}-\sqrt{\frac{6 p\log n}{n}}  -\frac{\Delta}{12}$ and
		\item  $q''=\frac{q}{2} + \sqrt{\frac{6 q\log n}{n}}  +\frac{\Delta}{12}$ and 
		\item 	 $\Delta=\Omega(\log n/n)$,
	\end{enumerate}
	conditioning  on $ \mathcal{F}_t$ we get that  w.h.p. $|S_{t+1}|  \leq (2/3) |S_t|$,
	where $\mathcal{F}_t$ denotes the filtration up to time $t$.
	\end{lemma}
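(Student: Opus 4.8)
Conditioning on $\mathcal{F}_t$ fixes the sampled graph $G$ and the corrupted set $S:=S_t$ with $|S|\in[\nzero,n^*]$, so the only remaining randomness is that of the next phase (which picks i.i.d.\ uniform edges of $G$). The plan is to show that one phase of $r$ rounds relabels \emph{every} vertex with few neighbours in $S$ to its correct community, so that $S_{t+1}$ is contained in the set of ``dangerous'' vertices, whose size smoothness controls. Besides smoothness, I will use two properties of $G$ that hold with probability $1-o(1)$ over the draw of the graph (and so may be assumed together with smoothness, being absorbed into the final \whp): writing $a_u,b_u$ for the numbers of neighbours of $u$ inside and outside its own community, Chernoff plus a union bound over the $n$ vertices give $a_u\ge\tfrac n2\big(p-\sqrt{12p\ln n/n}\big)$ and $b_u\le\tfrac n2\big(q+\sqrt{12q\ln n/n}\big)$ for all $u$, and $|E(G)|\le n^2p$. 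Since $\Delta=\Omega(\ln n/n)$ forces $p=\Omega(\ln n/n)$, these expectations are $\Omega(\ln n)$ and concentrate, and \eqref{eq:assumptions} is used to dominate lower‑order terms.

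First I would set up the dichotomy: call $u$ \emph{dangerous} if $E(u,S)\ge\Delta n/12$, i.e.\ $Y^S_u=1$. Since $|S|\le n^*$ and $G$ is smooth, $Y^S=0$, so there are strictly fewer than $|S|/3$ dangerous vertices. It therefore suffices to prove that, with probability $1-o(1)$ over the phase, every non‑dangerous vertex finishes the phase holding its true community label: then $S_{t+1}$ is a subset of the dangerous vertices, so $|S_{t+1}|<|S|/3\le\tfrac23|S_t|$.

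Next I would do the worst‑case neighbourhood accounting for a fixed non‑dangerous $u$, say with true label $+1$. Every neighbour of $u$ inside its own community that is not in $S$ currently carries $+1$, so the number of $+1$‑labelled neighbours of $u$ is at least $a_u-E(u,S)$; every $-1$‑labelled neighbour of $u$ is either in the opposite community (hence correctly labelled) or in $u$'s own community and in $S$ (hence mislabelled), so their number is at most $b_u+E(u,S)$. Plugging in $E(u,S)<\Delta n/12$ and the degree bounds, the $+1$‑neighbour count is at least $n\,p''$ and the $-1$‑neighbour count is at most $n\,q''$ (this is exactly what the constants in $\Delta,p'',q''$ are chosen to make work), and from $\Delta=\Omega(\ln n/n)$ one gets $p''>q''$, with the signal $a_u-b_u-2E(u,S)\ge n\Delta-\Delta n/6=\tfrac56\Delta n>0$. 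Now run the majority argument of \autoref{lem:onephasepq}: in one round the scheduler picks an edge $\{u,v\}$ with $\hat\chi_v=+1$ with probability $p_1(u)\ge np''/|E(G)|$ and with $\hat\chi_v=-1$ with probability $p_2(u)\le nq''/|E(G)|$; since both are normalised by the \emph{same} $|E(G)|\le n^2p$, $(\sqrt{p_1(u)}-\sqrt{p_2(u)})^2\ge n(\sqrt{p''}-\sqrt{q''})^2/|E(G)|\ge(\sqrt{p''}-\sqrt{q''})^2/(np)$, and $\E[A_u]=rp_1(u)\ge 72\ln n$. Over $r$ rounds the counts $A_u,B_u$ of $+1$‑ and $-1$‑samples are $\mathrm{Bin}(r,p_1(u))$ and $\mathrm{Bin}(r,p_2(u))$; since $r=\tfrac{72pn\ln n}{(\sqrt{p''}-\sqrt{q''})^2}$ exceeds $\tfrac{c\ln n}{(\sqrt{p_1(u)}-\sqrt{p_2(u)})^2}$ by a large constant factor, a Chernoff bound gives $A_u>B_u$ with probability $\ge 1-n^{-2}$ — splitting, exactly as in \autoref{lem:onephasepq}, into the cases $q''\ge p''/12$ and $q''<p''/12$ to control the upper tail of $B_u$ when $p_2(u)$ is small (using $p_2(u)/p_1(u)\le q''/p''$). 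Hence $u$'s majority equals $+1$. A union bound over the at most $n$ non‑dangerous vertices completes the argument.

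The one delicate point — and the reason for the specific definitions of $\Delta,p'',q''$ and the constant $72$ — is keeping enough slack so that the degree fluctuations $\sqrt{12p\ln n/n},\sqrt{12q\ln n/n}$, the corruption budget $2E(u,S)<\Delta n/6$, and the $|E(G)|$‑normalisation all fit inside a single positive signal of order $\Delta n$; this is exactly where \eqref{eq:assumptions}, namely $\Delta n=\Omega(\ln n)$, is invoked. Everything else is routine Chernoff bookkeeping and union bounds, mirroring the $\modelpq$ analysis with the extra layer of graph‑degree concentration.
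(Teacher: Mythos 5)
Your proposal is correct and follows essentially the same route as the paper's proof: use smoothness to cap the number of ``dangerous'' vertices with $E(u,S_t)\ge \Delta n/12$ at $|S_t|/3$, then show via degree concentration, the bound on $|E|$, and the two-sided Chernoff/majority argument of \autoref{lem:onephasepq} (with per-round probabilities $p_1,p_2$ normalized by $|E|$) that every non-dangerous vertex labels itself correctly after $r$ rounds, and union bound over vertices. The only differences are cosmetic constant-level bookkeeping (e.g.\ $|E|\le n^2p$ versus the paper's $n^2p/8\le|E|\le 2n^2p$), so nothing further is needed.
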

\begin{proof}
The proof idea is as follows. 
Since the graph is smooth, there are at most $|S_t|/3$
nodes with a large number of edges to $S_t$.
In the remainder of the proof we will show that w.h.p. all other nodes will set their label correctly (including most of the nodes of $S_t$ itself).
This implies, by taking Union bound over all nodes, that w.h.p. $|S_{t+1}|\leq |S_t|/3$.

Fix an arbitrary node $u$ with $Y_u^{S}=0$. 
The number of edges $u$ has within its own cluster
is at least $np/2-\sqrt{12 n p\log n}  $ w.p. at least $1-n^2 $,
by Chernoff bounds with parameter $\delta=\sqrt{\frac{12 \log n}{n p}}$.

Again, let $X_u$ ($X'_u)$, respectively) denote the number of times an edge $(u,v)$ was chosen  where $v$'s current label is the same as $u$'s ground-truth label (opposite, respectively).
We get, by \autoref{lem:smooth} that

\[ \Pro{X_u}=  \frac{ np/2-\sqrt{12 n p\log n}  -\Delta n/12 }{|E|} .\]

We assume w.l.o.g. that $q =\Omega(\log n/n)$ otherwise, the proof follows trivially since \eqref{eq:assumptions} implies that $p =\Omega(\log n/n)$.

\[ \Pro{X'_u}=  \frac{ nq/2+\sqrt{12 n q\log n}  +\Delta n/12 }{|E|} \]

Note that $n^2 p/8 \leq |E| \leq 2 n^2 p$ w.h.p.
Applying the same argument as in \autoref{lem:onephasepq} with $p_1 = 
\Pro{X_u}$ and $p_2 = \Pro{X'_u}$,
we get that $r \geq \frac{72 pn \log n  }{ (\sqrt{p''}-\sqrt{q''})^2  }$ rounds are sufficient w.h.p.

For completeness we give the proof.
W.l.o.g. $q=\Omega(\log n/n)$.
Let $\delta=\sqrt{6 \log n/\Ex{X_u}} , \delta'=\sqrt{6 \log n/\Ex{X'_u}}$ and observe that $\delta,\delta'\leq 1$.
By Chernoff bounds, we get,

\begin{align}
	\Pro{X_u \geq (1-\delta)\Ex{X_u} } + \Pro{X'_u \leq (1+\delta')\Ex{X'_u} }   
	\leq 2 e^{-2 \log n}.
\end{align}

Conditioning on this, we have
\begin{align}
X_u -X_u'\geq (1-\delta)\Ex{X_u}- (1+\delta)\Ex{X'_u}  = r p_1-\sqrt{6 \log nrp_1} - 
\left( r p_2 +\sqrt{6\log n rp_2} \right) > 0,
\end{align}
where the last inequality holds as long as  $r \geq \frac{6\log n}{ (\sqrt{p_1}-\sqrt{p_2})^2 }$.

We have \begin{align*}
 r=\frac{72p n \log n  }{ (\sqrt{p''}-\sqrt{q''})^2  } > 
 \frac{6\log n}{\frac{n}{|E|}(\sqrt{p''}-\sqrt{q''})^2}
 \geq \frac{6\log n}{(\sqrt{p_1}-\sqrt{p_2})^2}.	
 \end{align*}

%
%
%

We remark that  if $q=p/2$, then number of rounds becomes
$r=O(n\log n)$.
First, observe that \eqref{eq:assumptions} implies that $p \geq 2^{12}\log n/n$ and
  $\Delta \leq p/2 $. Hence, $\sqrt{p''}-\sqrt{q''} \geq  \sqrt{\frac{p}{2}-p\sqrt{12/2^{12}}  -p/12 } -  \sqrt{\frac{p}{4} +\frac{p}{2}\sqrt{12/2^{12}}  +p/24 }\geq \frac{\sqrt{p}}{2^{11}}$ and hence
 $r=O(n \log n).$

This completes the proof.

\end{proof}

\begin{proof}[Proof of \autoref{thm:cleanupg}]
The proof from \autoref{lem:smooth} together \autoref{lem:exepcteddrop} and applying Union bound over at most $6\log n$ rounds.
\end{proof}

\begin{theorem}[{\cite[Equation 10]{HR90}}]\label{thm:bounds-binomial-distribution}
Let $Y=\sum_{i=1}^m Y_i$ be the sum of $m$ i.i.d.~random variables with
$\Pro{Y_i=1}=p$ and $\Pro{Y_i=0}=1-p$. We have for any
$\alpha \in (0,1)$ that
\begin{equation*}
\Pro{Y \geq \alpha \cdot m } \leq \left(\left(\frac{p}{\alpha} \right)^\alpha \left(\frac{1-p}{1-\alpha} \right)^{1-\alpha}\right)^m \enspace .
\end{equation*}
\end{theorem}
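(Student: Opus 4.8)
\medskip
\noindent\textbf{Proof plan for \autoref{thm:bounds-binomial-distribution}.} The statement is quoted verbatim from \cite[Equation~10]{HR90}, so one option is simply to cite that reference; for completeness I would give the standard argument, namely the exponential-moment (Chernoff) method. First I would fix a free parameter $t \ge 0$ and apply Markov's inequality to the nonnegative random variable $e^{tY}$:
\begin{align*}
\Pro{Y \ge \alpha m} \;=\; \Pro{e^{tY} \ge e^{t\alpha m}} \;\le\; e^{-t\alpha m}\,\Ex{e^{tY}}.
\end{align*}
Using the i.i.d.\ structure, $\Ex{e^{tY}} = \bigl(\Ex{e^{tY_1}}\bigr)^{m} = (1-p+pe^{t})^{m}$ since $\Ex{e^{tY_1}} = (1-p)\cdot 1 + p\cdot e^{t}$. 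Hence, for every $t \ge 0$,
\begin{align*}
\Pro{Y \ge \alpha m} \;\le\; \Bigl((1-p+pe^{t})\,e^{-t\alpha}\Bigr)^{m}.
\end{align*}

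The remaining work is to minimize the base $g(t) = (1-p+pe^{t})e^{-t\alpha}$ over $t \ge 0$. Writing $\ln g(t) = \ln(1-p+pe^{t}) - t\alpha$ and solving $\frac{d}{dt}\ln g(t) = \frac{pe^{t}}{1-p+pe^{t}} - \alpha = 0$ gives the critical point $e^{t^{\star}} = \frac{\alpha(1-p)}{p(1-\alpha)}$, which satisfies $t^{\star}\ge 0$ precisely when $\alpha \ge p$ --- and that is the only regime in which the bound is non-trivial, and the only one invoked in this paper (in Sections~\ref{sec:cleanpq}--\ref{sec:cleang} the quantity substituted for $p$ is exponentially small in $n$ while $\alpha$ is at least of order $1/n$; when $\alpha = p$ the right-hand side equals $1$ and the bound holds trivially). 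A one-line computation then gives $1-p+pe^{t^{\star}} = (1-p)\bigl(1+\tfrac{\alpha}{1-\alpha}\bigr) = \tfrac{1-p}{1-\alpha}$ and $e^{-\alpha t^{\star}} = \bigl(\tfrac{p(1-\alpha)}{\alpha(1-p)}\bigr)^{\alpha}$, so that
\begin{align*}
g(t^{\star}) \;=\; \frac{1-p}{1-\alpha}\left(\frac{p(1-\alpha)}{\alpha(1-p)}\right)^{\alpha} \;=\; \left(\frac{p}{\alpha}\right)^{\alpha}\left(\frac{1-p}{1-\alpha}\right)^{1-\alpha},
\end{align*}
and raising this to the $m$-th power yields the claimed inequality.

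I do not expect a genuine obstacle here: the argument is routine optimization of the moment generating function. The only points requiring a little care are the algebraic simplification in the last display and the observation that the minimizer $t^{\star}$ is nonnegative exactly when $\alpha \ge p$, which is why the bound should be read (and is applied) in that regime.
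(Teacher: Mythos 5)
Your proof is correct, and there is nothing in the paper to compare it against: the paper does not prove this statement at all, it simply imports it from \cite[Equation 10]{HR90}. Your derivation is the standard one for such bounds (Markov's inequality applied to $e^{tY}$, the product form of the moment generating function $(1-p+pe^{t})^{m}$, and optimization over $t$), and the algebra at the critical point $e^{t^{\star}} = \frac{\alpha(1-p)}{p(1-\alpha)}$ checks out. Your side remark about the regime is in fact more than a nicety: as stated, with ``any $\alpha \in (0,1)$,'' the inequality is false when $\alpha < p$ --- for instance with $p=0.9$, $\alpha = 0.1$, $m=10$ the left-hand side is essentially $1$ while the right-hand side is $9^{-8}$ --- precisely because the unconstrained minimizer $t^{\star}$ is then negative and cannot be used in the Markov step, which requires $t \ge 0$. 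So the theorem should carry the hypothesis $\alpha \ge p$ (as in the cited source), under which your argument gives a complete proof; this restriction is harmless for the paper, since in \autoref{lem:smooth} the bound is invoked with $p$ exponentially small in $n$ and $\alpha \ge \frac{1}{6m}$, so $\alpha \ge p$ holds there.
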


\begin{theorem}[{Chernoff bound \cite[Theorem 4.4 and 4.5]{MU05}}]\label{chernoff}
Let $X=\sum_i X_i$ be the sum of $0/1$ independent random variables.
Then,
\begin{enumerate}
	\item for any $\delta >0$, \[ \Pro{X\geq (1+\delta) \Ex{X}} < \left(\frac{e^\delta}{(1+\delta)^{1+\delta)}} \right)^{\Ex{X}} .\] 
    \item for any $0 < \delta \leq 1$, \[ \Pro{X\geq (1+\delta) \Ex{X}} \leq e^{-\Ex{X}\delta^2/3}. \] 
	\item for $R\geq 6 \Ex{X}$,  \[ \Pro{X \geq R} \leq 2^{-R}. \]
	\item  for $0 < \delta < 1$, \[ \Pr{X\leq (1-\delta) \Ex{X}} \leq \left(\frac{e^{-\delta}}{(1-\delta)^{1-\delta)}} \right)^{\Ex{X}}. \] 
	\item for $0 < \delta < 1$, \[ \Pro{X\leq (1-\delta) \Ex{X}} \leq e^{-\Ex{X}\delta^2/2} . \] 
\end{enumerate}
\end{theorem}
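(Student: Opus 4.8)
The plan is to derive all five bounds from the single standard moment generating function argument. Write $\mu = \Ex{X} = \sum_i \Pro{X_i = 1}$ and let $p_i = \Pro{X_i = 1}$. For a parameter $t > 0$, Markov's inequality applied to $e^{tX}$ gives $\Pro{X \ge (1+\delta)\mu} \le e^{-t(1+\delta)\mu}\,\Ex{e^{tX}}$, and independence of the $X_i$ lets me factor $\Ex{e^{tX}} = \prod_i \Ex{e^{tX_i}}$. For each Bernoulli variable $\Ex{e^{tX_i}} = 1 + p_i(e^t - 1) \le \exp\!\big(p_i(e^t - 1)\big)$ by $1 + x \le e^x$, so $\Ex{e^{tX}} \le \exp\!\big(\mu(e^t - 1)\big)$ and hence $\Pro{X \ge (1+\delta)\mu} \le \exp\!\big(\mu(e^t-1) - t(1+\delta)\mu\big)$ for every $t > 0$. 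Choosing $t = \ln(1+\delta)$, which is positive since $\delta > 0$, minimizes the exponent and yields part~1, namely $\Pro{X \ge (1+\delta)\mu} \le \big(e^\delta/(1+\delta)^{1+\delta}\big)^\mu$.

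Parts~2 and~3 then follow by elementary one-variable estimates of this bound. For part~2 it suffices to show $f(\delta) := \delta - (1+\delta)\ln(1+\delta) + \delta^2/3 \le 0$ on $(0,1]$: one has $f(0) = 0$, $f'(\delta) = \tfrac{2}{3}\delta - \ln(1+\delta)$, and $f''(\delta) = \tfrac{2}{3} - \tfrac{1}{1+\delta}$, which is negative for $\delta < 1/2$ and positive afterwards, so $f'$ decreases then increases on $[0,1]$; since $f'(0) = 0$ and $f'(1) = \tfrac{2}{3} - \ln 2 < 0$, we get $f' \le 0$ on $[0,1]$, so $f$ is nonincreasing there and $f \le f(0) = 0$. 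For part~3, set $1 + \delta = R/\mu \ge 6$; part~1 gives $\Pro{X \ge R} \le \big(e^\delta/(1+\delta)^{1+\delta}\big)^\mu \le \big(e/(1+\delta)\big)^{(1+\delta)\mu} = \big(e/(1+\delta)\big)^R$, and since $1+\delta \ge 6 > 2e$ this is at most $2^{-R}$.

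Parts~4 and~5 are the mirror image. For $t > 0$, Markov applied to $e^{-tX}$ gives $\Pro{X \le (1-\delta)\mu} \le e^{t(1-\delta)\mu}\,\Ex{e^{-tX}} \le \exp\!\big(\mu(e^{-t}-1) + t(1-\delta)\mu\big)$, using the same factorization and $\Ex{e^{-tX_i}} \le \exp\!\big(p_i(e^{-t}-1)\big)$; choosing $t = -\ln(1-\delta) > 0$ (legal since $0 < \delta < 1$) gives part~4, $\Pro{X \le (1-\delta)\mu} \le \big(e^{-\delta}/(1-\delta)^{1-\delta}\big)^\mu$. For part~5 it suffices that $g(\delta) := (1-\delta)\ln(1-\delta) + \delta - \delta^2/2 \ge 0$ on $(0,1)$; since $g(0) = 0$ and $g'(\delta) = -\ln(1-\delta) - \delta \ge 0$ (because $-\ln(1-\delta) = \delta + \delta^2/2 + \cdots \ge \delta$), $g$ is nondecreasing, hence nonnegative, which turns part~4 into part~5.

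There is essentially no real obstacle here: this is exactly the textbook argument (Theorems~4.4 and~4.5 of \cite{MU05}), and the only place requiring care is the scalar calculus for parts~2 and~5, where the monotonicity/convexity claims must be verified precisely on the stated ranges of $\delta$. If a fully self-contained reproduction is not desired, parts~2 and~5 can simply be invoked from \cite{MU05} once parts~1 and~4 are in place.
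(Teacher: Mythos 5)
Your proof is correct and is exactly the standard moment-generating-function argument: the paper does not prove this theorem at all but simply imports it from \cite{MU05}, and your derivation reproduces that textbook proof (parts 1 and 4 via Markov applied to $e^{\pm tX}$ with the optimal $t$, parts 2, 3, 5 by the scalar estimates you verify). The only cosmetic mismatch is that part 1 is stated with strict inequality, which follows from your argument by noting $1+x<e^x$ for $x>0$ (and handling the trivial case $\Ex{X}=0$ separately).
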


\end{document}